\begin{document}

\title{Object Reachability via Swaps under Strict and Weak Preferences
}


\author{Sen Huang         \and
        Mingyu Xiao 
}


\institute{Sen Huang \at
University of Electronic Science and Technology of China \\
              \email{huangsen47@gmail.com}           
           \and
            Mingyu Xiao (The corresponding author) \at
           University of Electronic Science and Technology of China\\
           \email{myxiao@gmail.com}
}

\date{Received: date / Accepted: date}

\newtheorem{prop}{Proposition}
\newtheorem{ob}{Observation}
\maketitle

\begin{abstract}
  The \textsc{Housing Market} problem is a widely studied resource allocation problem. In this problem, each agent can only receive a single object and has preferences over all objects. Starting from an initial endowment, we want to reach a certain assignment via a sequence of rational trades. We first consider whether an object is reachable for a given agent under a social network, where a trade between two agents is allowed if they are neighbors in the network and no participant has a deficit from the trade.
  Assume that the preferences of the agents are strict (no tie among objects is allowed). This problem is polynomial-time solvable in a star-network and NP-complete in a tree-network. It is left as a challenging open problem whether the problem is polynomial-time solvable when the network is a path. We answer this open problem positively by giving a polynomial-time algorithm.
  Then we show that when the preferences of the agents are weak (ties among objects are allowed), the problem becomes NP-hard when the network is a path and can be solved in polynomial time when the network is a star. Besides, we consider the computational complexity of finding different optimal assignments for the problem in the special case where the network is a path or a star.

\keywords{ Resource Allocation \and Social Choice Theory \and Pareto Efficiency \and Computational Complexity \and Coordination and Cooperation}

\end{abstract}

\section{Introduction}
Allocating indivisible objects to agents is an important problem in both computer science and economics.
A widely studied setting is that each agent can only receive one single object and each agent has preferences over objects.
The problem is called the \textsc{Assignment} problem~\citep{G1973Assignment,Wilson1977Assignment}, or the \textsc{House Allocation} problem~\citep{abdulkadirouglu1998random,Manlove2013Algorithmics}.
When each agent is initially endowed with an object and we want to reallocate objects under some conditions without any monetary transfers, the problem
is known as \textsc{Housing Market} problem~\citep{Shapley1974}.
\textsc{Housing Market} has several real-life applications such as allocation of housings~\citep{abdulkadirouglu1999house}, organ exchange~\citep{roth2004kidney}, and so on.
There are two different preference sets for agents. One is \emph{strict}, which is a full ordinal list of all objects, and the other one is
 \emph{weak}, where agents are allowed to be indifferent between objects. Both preference sets have been widely studied.
Under strict preferences, the celebrated \textit{Top Trading Cycle} rule~\citep{Shapley1974} has several important properties, e.g. strategy-proofness, and has been frequently used in mechanism design.
Some modifications of \textit{Top Trading Cycle} rule, called \textit{Top Trading Absorbing Sets} rule and \textit{Top Cycles} rule,
were introduced for weak preferences~\citep{alcalde2011exchange,jaramillo2012difference}.
More studies of \textsc{Housing Market} under the two preference sets from different aspects can be found in the literature ~\citep{jaramillo2012difference,aziz2012housing,saban2013house,ehlers2014top,sonoda2014two,ahmad2017essays}.

Some rules, such as \textit{Top Trading Cycle}, allow a single exchange to involve many agents.
However, exchanges of objects involving only two agents are basic, since they only require the minimum number of participants in the exchange and are very common in real life.

In some models, it is implicitly assumed that all agents have a tie with others such that they know each other, and can make some exchanges. However, some agents often do not know each other, and cannot exchange,
even if they can mutually get benefits. So 
\citet{DBLP:conf/ijcai/GourvesLW17} studied \textsc{Housing Market} where the agents are embedded in a social network to denote the ability to exchange objects between them.
In fact, recently it is a hot topic to study resource allocation problems over social networks and analyze the influence of networks.
\citet{Abebe2016Fair} and \citet{Bei2017Networked} introduced social networks of agents into the \textsc{Fair Division} problem of cake cutting. They defined (local) fairness concepts based on social networks and then compared them to the classic fairness notions and designed new protocols to find envy-free allocations in cake cutting.
\citet{DBLP:conf/atal/Bredereck0N18} and \citet{beynier2018local} also considered network-based \textsc{Fair Division} in allocating indivisible resources, where they assumed that objects were embedded in a network and each agent could only be assigned a connected component.

In this paper, we study \textsc{Housing Market} in a social network with simple trades between pairs of neighbors in the network. In this model, there are the same number of agents and objects and each agent is initially endowed with a single object.
Each agent has a preference list of all objects. The agents are embedded in a social network which determines their ability to exchange their objects. Two agents may swap their items under two conditions:
they are neighbors in the social network, and they find it mutually profitable (or no one will become worse under weak preferences).
Under this model, many problems have been studied, i.e, \textsc{Object Reachability} (OR), \textsc{Assignment Reachability} (AR) and \textsc{Pareto Efficiency} (PE), see the references given by~\citet{DBLP:conf/ijcai/GourvesLW17}.
\textsc{Object Reachability} is to determine whether an object is reachable for a given agent from the initial endowment via swaps.
\textsc{Assignment Reachability} is to determine whether a certain assignment of objects to all agents is reachable.
\textsc{Pareto Efficiency} is to find a Pareto optimal assignment within all the reachable assignments.

For \textsc{Assignment Reachability}, it is known that the problem can be solved in polynomial time in trees~\citep{DBLP:conf/ijcai/GourvesLW17} and cycles~\citep{DBLP:journals/corr/abs-2005-02218} and becomes NP-hard in general graphs~\citep{DBLP:conf/ijcai/GourvesLW17} and complete graphs~\citep{DBLP:journals/corr/abs-2005-02218}.
 In this paper, we mainly consider \textsc{Object Reachability} and \textsc{Pareto Efficiency} under different preferences.
\citet{DBLP:conf/atal/DamammeBCM15} proved that it is NP-hard to find a reachable assignment maximizing  the
minimum of the individual utilities, which can be modified to show the NP-hardness of \textsc{Object Reachability} in complete graphs.
\citet{DBLP:conf/ijcai/GourvesLW17} further showed that \textsc{Object Reachability}
is polynomial-time solvable under star-structures and NP-complete under tree-structures.
For the network being a path, they solved the special case where the given agent is an endpoint (a leaf) in the path and left it unsolved for the general path case. All the above results are under strict preferences.
In this paper, we will answer the open problem positively by giving a polynomial-time algorithm for \textsc{Object Reachability}
in a path under strict preferences.
Furthermore, we prove that \textsc{Weak Object Reachability} (\textsc{Object Reachability} under the weak preferences)
is NP-hard in a path and polynomial-time solvable in a star.
Very recently, another research group also obtained a polynomial-time algorithm for \textsc{Object Reachability}
in paths under strict preferences independently~\cite{DBLP:journals/corr/abs-1905-04219}. The main ideas of the two algorithms are similar: first characterizing some structural properties of the problem and then reducing the problem to the polynomial-time solvable \textsc{2-Sat} problem. Furthermore, the paper~\cite{DBLP:journals/corr/abs-1905-04219} also showed that  \textsc{Object Reachability}
in cycles under strict preferences is polynomial-time solvable.
For \textsc{Pareto Efficiency},
\citet{DBLP:conf/ijcai/GourvesLW17} proved that the problem under strict preferences is NP-hard in general graphs and polynomial-time solvable in a star or path.
 In this paper, we further consider \textsc{Weak Pareto Efficiency} (\textsc{Pareto Efficiency} under the weak preferences) and show that it is NP-hard in a path.
Our results and previous results are summarized in Table~\ref{table1}, where we use OR to denote \textsc{Object Reachability} and PE to denote \textsc{Pareto Efficiency} and the precise definitions of OR and PE are given in Section 2.3.
\begin{table}[htbp]
    \centering
    \caption{The computational complexity results}\label{table1}
    \begin{tabular}{c|c|c|c|c|c}
        \hline
        \multirow{2}{*}{Preferences} & \multirow{2}{*}{Problems} & \multicolumn{4}{c}{The network}\\
        \cline{3-6}
         &  & Stars & Paths & Trees & General Graphs\\
        \hline
        \multirow{3}{*}{Strict} & \multirow{2}{*}{OR} &  \multirow{2}{*}{P~\citep{DBLP:conf/ijcai/GourvesLW17}} & \textbf{P} & \multirow{2}{*}{NP-hard~\citep{DBLP:conf/ijcai/GourvesLW17}} & \multirow{2}{*}{NP-hard~\citep{DBLP:conf/ijcai/GourvesLW17}}\\
        & & &(Theorem~\ref{the_sor}) & &\\
        \cline{2-6}
        & PE & P~\citep{DBLP:conf/ijcai/GourvesLW17} & P~\citep{DBLP:conf/ijcai/GourvesLW17} & ? & NP-hard~\citep{DBLP:conf/ijcai/GourvesLW17}\\
        \hline
        \multirow{4}{*}{Weak} & \multirow{2}{*}{OR} & \textbf{P} & \textbf{NP-hard}  & \textbf{NP-hard}  & \textbf{NP-hard}\\
         &  & (Lemma~\ref{lem_wor}) & (Theorem~\ref{the_wor}) & (Theorem~\ref{the_wor}) & (Theorem~\ref{the_wor})\\
        \cline{2-6}
        & \multirow{2}{*}{PE} & \multirow{2}{*}{?} & \textbf{NP-hard}  & \textbf{NP-hard}  & \multirow{2}{*}{NP-hard~\citep{DBLP:conf/ijcai/GourvesLW17}}\\
         &  &  & (Theorem~\ref{the_pareto}) & (Theorem~\ref{the_pareto}) & \\
        \hline
    \end{tabular}
\end{table}

In Table~\ref{table1}, our results are marked as bold and the problems left unsolved are denoted by `?'.
One of the major results in this paper is the polynomial-time algorithm for \textsc{Object Reachability} under strict preferences in a path.
Although paths are rather simple graph structures, \textsc{Object Reachability}
in a path is not easy at all, as mentioned in \citep{DBLP:conf/ijcai/GourvesLW17} that ``Despite its apparent simplicity, \textsc{Reachable Object} (\textsc{Object Reachability})
in a path is a challenging open problem when no restriction
on the agent's location is made. We believe that
this case is at the frontier of tractability.''
Our polynomial-time algorithm involves several techniques and needs to call solvers for the \textsc{2-Sat} problem after characterizing some structural properties of the problem.
On the other hand, it is a little bit surprising that \textsc{Object Reachability} in paths becomes NP-hard under weak preferences. Note that under weak preferences, the number of feasible swaps in an instance may increase. This may increase the searching space of the problem dramatically and make the problem harder to search for a solution.
Furthermore, we also show that \textsc{Weak Object Reachability} in stars is polynomial-time solvable. Although the algorithm will use the idea of the algorithm for the problem under strict preferences~\citep{DBLP:conf/ijcai/GourvesLW17}, it involves new techniques.
In order to study the problems systematically and further understand the computational complexity of the problems, we also consider \textsc{Pareto Efficiency} and give several hardness results.

The following part of the paper is organized as follows.
Section~\ref{sec-back} provides some background.
Section~\ref{sec-spath} tackles the reachability of an object for an agent in a path under strict preferences.
Section~\ref{sec-wpath} studies \textsc{Weak Object Reachability} and \textsc{Weak Pareto Efficiency} in paths, and
Section~\ref{sec-star} studies \textsc{Weak Object Reachability} and \textsc{Weak Pareto Efficiency} in stars.
Section~\ref{sec-con} makes some concluding remarks.
Partial results of this paper were presented on the thirty-third AAAI conference on artificial intelligence (AAAI 2019) and appeared as~\cite{or_aaai}.
Due to the space limitation, the conference version~\cite{or_aaai} only presented Theorem~\ref{the_sor} and Theorem~\ref{the_wor} with proof sketches.
In this version, we gave the full proofs of the two theorems and further considered \textsc{Weak Object Reachability} and \textsc{Weak Pareto Efficiency} in paths and stars.
Section~\ref{sec-wpath} and Section~\ref{sec-star} are newly added.

\section{Background}\label{sec-back}
\subsection{Models}
There are a set $N=\{ 1,...,n\}$ of $n$ agents and a set $O=\{ o_1,...,o_n\}$ of $n$ objects.
An \emph{assignment} $\sigma$ is a mapping from $N$ to $O$, where $\sigma(i)$ is the object held by agent $i$ in $\sigma$.
We also use $\sigma ^T(o_i)$ to denote the agent who holds object $o_i$ in $\sigma$.
Each agent holds exactly one object at all time.
Initially, the agents are endowed with objects, and the initial endowment is denoted by $\sigma _0$.
We assume without loss of generality that $\sigma _0(i)=o_i$ for every agent $i$.

Each agent has a preference regarding all objects.
A \emph{strict preference} is expressed as a full linear order of all objects.
Agent $i$'s preference is denoted by $\succ _i$, and $o_a\succ _i o_b$ indicates the fact that agent $i$ prefers object $o_a$ to object $o_b$. The whole strict preference profile for all agents is represented by $\succ$.
For \emph{weak preferences}, an agent may be indifferent between two objects.
For two disjoint subsets of objects $S_1$ and $S_2$, we use $S_1\succ _i S_2$ to indicate that all objects in $S_1$ (resp., $S_2$)
are equivalent for agent $i$ and agent $i$ prefers any object in $S_1$ to any object in $S_2$.
We use $o_a\succeq_i o_b$ to denote that agent $i$ likes $o_a$ at least as the same as $o_b$. Two relations $o_a\succeq_i o_b$ and $o_b\succeq_i o_a$ together imply that $o_a$ and $o_b$ are equivalent for agent $i$, denoted by $o_b=_i o_a$.
We may use $\succeq$ to denote the whole weak preference profile for all agents.
 We may simply use $o_{a_1}\succeq_i o_{a_2} \succeq_i \dots \succeq_i o_{a_l}$ (resp., $o_{a_1}\succ_i o_{a_2} \succ_i \dots \succ_i o_{a_l}$) to denote that $o_{a_j}\succeq_i o_{a_{j+1}}$ (resp., $o_{a_j}\succ_i o_{a_{j+1}}$) holds simultaneously for all $1\leq j \leq l-1$.

 When we discuss the utility of an agent or the social welfare, we may also need to define a \emph{value function} of agents on objects. Let $f_i(o)$ denote the value of object $o$ for agent $i$.
 Then $f_i(o_a)= f_i(o_b)$ implies $o_{a}=_i o_{b}$ and $f_i(o_a)> f_i(o_b)$ implies $o_{a}\succ_i o_{b}$. Given an assignment $\sigma$, the \emph{utilitarian social welfare} is defined to be
 $\sum_{i\in N}f_i(\sigma(i))$.

Let $G=(N,E)$ be an undirected graph as the social network among agents, where the edges in $E$ capture the capability of communication and exchange between two agents in $N$. An instance of \textsc{Housing Market} is a tuple $(N,O,\succ, G, \sigma _0)$ or $(N,O,\succeq, G, \sigma _0)$ according to the preferences being strict or weak.

\subsection{Dynamics}
The approach we take in this paper is dynamic, and we focus on individually rational trades between two agents.
A trade is \emph{individually rational} if each participant receives an object at least as good as the one currently held, i.e., for two agents $i$ and $j$ and an assignment $\sigma$, the trade between agents $i$ and $j$ on $\sigma$ is individually rational if $\sigma(j)\succeq _i \sigma(i)$ and $\sigma(i)\succeq _j \sigma(j)$.
Notice that, under strict preferences, a trade is individually rational if each participant receives an object strictly better than the one currently held since there are no objects equivalent to any agent.

We require that every trade is performed between neighbors in the social network $G$.
Individually rational trades defined according to $G$ are called \emph{swaps}.
A swap is an exchange, where two participants have the capability to communicate.

If an assignment $\sigma'$  results from applying a swap in assignment $\sigma$, then the swap is uniquely decided by the two assignments and we may use $(\sigma, \sigma')$ to denote the swap.
We may also use a sequence of assignments $(\sigma _0,\sigma _1,$
$\sigma _2,\dots,\sigma _t)$ to represent a sequence of swaps, where $\sigma _i$ results from a swap from $\sigma _{i-1}$ for any $i\in \{ 1,...,t\}$.
An assignment $\sigma'$ is \emph{reachable} if there exists a sequence of swaps $(\sigma _0,\sigma _1,\sigma _2,...,\sigma _t)$ such that $\sigma _t=\sigma'$.
An object $o\in O$ is \emph{reachable} for an agent $i\in N$ if there is a sequence of swaps $(\sigma _0,\sigma _1,\sigma _2,...,\sigma _t)$ such that $\sigma _t(i)=o$.
An assignment $\sigma$ \textit{Pareto-dominates} an assignment $\sigma'$ if for all $i\in N$, $\sigma(i)\succeq_i\sigma'(i)$ and there is an agent $j\in N$ such that $\sigma(j)\succ_j\sigma'(j)$.
An assignment $\sigma$ is \textit{Pareto optimal} if $\sigma$ is not \textit{Pareto-dominated} by any other assignment.

\subsection{Problems}
We mainly consider two problems under our model. The first one is \textsc{Object Reachability}, which is to check whether an object is reachable for an agent from the initial endowment via swaps. Another one is \textsc{Pareto Efficiency}, which is to find a Pareto optimal assignment within all the reachable assignments.
\textsc{Object Reachability} is formally defined below.

\noindent\rule{\linewidth}{0.2mm}
\vspace{-1mm}
\textsc{Object Reachability} (OR)\\
\textbf{Instance:} $(N,O,\succ,G,\sigma _0)$, an agent $k\in N$, and an object $o_l\in O$.\\
\textbf{Question:} Is object $o_l$ reachable for agent $k$?\vspace{-0.2cm}\\
\rule{\linewidth}{0.2mm}

Note that in \textsc{Object Reachability}, the preferences are strict by default.
When the preferences are weak, we call the problem \textsc{Weak Object Reachability}.
When the social network is a path $P$, the corresponding problem is called \emph{\textsc{(Weak) Object Reachability} in paths}.
For \textsc{Object Reachability} in paths, an instance will be denoted by $I=(N,O,\succ,P,\sigma _0, k\in N, o_l\in O)$,
where we assume without loss of generality that $l<k$. For path structures, we always assume without loss of generality that the agents are listed as $1,2,\dots,n$ on a line from left to right with an edge between any two consecutive agents. Below is an example of \textsc{Object Reachability} in paths.

\begin{example}
There are four agents. The path structure, preference profile and a sequence of swaps are given in Figure~\ref{fig_example1}.
\begin{figure}[htbp]
    \centering
    \begin{tikzpicture}
        [scale = 0.85, line width = 0.5pt,solid/.style = {circle, draw, fill = black, minimum size = 0.3cm},empty/.style = {circle, draw, fill = white, minimum size = 0.5cm}]
        \node [] (A) at (1.2,4) {$P:$};
        \node [empty, label = center:$1$] (B) at (2,4) {};
        \node [empty, label = center:$2$] (C) at (3,4) {};
        \node [empty, label = center:$3$] (D) at (4,4) {};
        \node [empty, label = center:$4$] (E) at (5,4) {};
        \node [label = right:$1:o_2\succ$\fbox{$o_1$}$\succ o_3\succ o_4$] (F) at (6,4) {};

        \node [] (A1) at (1.2,3.5) {$\sigma _0:$};
        \node [] (B1) at (2,3.5) {$o_1$};
        \node [] (C1) at (3,3.5) {$o_2$};
        \node [] (D1) at (4,3.5) {$o_3$};
        \node [] (E1) at (5,3.5) {$o_4$};
        \node [label = right:$2:o_4\succ o_3\succ o_1\succ$\fbox{$o_2$}] (F1) at (6,3.5) {};

        \node [] (A2) at (1.2,3) {$\sigma _1:$};
        \node [] (B2) at (2,3) {$o_2$};
        \node [] (C2) at (3,3) {$o_1$};
        \node [] (D2) at (4,3) {$o_3$};
        \node [] (E2) at (5,3) {$o_4$};
        \node [label = right:$3:o_1\succ o_4\succ$\fbox{$o_3$}$\succ o_2$] (F2) at (6,3) {};

        \node [] (A3) at (1.2,2.5) {$\sigma _2:$};
        \node [] (B3) at (2,2.5) {$o_2$};
        \node [] (C3) at (3,2.5) {$o_3$};
        \node [] (D3) at (4,2.5) {$o_1$};
        \node [] (E3) at (5,2.5) {$o_4$};
        \node [label = right:$4:o_3\succ o_1\succ o_2\succ$\fbox{$o_4$}] (F3) at (6,2.5) {};

        \draw (B) -- (C);
        \draw (C) -- (D);
        \draw (D) -- (E);
        \draw[<->] (B1) -- (C1);
        \draw[<->] (C2) -- (D2);
    \end{tikzpicture}
    \caption{An example of four agents on a path} \label{fig_example1}
\end{figure}
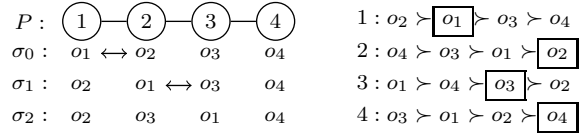

In Figure~\ref{fig_example1}, the initial endowments for agents are denoted by squares within the preferences.
After a swap between agents 1 and 2 from $\sigma _0$ we get $\sigma _1$ and after a swap between agents 2 and 3 from $\sigma _1$ we get $\sigma _2$.
Object $o_1$ is reachable for agent $3$.

\textsc{Pareto Efficiency} is formally defined below, where the preferences are strict by default.
When the preferences are weak, we call the problem \textsc{Weak Pareto Efficiency}.

\noindent\rule{\linewidth}{0.2mm}
\textsc{Pareto Efficiency} (PE)\\
\textbf{Instance:} $(N,O,\succ,G,\sigma _0)$.\\
\textbf{Question:} To find a Pareto optimal assignment within all the reachable assignments.\vspace{-0.1cm}\\
\rule{\linewidth}{0.2mm}
\vspace{-1mm}
\end{example}

\section{\textsc{Object Reachability} in Paths}\label{sec-spath}
In this section, the preferences are assumed to be strict.
\textsc{Object Reachability} is known to be NP-complete when the network is a tree and polynomial-time solvable when the network is a star~\citep{DBLP:conf/ijcai/GourvesLW17}. It is left unsolved whether the problem with the network being a path is NP-hard or not.
We show some properties of \textsc{Object Reachability} under path structures and design a polynomial-time algorithm for it.
In the remaining part of this section, the network is always assumed to be a path.

Recall that the problem is to check whether an object $o_l$ is reachable for an agent $k$ with $l< k$.
We introduce the main idea of the algorithm as follows.

\begin{enumerate}
    \item First, we show that the instance is equivalent to the instance after deleting all agents (and the corresponding endowed objects) on the left of agent $l$. Thus, we can assume the problem is to check whether object $o_1$ is reachable for an agent $k$.
    \item Second, we prove that if $o_1$ is reachable for agent $k$ then there is an object $o_{n'}$ with $n'\geq k$ that must be moved to agent $k-1$ in the final assignment. We show that we can also delete all agents and objects on the right of agent $n'$ in the initial endowment. However, we cannot find the agent $n'$ directly. In our algorithm, we will guess $n'$ by letting it be each possible value between $k$ and $n$. For each candidate $n'$, we get a special instance, called
        the neat $(o_1,o_{n'},k)$-\textsc{Constrained} instance, which contains $n'$ agents and objects and the goal is to check whether the first object $o_1$ is reachable for agent $k$ and the last object $o_{n'}$ is reachable for agent $k-1$ simultaneously, i.e., whether there is a reachable assignment $\sigma'$ such that $\sigma'(k)=o_1$ and $\sigma'(k-1)=o_{n'}$.
    \item Third, to solve the original instance, now we turn to solve at most $n$ neat $(o_1,o_{n'},k)$-\textsc{Constrained} instances.
    We characterize reachable assignments for neat $(o_1,o_{n'},k)$-\textsc{Constrained} instances by showing that any two objects must satisfy some \emph{compatible properties}.
\item Fourth, based on the compatible properties, we can prove that each object $o_i$ will be moved to either the left or the right side of its original position in a reachable assignment (i.e., can not stay at its original position), and there is at most one possible position $i_l$ for the left side and most one possible position $i_r$ for the right side. Furthermore, we can compute the candidate positions $i_l$ and $i_r$ in polynomial time.
 \item Last, we still need to decide whether to move each object $o_i$ to the left or the right side.
 (i.e., to decide which of $i_l$ and $i_r$ is the correct position for $o_i$ in the reachable assignment). Since there are at most two possible positions for each object, we can reduce the problem to the \textsc{2-Sat} problem and then solve it in polynomial time by using fast solvers for \textsc{2-Sat}.
\end{enumerate}

\subsection{Basic Properties and Neat Constrained Instances}

When the preferences are strict, we have the following observations and lemmas.

\begin{ob}\label{obs_1}
    Let $(\sigma _0,\sigma _1,\dots,\sigma _t)$ be a given sequence of swaps and
    $j\in N$ be an agent. For any $i\in\{0,1,\dots,t-1\}$, it holds that either $\sigma_{i+1}(j) = \sigma_i(j)$ or
    $\sigma_{i+1}(j) \succ_j \sigma_i(j)$.
\end{ob}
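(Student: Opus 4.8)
The plan is to argue directly from the definition of a swap, via a short case analysis on whether agent $j$ participates in the $(i{+}1)$-st swap. Recall that each swap $(\sigma_i,\sigma_{i+1})$ is an individually rational trade between exactly two agents that are neighbors in $G$; call them $a$ and $b$. All other agents keep their objects, so $\sigma_{i+1}(j)=\sigma_i(j)$ whenever $j\notin\{a,b\}$, which already yields the first alternative of the claim.

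It remains to handle the case $j\in\{a,b\}$, say $j=a$ (the case $j=b$ is symmetric). By definition of the swap, $\sigma_{i+1}(a)=\sigma_i(b)$, and individual rationality gives $\sigma_i(b)\succeq_a \sigma_i(a)$. Here I would invoke the remark made just before the observation: under strict preferences there is no object equivalent to $\sigma_i(a)$ for agent $a$ other than $\sigma_i(a)$ itself, and $\sigma_i(b)\neq\sigma_i(a)$ since the two agents actually exchange distinct objects; hence $\sigma_i(b)\succeq_a\sigma_i(a)$ strengthens to $\sigma_i(b)\succ_a\sigma_i(a)$. Thus $\sigma_{i+1}(j)=\sigma_i(b)\succ_a\sigma_i(a)=\sigma_i(j)$, which is the second alternative.

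Combining the two cases, for every $i\in\{0,1,\dots,t-1\}$ we have $\sigma_{i+1}(j)=\sigma_i(j)$ or $\sigma_{i+1}(j)\succ_j\sigma_i(j)$, as desired. There is no real obstacle here: the statement is essentially a restatement of individual rationality of swaps combined with strictness of preferences, and the only point requiring a word of care is the passage from the weak inequality $\succeq_a$ in the definition of individual rationality to the strict inequality $\succ_a$, which is exactly where the strict-preference assumption of this section is used.
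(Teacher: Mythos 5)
Your proof is correct and matches exactly the reasoning the paper relies on: the paper states this as an unproved observation, justified implicitly by the remark (made just before Section~\ref{sec-spath}) that under strict preferences an individually rational trade gives each participant a strictly better object, while non-participants keep their objects. Your case analysis simply makes that reasoning explicit, so there is nothing to add.
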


It implies the following lemma.

\begin{lemma}\label{lem_nondecreasing}
    Let $(\sigma _0,\sigma _1,\dots,\sigma _t)$ be a given sequence of swaps.
    For any two integers $i<j$ in $\{0,1,\dots,t\}$ and any agent $q\in N$, if $\sigma _i(q)=\sigma _j(q)$,
    then $\sigma _d(q)=\sigma _i(q)$ for any integer $i\leq d \leq j$.
\end{lemma}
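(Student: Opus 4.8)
The plan is to derive this monotonicity-type statement directly from Observation~\ref{obs_1}, which already tells us that along a sequence of swaps each agent's held object is non-decreasing with respect to that agent's strict preference $\succ_q$. The key point to exploit is that $\succ_q$ is a strict (antisymmetric) linear order, so there can be no ``cycle'' of strict improvements: once an agent's object strictly improves, it can never come back down to a previously held object.

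First I would fix the agent $q$ and consider the finite sequence of objects $\sigma_i(q), \sigma_{i+1}(q), \dots, \sigma_j(q)$. By Observation~\ref{obs_1}, for each consecutive pair of indices $d$ and $d+1$ in this range we have either $\sigma_{d+1}(q) = \sigma_d(q)$ or $\sigma_{d+1}(q) \succ_q \sigma_d(q)$. I would then argue by a short induction (or simply by transitivity of $\succ_q$ together with the fact that $=$ combines harmlessly with $\succ_q$) that for any two indices $i \le d \le e \le j$ we have $\sigma_e(q) \succeq_q \sigma_d(q)$; moreover, if at any consecutive step strictly between $d$ and $e$ a strict improvement occurred, then in fact $\sigma_e(q) \succ_q \sigma_d(q)$. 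This is the ``chain of improvements'' observation and is essentially bookkeeping on a linear order.

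Now I would apply this to the hypothesis $\sigma_i(q) = \sigma_j(q)$. If some consecutive step $d \to d+1$ with $i \le d < j$ were a strict improvement, the chain argument would give $\sigma_j(q) \succ_q \sigma_i(q)$, contradicting $\sigma_i(q) = \sigma_j(q)$ (since $\succ_q$ is irreflexive/antisymmetric on a strict order). Hence no strict improvement occurs anywhere between index $i$ and index $j$, so by Observation~\ref{obs_1} every consecutive step in that range satisfies $\sigma_{d+1}(q) = \sigma_d(q)$. Therefore $\sigma_d(q) = \sigma_i(q)$ for every $i \le d \le j$, which is exactly the claim.

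The only mild subtlety — and the single place one must be careful — is the passage from ``each consecutive step is $=$ or a strict improvement'' to ``the endpoints being equal forces all steps to be $=$''. This relies crucially on strictness of the preferences: with weak preferences one could have $\sigma_i(q)$ and $\sigma_j(q)$ equal in value but the agent passing through distinct equivalent objects, so the statement as phrased (with literal equality of objects) would need the strictness hypothesis, which the lemma indeed assumes. I do not anticipate any real obstacle beyond stating this induction cleanly.
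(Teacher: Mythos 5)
Your argument is correct and is precisely the implication the paper intends: the paper gives no explicit proof, stating only that Observation~\ref{obs_1} ``implies'' the lemma, and your chain-of-improvements argument (equal endpoints plus strictness of $\succ_q$ force every intermediate step to be an equality) is exactly the standard way to fill in that implication.
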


Lemma~\ref{lem_nondecreasing} also says that an object will not `visit' an agent twice. This property is widely used in similar allocation problems under strict preferences.

Next, we analyze properties under the constraint that the social network is a path.
In a swap, an object is moved to the \emph{right side} if it is moved from agent $i$ to agent $i+1$, and an object is moved to the \emph{left side} if it is moved from agent $i$ to agent $i-1$.
In each swap, one object is moved to the right side and one object is moved to the left side.
We study the tracks of the objects in a feasible assignment sequence.

\begin{lemma}\label{lem_exactly}
For a sequence of swaps $(\sigma _0,\sigma _1,\dots,\sigma _t)$, if  $\sigma^T _t(o_i)=j$,
then there are exactly $|j-i|$ swaps including $o_i$. Furthermore, in all the $|j-i|$ swaps, $o_i$ is moved
to the right side  if $i<j$, and  moved to the left side  if $i>j$.
\end{lemma}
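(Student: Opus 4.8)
My plan is to follow the trajectory of the single object $o_i$ across the sequence $(\sigma_0,\dots,\sigma_t)$ and to show that it traces a strictly monotone walk along the path. On a path, $o_i$ changes hands only in a swap that involves it, and each such swap moves it to an adjacent agent. Hence, if $o_i$ participates in $m$ swaps, its successive holders form a sequence of agents $p_0,p_1,\dots,p_m$ with $p_0=i$ (the initial endowment), $p_m=j=\sigma^T_t(o_i)$ (no swap after the last one involving $o_i$ can move it), and $|p_k-p_{k-1}|=1$ for every $k\in\{1,\dots,m\}$. It then suffices to prove $m=|j-i|$ together with the direction claim.

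The crucial step is that $p_0,p_1,\dots,p_m$ are pairwise distinct, i.e., $o_i$ never returns to an agent it has already left. This is exactly the ``an object does not visit an agent twice'' consequence of Lemma~\ref{lem_nondecreasing} pointed out above: if $p_k=p_{k'}=:q$ with $k<k'$, then $q$ holds $o_i$ at the two assignments right after $o_i$'s $k$-th and $k'$-th swaps, but not at the assignment right after $o_i$'s $(k{+}1)$-st swap (there $o_i$ sits at $p_{k+1}\neq q$), contradicting Lemma~\ref{lem_nondecreasing}. To make this airtight I will first fix, for each $k$, the precise index $d_k\in\{0,1,\dots,t\}$ of the assignment in which $o_i$ arrives at $p_k$, so that Lemma~\ref{lem_nondecreasing} is applied to agent $q$ with the pair $d_k<d_{k'}$ and the intermediate index $d_{k+1}$; this bookkeeping is the only mildly delicate part of the argument.

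With distinctness in hand, monotonicity is forced: a direction reversal at some $p_k$ would yield $p_{k+1}=p_{k-1}$, contradicting distinctness, so $p_0,\dots,p_m$ is strictly increasing, strictly decreasing, or a constant sequence (in which case $m=0$). If it is strictly increasing then $j-i=p_m-p_0=m>0$, so $i<j$ and every swap of $o_i$ is a move to the right side; if it is strictly decreasing then $i-j=m>0$, so $i>j$ and every swap is a move to the left side; if $m=0$ then $i=j$. In every case $m=|j-i|$, which is the assertion. The combinatorial core (distinctness $\Rightarrow$ monotone walk $\Rightarrow$ length $|j-i|$) is short; the main point is to set up the reduction to Lemma~\ref{lem_nondecreasing} with the correct indices.
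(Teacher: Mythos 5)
Your proposal is correct and follows essentially the same route as the paper's proof: both reduce the claim to the fact that $o_i$ cannot reverse direction, using Lemma~\ref{lem_nondecreasing} (an object never revisits an agent) to rule out a reversal, and then conclude that the monotone unit-step walk from $i$ to $j$ has exactly $|j-i|$ steps. Your version merely organizes the argument via explicit distinctness of the visited agents $p_0,\dots,p_m$, whereas the paper argues directly on two consecutive opposite-direction swaps; the substance is identical.
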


\begin{proof}
    First of all, we show that in a sequence of feasible swaps
    it is impossible for an object to be moved to the right and also to the left.
    Assume to the contrary that an object $o$ is moved in both directions and we will show a contradiction.
    Consider two consecutive swaps including $o$ (i.e., there are no swaps including $o$ between these two swaps) such that $o$ is moved in two opposite directions in the two swaps.
    Assume that object $o$ is moved away from agent $q$ after the first swap. We can see that $o$ will be moved back to agent $q$ after the second swap. Note that between these two swaps no swap includes $o$ and then $o$ will not be moved between them.
    This means object $o$ will visit agent $q$ twice, a contradiction to Lemma~\ref{lem_nondecreasing}. So any object can only move in one direction in a sequence of swaps.

    Since each swap including an object can only move the object to the neighbor position, we know that
    there are exactly $|j-i|$ swaps including $o_i$ that move $o_i$ to the right side if $i<j$ and to the left side if $i>j$. \qed
\end{proof}

\begin{lemma}\label{lem_track}
Let $(\sigma _0,\sigma _1,\dots,\sigma _t)$ be a sequence of swaps,
 and $o_a$ and $o_b$
be two objects with $a<b$. Let $a'=\sigma^T_t(o_a)$ and $b'=\sigma^T_t(o_b)$.
If $a' \leq a$  or  $b' \geq b$, then $a'< b'$.
\end{lemma}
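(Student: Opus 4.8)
The plan is to argue by contradiction, assuming $a' \ge b'$ while one of the hypotheses $a' \le a$ or $b' \ge b$ holds, and to extract a violation of Lemma~\ref{lem_nondecreasing} (the ``no object visits an agent twice'' property) by tracking the movement of $o_a$ and $o_b$. First I would recall, via Lemma~\ref{lem_exactly}, that each of $o_a$ and $o_b$ moves monotonically in a single direction: $o_a$ travels from position $a$ to position $a'$ purely leftward (if $a'<a$) or purely rightward (if $a'>a$), and similarly $o_b$ from $b$ to $b'$. Since $a<b$, the two objects start in the order $(o_a, o_b)$ on the path. The key observation is that two objects on a path can only change their relative left-to-right order by ``passing through'' one another, and each swap exchanges the contents of two adjacent positions, so $o_a$ and $o_b$ can swap their relative order only if at some swap they occupy adjacent positions and are exchanged with each other.

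Next I would handle the two cases of the hypothesis. Suppose $a' \le a$, i.e.\ $o_a$ ends weakly to the left of where it started (and by Lemma~\ref{lem_exactly} never moves right). If we also had $a' \ge b'$, then $o_b$ ends at position $b' \le a' \le a < b$, so $o_b$ has moved strictly left, past position $a$. Consider the history: initially $o_a$ is at $a$ and $o_b$ is at $b>a$, so $o_a$ is strictly left of $o_b$; finally $o_b$ is at $b' \le a'$ where $o_a$ sits, so $o_b$ is weakly left of $o_a$ — in fact $b'$ is strictly less than $b \ge a' $... the order has reversed, so at some swap $o_a$ and $o_b$ are directly exchanged across an edge. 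At that moment $o_a$ moves from the right endpoint of the edge to the left, i.e.\ $o_a$ moves \emph{left} — consistent with its direction — and $o_b$ moves \emph{right}. But $o_b$ moving right contradicts the fact (Lemma~\ref{lem_exactly}) that $o_b$ moves only leftward, since $b' < b$. Hence $a' \ge b'$ is impossible, giving $a' < b'$. The case $b' \ge b$ is symmetric: then $o_b$ never moves left, and if $a' \ge b'$ then $o_a$ ends at $a' \ge b' \ge b > a$, so $o_a$ moves strictly right, past $b$; the order reversal again forces a direct exchange of $o_a$ and $o_b$, at which $o_a$ moves right (fine) and $o_b$ moves left — contradicting that $o_b$ moves only rightward.

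The main obstacle I anticipate is making the ``order reversal forces a direct adjacent exchange'' step fully rigorous: one must argue that if the relative order of two specific objects flips during a sequence of adjacent-transposition swaps, there is a \emph{specific} swap that exchanges exactly those two objects with each other (as opposed to each merely sliding past other objects). The clean way to do this is to track, step by step, the sign of the quantity ``position of $o_b$ minus position of $o_a$'': it is positive at time $0$ and non-positive at time $t$; a single swap changes each object's position by at most one, and the only swap that can change this difference by exactly the amount needed to cross zero — or more to the point, the only swap in which the two positions coincide-then-cross — is one in which $o_a$ and $o_b$ are the two objects being exchanged. Combined with the fixed travel direction of each object from Lemma~\ref{lem_exactly}, this pins down the contradiction. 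I would also note that the degenerate sub-case $a'=b'$ is automatically excluded since two objects can never occupy the same position, so ``$a'\ge b'$'' really means ``$a'>b'$'', which slightly streamlines the bookkeeping.
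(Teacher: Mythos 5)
Your overall strategy coincides with the paper's: by Lemma~\ref{lem_exactly} each object travels in a single direction, an order reversal between $o_a$ and $o_b$ would force a swap in which the two are exchanged directly, and in that swap one of them would have to move against its permitted direction. (The paper phrases the punchline direction-agnostically: in the forced swap the two participants move in opposite directions, yet neither is allowed to move right; this avoids having to say which object goes which way.) The reduction to a direct exchange and the remark that $a'=b'$ is impossible are both fine.

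However, in your first case you assign the directions of the crossing swap backwards, and this makes the decisive step false as stated. Suppose $a'\le a$ and, for contradiction, $b'\le a'$. Up to the moment of the crossing swap the original left-to-right order persists, so just before that swap $o_a$ occupies the \emph{left} endpoint of the edge and $o_b$ the right one; the exchange therefore moves $o_a$ one step to the \emph{right} and $o_b$ one step to the \emph{left}. The correct contradiction is that $o_a$ --- which by $a'\le a$ and Lemma~\ref{lem_exactly} either participates in no swap at all (if $a'=a$) or moves only leftward --- is moved rightward. Your text asserts the opposite (``$o_a$ moves left --- consistent with its direction --- and $o_b$ moves right'') and then derives the contradiction from $o_b$, whose actual leftward step is perfectly consistent with $b'<b$. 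Your second case ($b'\ge b$) is oriented correctly, and the lemma's conclusion is unaffected because a contradiction does arise either way; but as written the key sentence of the first case is wrong and blames the wrong object. Swapping the two roles, or adopting the paper's symmetric phrasing, repairs it immediately.
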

\begin{proof}
    We consider three cases.
    If $a' \leq a$  and  $b' \geq b$, then by $a<b$ we get that $a' \leq a<b \leq b'$.
    Next, we assume that $a' \leq a$  and  $b' < b$. By Lemma~\ref{lem_exactly}, we know that both objects $o_a$ and $o_b$ can only be moved to the left side.
    If $a' \geq b'$, by $a<b$ we know that there must exist a swap including both $o_a$ and $o_b$. In this swap, two objects are moved in two opposite directions, a contradiction to the fact that the two objects can only be moved to the left side.
    The last case where $a' > a$  and  $b' \geq b$ can be proved similarly. \qed
\end{proof}

Lemma~\ref{lem_nondecreasing} shows that any object can only be moved in one direction.
Lemma~\ref{lem_track} shows that when an object is moved to the right side, all objects initially allocated on the left of it can
not be moved to the right of it at any time; when an object is moved to the left side, all objects initially allocated to the right of it can
not be moved to the left of it at any time.

In fact, if we want to move an object $o_l$ to an agent $k$ with $k>l$, we may not need to move any object to the left of $o_l$, i.e., objects $o_{l'}$ with $l'<l$. Equipped with Lemma~\ref{lem_track}, we can prove
\begin{lemma}\label{lem_simplifying1}
If object $o_l$ is reachable for agent $k$, then
there is a feasible assignment sequence $(\sigma _0,\sigma _1,\dots,\sigma _t)$ such that
$\sigma^T _t(o_l)=k$, and  $\sigma_t(i)=\sigma_0(i)$ for all $i < l$ if $l\leq k$ and for all $i > l$ if $l\geq k$.
\end{lemma}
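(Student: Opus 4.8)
The plan is to prove the statement by taking an arbitrary feasible assignment sequence witnessing the reachability of $o_l$ for agent $k$ and then surgically removing, from the sequence, all swaps that move objects on the "wrong side" of $o_l$. I will treat the case $l \leq k$ (the case $l \geq k$ is symmetric). So suppose $(\sigma_0,\sigma_1,\dots,\sigma_t)$ is a sequence of swaps with $\sigma_t^T(o_l)=k$. By Lemma~\ref{lem_exactly}, since $l \leq k$, the object $o_l$ is moved exactly $k-l$ times, always to the right side; in particular $o_l$ never occupies any position strictly smaller than $l$. The idea is that a swap occurring strictly to the left of $o_l$'s current position is irrelevant to the journey of $o_l$ and can be deleted, and after deleting all such swaps the objects $o_i$ with $i<l$ will never have moved at all.

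First I would make precise which swaps to delete. For each index $i \in \{0,\dots,t-1\}$, the swap $(\sigma_i,\sigma_{i+1})$ is between two consecutive agents, say $p$ and $p+1$. Let $c_i$ denote the position of $o_l$ in $\sigma_i$ (so $l = c_0 \leq c_1 \leq \cdots \leq c_t = k$, nondecreasing by Lemma~\ref{lem_exactly}). I call the swap $(\sigma_i,\sigma_{i+1})$ \emph{left-irrelevant} if $p+1 \leq c_i$, i.e.\ the whole swap happens weakly to the left of where $o_l$ sits; note a swap involving $o_l$ itself has $p = c_i$, $p+1 = c_i+1 > c_i$, so it is never left-irrelevant. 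I would then show that removing all left-irrelevant swaps, and keeping the remaining swaps in their original order, yields a valid sequence of swaps with the desired property. The key lemma here is a locality/commutation fact: if a swap $(\sigma_i,\sigma_{i+1})$ acts on agents $\{p,p+1\}$ and a later relevant swap acts on agents $\{q,q+1\}$ with $q \geq c_i \geq p+1$... — more carefully, one argues that the deleted swaps are confined to the region $\{1,\dots,c_i\}$ at the time they occur, and the objects appearing on positions $\le c_i$ at that time are not the objects that $o_l$ will later interact with. The cleanest way to package this: process the sequence and show by induction that after deleting the left-irrelevant swaps and re-applying the rest, at every surviving step the configuration on positions $\{c_i+1,\dots,n\}$ is unchanged from the original run, and $o_l$'s trajectory is untouched; meanwhile positions $\{1,\dots,l-1\}$ are never touched by any surviving swap (because any swap touching position $\le l-1$ has $p+1 \le l-1 < l = c_0 \le c_i$, hence is left-irrelevant and was deleted). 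Since no surviving swap touches positions $1,\dots,l-1$ and those agents never hand their objects away, $\sigma_t'(i) = \sigma_0(i) = o_i$ for all $i<l$ in the new sequence; and $o_l$ still reaches $k$.

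The step I expect to be the main obstacle is verifying that the pruned subsequence is genuinely a sequence of \emph{feasible} swaps — that each retained swap is still individually rational in the new, modified configuration it is applied to. This requires showing that deleting left-irrelevant swaps does not change the object held by either endpoint of any retained swap at the moment it is performed. Here is where I would lean on Lemma~\ref{lem_track}: an object initially at position $< c_i$ that gets involved in a left-irrelevant swap can never later cross to the right of $o_l$ (if it moved right it would have to pass through $o_l$'s cell, contradicting Lemma~\ref{lem_nondecreasing}/Lemma~\ref{lem_exactly} since $o_l$ only moves right and two objects swapping move in opposite directions), so the set of objects living weakly-left of $o_l$ is "closed" and interacts with the retained swaps only through $o_l$ itself — and $o_l$'s own swaps are retained. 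Formalizing "closed region" and checking that each retained swap sees exactly the same two objects it saw before is the technical heart; I would do it by an induction on the prefix length, maintaining the invariant that the original run and the pruned run agree on all positions $\ge$ (current position of $o_l$) and that $o_l$ is at the same position in both. Once that invariant is established, individual rationality of each retained swap transfers verbatim from the original sequence, and the lemma follows.
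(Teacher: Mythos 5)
Your proof is correct, but it takes a genuinely different route from the paper's. The paper disposes of this lemma by citation: it invokes Proposition~2 of Gourv\`es, Lesca and Wilczynski (2017), which gives an explicit algorithm that \emph{constructs} a witnessing sequence never touching the agents on the wrong side of $l$ and shows that this constructed sequence is feasible whenever $o_l$ is reachable at all. You instead start from an \emph{arbitrary} witnessing sequence and prune it, deleting every swap that takes place strictly to the left of $o_l$'s current position. The surgery is sound for exactly the reason you identify: since $o_l$ moves monotonically rightwards (Lemma~\ref{lem_exactly}) and two objects can change their relative order on a path only by swapping directly with each other, no object ever crosses from the left of $o_l$ to its right, so the region weakly left of $o_l$ is dynamically closed; every deleted swap lives entirely inside that region, and your inductive invariant (the pruned and original runs agree on all positions weakly to the right of $o_l$) guarantees that each retained swap is applied to the same pair of agents holding the same pair of objects as in the original run, so individual rationality transfers verbatim. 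What your approach buys is self-containedness --- it uses only Observation~\ref{obs_1} and Lemmas~\ref{lem_nondecreasing}--\ref{lem_track}, already proved in the paper, rather than an external constructive result --- and the ``closed region'' viewpoint anticipates the compatibility analysis of Section~3.2. Two small points to tidy up: (i) a swap between agents $c_i-1$ and $c_i$ satisfies your condition $p+1\leq c_i$ but would move $o_l$ leftwards and hence cannot occur at all; saying this explicitly confirms that deleted swaps are confined to positions strictly below $c_i$; (ii) in the degenerate case $l=k$ the lemma requires \emph{both} sides to stay fixed, which one-sided pruning does not deliver, but there the empty sequence trivially witnesses the claim.
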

\begin{proof}
    The result can be derived from Proposition 2 in~\citep{DBLP:conf/ijcai/GourvesLW17}.
    The authors give an algorithm to produce a sequence of swaps, which does not move any objects $o_i$ for all $i < l$ if $l\leq k$ and for all $i > l$ if $l\geq k$.
    Proposition 2 shows that if object  $o_l$ is reachable for agent $k$ then the obtained sequence of swaps is feasible. \qed
\end{proof}

For an instance $I=(N,O,\succ,P,\sigma _0, k, o_l)$ of \textsc{Object Reachability} in paths
with $l<k$, let $I'=(N',O',\succ', P',\sigma'_0, k, o_l)$ denote the instance obtained from $I$ by deleting agents $\{1,2,\dots, l-1\}$ and objects $\{o_1,o_2,\dots, o_{l-1}\}$.
In other words, we let $N'=\{l,l+1,\dots, n\}$, $O'=\{o_l,o_{l+1},\dots, o_n\}$,
and $\succ', P'$ and $\sigma'_0$ be the corresponding subsets of $\succ, P$ and $\sigma_0$.


\begin{corollary}
    \label{lem_simpl2}
Object $o_l$ is reachable for agent $k$ in the instance $I$ if and only if object $o_l$ is reachable for agent $k$ in the corresponding instance $I'$.
\end{corollary}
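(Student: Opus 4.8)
The plan is to establish the two directions separately, with the forward direction being the substantive one and the reverse direction essentially immediate. For the reverse direction, suppose $o_l$ is reachable for agent $k$ in $I'$. Any feasible sequence of swaps in $I'$ uses only agents in $N'=\{l,l+1,\dots,n\}$ and their incident edges, all of which are present in $P$; moreover the individual rationality of each swap depends only on the preferences of the participating agents restricted to the objects involved, which are unchanged. Hence the very same sequence of swaps, viewed in $I$ (where agents $1,\dots,l-1$ simply sit idle holding $o_1,\dots,o_{l-1}$), is feasible and witnesses that $o_l$ is reachable for $k$ in $I$.

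For the forward direction, suppose $o_l$ is reachable for agent $k$ in $I$. Since $l<k$, we have $l\le k$, so Lemma~\ref{lem_simplifying1} applies: there is a feasible assignment sequence $(\sigma_0,\sigma_1,\dots,\sigma_t)$ in $I$ with $\sigma_t^T(o_l)=k$ and $\sigma_t(i)=\sigma_0(i)$ for all $i<l$. I would then argue that no swap in this sequence ever involves an agent in $\{1,\dots,l-1\}$. Indeed, if some agent $i<l$ participated in a swap, that agent would at some point hold an object different from $o_i$ and hence, by the monotonicity of Observation~\ref{obs_1} and Lemma~\ref{lem_nondecreasing}, could never return to holding $o_i$; but $\sigma_t(i)=\sigma_0(i)=o_i$ forces a return visit, contradicting Lemma~\ref{lem_nondecreasing}. (Equivalently, one can invoke Lemma~\ref{lem_exactly}: if agent $i<l$ ended with its own object $o_i$, the number of swaps moving $o_i$ is $|i-i|=0$, so $o_i$ never moves, so agent $i$ never hands anything away, and since the path edge to its left does not exist and the edge to its right can only be used to pass $o_i$ rightward, agent $i$ is never a participant.) Therefore every swap in the sequence is between two agents of $N'$, so restricting each $\sigma_j$ to $N'$ yields a feasible sequence of swaps in $I'$ ending with $o_l$ at agent $k$.

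The only mild subtlety — the step I would treat most carefully — is the claim that agents on the left of $l$ never participate. One must rule out the scenario where agent $i<l$ temporarily receives some object, passes it on, and later receives $o_i$ back; this is exactly what Lemma~\ref{lem_nondecreasing} (no object, hence no agent, is ``visited'' in the same state twice after leaving it) together with the boundary condition $\sigma_t(i)=o_i$ from Lemma~\ref{lem_simplifying1} excludes. Once that is pinned down, the restriction of the swap sequence to $N'$ is routine, and the corollary follows. This justifies the reduction used in Step~1 of the algorithm outline, letting us assume henceforth that the target object is $o_1$.
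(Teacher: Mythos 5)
Your proof is correct and follows essentially the route the paper intends: the paper states this as an unproved corollary of Lemma~\ref{lem_simplifying1}, and your forward direction (invoking that lemma and then using Lemma~\ref{lem_nondecreasing} with the boundary condition $\sigma_t(i)=\sigma_0(i)=o_i$ to conclude that agents $i<l$ hold $o_i$ throughout and hence never participate in any swap) together with the trivial reverse direction is exactly the intended argument, just written out in full.
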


By Corollary~\ref{lem_simpl2}, we can always assume that the instance of \textsc{Object Reachability} in paths
is to check whether object $o_1$ is reachable for an agent $k$.

Assume that object $o_1$ is reachable for agent $k$. For a sequence of swaps $(\sigma _0,\sigma _1,\dots,\sigma _t)$
such that $\sigma _t(o_1)=k$, there are exactly $k-1$ swaps including $o_1$, where $o_1$ is moved to the right side according to
Lemma~\ref{lem_exactly}. The last swap including $o_1$ is between agent $k-1$ and agent $k$.
Let $o_{n'}$ denote the other object included in the last swap. In other words, after the last swap, agent $k-1$ will get object $o_{n'}$ and agent $k$ will get object $o_1$. Note that $o_{n'}$ is moved to the left side in this swap.
By Lemma~\ref{lem_exactly}, we know that object $o_{n'}$ is moved to the left side in all swaps including $o_{n'}$.
Therefore, we have the following observation.


\begin{ob}
    \label{ob_2}
    If there is a reachable assignment $\sigma$ such that $\sigma(k)=o_1$ and $\sigma(k-1)=o_{n'}$, then it holds that $n'\geq k$.
\end{ob}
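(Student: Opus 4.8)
The plan is to argue directly from the structural lemmas already established, in particular Lemma~\ref{lem_exactly} and Lemma~\ref{lem_track}. Suppose there is a reachable assignment $\sigma$ with $\sigma(k)=o_1$ and $\sigma(k-1)=o_{n'}$, and fix a sequence of swaps $(\sigma_0,\sigma_1,\dots,\sigma_t)$ with $\sigma_t=\sigma$. First I would record the easy consequences: since $o_1$ starts at agent $1$ and ends at agent $k$, Lemma~\ref{lem_exactly} tells us $o_1$ is moved only to the right, through exactly $k-1$ swaps, and the last of these is the swap between agents $k-1$ and $k$; the partner object in that swap is exactly $o_{n'}$, which is therefore moved to the left in that swap, and hence (again by Lemma~\ref{lem_exactly}) $o_{n'}$ is moved only to the left in every swap including it. In particular $n' = \sigma_0^T(o_{n'}) \ge \sigma^T_t(o_{n'}) = k-1$, so $n' \ge k-1$.

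The remaining work is to rule out $n' = k-1$. If $n' = k-1$, then $o_{n'}$ never moves at all under the argument above (it starts and ends at agent $k-1$, and moves only leftward, so by Lemma~\ref{lem_nondecreasing} it stays put), contradicting the fact that it participates in the final swap with $o_1$. Alternatively, and perhaps more cleanly, I would apply Lemma~\ref{lem_track} to the pair $o_1$ and $o_{n'}$: here the roles are $a = 1$, $b = n'$ (assuming $n' > 1$, which holds since $n' \ge k-1 \ge 1$ and $k \ge 2$), with $a' = \sigma^T_t(o_1) = k$ and $b' = \sigma^T_t(o_{n'}) = k-1$. Since $o_1$ is moved to the right we have $a' = k \ge 1 = a$ is not the relevant hypothesis; instead I note $b' = k-1$, and I want to invoke the hypothesis $b' \ge b$, i.e. $k-1 \ge n'$. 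Combined with $n' \ge k-1$ this would force $n' = k-1$ and then $a' = k > k-1 = b'$, contradicting the conclusion $a' < b'$ of Lemma~\ref{lem_track}. Hence $n' \ge k$.

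The main obstacle is making sure the case analysis is airtight, especially the boundary behavior when $n'$ could a priori equal $k-1$ or even collide with $o_1$'s trajectory. The key point to get right is that the last swap involving $o_1$ genuinely forces $o_{n'}$ to move strictly leftward through agent $k-1$, so $o_{n'}$'s starting position $n'$ is strictly to the right of $k-1$, i.e. $n' \ge k$. I would present this as: $o_{n'}$ is moved left in at least one swap (the final $o_1$-swap), so by Lemma~\ref{lem_exactly} its net displacement is strictly leftward, giving $n' = \sigma_0^T(o_{n'}) > \sigma^T_t(o_{n'}) = k-1$, hence $n' \ge k$. This is short, but I would double-check that "$o_{n'}$ participates in at least one swap" is indeed guaranteed — it is, because it is by definition the partner of $o_1$ in the $(k-1,k)$-swap, which exists since $k-1$ swaps involving $o_1$ occur and $k \ge 2$ (as $l < k$ and $l \ge 1$).
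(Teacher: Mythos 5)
Your proposal is correct and takes essentially the same approach as the paper: the paper's justification for this observation is precisely your short version — the partner of $o_1$'s final swap is moved to the left, hence by Lemma~\ref{lem_exactly} it moves only leftward, and since it sits at agent $k$ just before that swap its starting position satisfies $n'\geq k$. Your alternative via Lemma~\ref{lem_track} is actually the more airtight variant, since the hypothesis only says $\sigma(k-1)=o_{n'}$, and identifying $o_{n'}$ with the last-swap partner (rather than some object arriving at agent $k-1$ from the left afterwards) itself needs exactly the Lemma~\ref{lem_track} exclusion you supply.
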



\begin{definition}
   The \emph{$(o_1,o_{n'},k)$-\textsc{Constrained}} problem is to decide whether an instance $I$ of \textsc{Object Reachability} has a reachable assignment $\sigma$ such that $\sigma(k-1)=o_{n'}$ and $\sigma(k)=o_{1}$, where $n'\geq k$.
\end{definition}

Our idea is to transform our problem to the \emph{$(o_1,o_{n'},k)$-\textsc{Constrained}} problem. We do not know the value of
$n'$. So we search by letting $n'$ be each value in $\{k,k+1,\dots, n\}$. This will only increase the running time bound by a factor of $n$.
So we get

\begin{lemma}\label{lem6}
An instance $I=(N,O,\succ,P,\sigma _0, k, o_1)$ is a yes-instance
if and only if at least one of the $(o_1,o_{n'},k)$-\textsc{Constrained} instances for $n'\in \{k,k+1,\dots, n\}$ is a yes-instance.
\end{lemma}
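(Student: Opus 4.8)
The plan is to prove the two implications of the biconditional separately; only the forward direction needs work, and it will come down to combining the one-directional movement of objects (Lemma~\ref{lem_exactly}) with the fact that an object never revisits an agent (Lemma~\ref{lem_nondecreasing}).

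For the ``if'' direction I would simply unwind the definitions: if for some $n'\in\{k,\dots,n\}$ the $(o_1,o_{n'},k)$-\textsc{Constrained} instance on $I$ is a yes-instance, then $I$ has a reachable assignment $\sigma$ with $\sigma(k)=o_1$ (and $\sigma(k-1)=o_{n'}$), so $o_1$ is reachable for agent $k$ and $I$ is a yes-instance of \textsc{Object Reachability}. No further argument is needed here.

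For the ``only if'' direction, I would start from a feasible swap sequence $(\sigma_0,\dots,\sigma_t)$ with $\sigma_t(k)=o_1$, which exists because $I$ is a yes-instance. Since $\sigma_0(1)=o_1$ and $k\ge 2$ (recall $l=1<k$), Lemma~\ref{lem_exactly} tells us that $o_1$ is moved exactly $k-1$ times, always rightward, so its positions over time are $1,2,\dots,k$ and the last swap touching it, say $(\sigma_{s-1},\sigma_s)$, takes place on the edge $\{k-1,k\}$ and gives $\sigma_s(k)=o_1$. Setting $o_{n'}:=\sigma_{s-1}(k)$, the same swap gives $\sigma_s(k-1)=o_{n'}$. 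I would then take the truncated sequence $(\sigma_0,\dots,\sigma_s)$ as the witness: being a prefix of a feasible sequence it is itself feasible, and its final assignment $\sigma_s$ satisfies both $\sigma_s(k)=o_1$ and $\sigma_s(k-1)=o_{n'}$. To place $n'$ in the required range $\{k,\dots,n\}$ I would invoke the reasoning behind Observation~\ref{ob_2}: $o_{n'}$ is moved leftward in $(\sigma_{s-1},\sigma_s)$, hence (again by Lemma~\ref{lem_exactly}) only leftward throughout the whole sequence, so from its start position $\sigma^T_0(o_{n'})=n'$ and its position $k$ at time $s-1$ we get $n'\ge k$. Finally, the ``at most $n$'' factor mentioned in the lemma's preamble is immediate, since one ranges $n'$ over the $n-k+1$ values in $\{k,\dots,n\}$.

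The one place that needs care — the closest thing to an obstacle — is justifying the truncation, i.e., checking that reading the two positions off $\sigma_s$ is legitimate because nothing disturbs the edge $\{k-1,k\}$ afterwards. This follows because agent $k$ holds $o_1$ from time $s$ onward and $o_1$ never moves again (no swap after $s$ involves $o_1$), so agent $k$ takes part in no swap after time $s$; this is a one-line consequence of Lemma~\ref{lem_nondecreasing}. Everything else is routine bookkeeping.
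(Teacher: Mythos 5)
Your proposal is correct and follows essentially the same route as the paper: the paper derives this lemma directly from the discussion preceding Observation~\ref{ob_2}, namely that the last swap including $o_1$ occurs on the edge $\{k-1,k\}$, that the partner object $o_{n'}$ moves only leftward and hence satisfies $n'\geq k$, and that one then simply tries all candidate values of $n'$. Your explicit truncation of the swap sequence at that last swap and the closing remark via Lemma~\ref{lem_nondecreasing} are just a more careful write-up of the same argument.
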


For an $(o_1,o_{n'},k)$-\textsc{Constrained} instance $I$, we  use $I_{-{n'}}$ to denote the instance obtained from $I$ by deleting agents $\{n'+1,n'+2,\dots, n\}$ and objects $\{o_{n'+1},o_{n'+2},\dots, o_{n}\}$.

\begin{lemma} \label{lem_simpl3}
An $(o_1,o_{n'},k)$-\textsc{Constrained} instance $I$ is a yes-instance if and only if the corresponding instance $I_{-{n'}}$  is a yes-instance.
\end{lemma}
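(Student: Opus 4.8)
The plan is to show both directions of the equivalence, where the "if" direction is essentially trivial and the "only if" direction is the substantive part. For the easy direction, suppose $I_{-n'}$ is a yes-instance, witnessed by a sequence of swaps $(\sigma_0,\dots,\sigma_t)$ in the smaller instance with $\sigma_t(k-1)=o_{n'}$ and $\sigma_t(k)=o_1$. Since $I_{-n'}$ is obtained from $I$ by deleting the agents $\{n'+1,\dots,n\}$ together with their endowed objects, the same sequence of swaps is valid in $I$: every swap in the sequence is between neighbors among $\{1,\dots,n'\}$ and is individually rational with respect to the (restricted) preference lists, and adding more objects at the bottom of the preference lists (or more agents to the right of the path) does not affect individual rationality of swaps that do not involve those objects or agents. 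The resulting assignment in $I$ fixes all objects $o_{n'+1},\dots,o_n$ at their initial positions and satisfies the required constraint, so $I$ is a yes-instance.

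For the "only if" direction, suppose $I$ is a yes-instance, so there is a reachable assignment $\sigma$ in $I$ with $\sigma(k-1)=o_{n'}$ and $\sigma(k)=o_1$. The key step is to invoke Observation~\ref{ob_2} and Lemma~\ref{lem_exactly}: from the constraint $\sigma(k)=o_1$, object $o_1$ travels from position $1$ to position $k$, so by Lemma~\ref{lem_exactly} it only ever moves rightward, and its last swap is the swap between agents $k-1$ and $k$ in which $o_{n'}$ is handed left. Hence $n'\geq k$ and $o_{n'}$ only ever moves leftward. Now I would apply Lemma~\ref{lem_track} (or rather its interpretation right after its proof) with the pair $o_{n'}$ (moving left) against any object $o_i$ with $i>n'$: since $o_{n'}$ is moved to the left side, every object initially allocated to the right of it cannot be moved to the left of it at any time; in particular no object $o_i$ with $i>n'$ ever crosses to a position $\leq n'$. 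Combined with the simplification already available (we may also want to invoke Lemma~\ref{lem_simplifying1}-style reasoning, or Proposition 2 of~\citep{DBLP:conf/ijcai/GourvesLW17}) to obtain a witnessing swap sequence in which the objects $o_{n'+1},\dots,o_n$ are never moved at all, one concludes that the entire swap sequence can be taken to involve only agents $\{1,\dots,n'\}$.

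The main obstacle, and the step I would be most careful about, is producing a witnessing swap sequence for $I$ in which the objects $o_{n'+1},\dots,o_n$ genuinely never move — Lemma~\ref{lem_track} tells us such objects stay in the region to the right of $o_{n'}$'s final position, but a priori they could still be shuffled among positions $n'+1,\dots,n$ and even interact with the relevant region via the position $n'$. I would close this gap by an exchange-argument / shortcutting step: given any valid sequence for $I$ realizing the constraint, repeatedly delete any swap that involves an object from $\{o_{n'+1},\dots,o_n\}$, arguing (using Lemma~\ref{lem_nondecreasing}, i.e. no object visits an agent twice, and the fact that $o_1$ and $o_{n'}$ never enter the region strictly right of position $n'$) that such swaps are not needed to carry $o_1$ to $k$ and $o_{n'}$ to $k-1$; the pruned sequence is still individually rational and still realizes the constraint. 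Once all swaps lie within $\{1,\dots,n'\}$, restricting the preference lists to $\{o_1,\dots,o_{n'}\}$ preserves individual rationality (deleted objects only ever sat below the kept ones in the relevant comparisons), so the sequence is valid for $I_{-n'}$ and witnesses that it is a yes-instance. \qed
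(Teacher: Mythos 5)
Your ``if'' direction is fine and matches the paper. The ``only if'' direction has a genuine gap, and it sits exactly where you yourself flagged the difficulty. First, the intermediate claim that Lemma~\ref{lem_track} implies ``no object $o_i$ with $i>n'$ ever crosses to a position $\leq n'$'' is false: that lemma (and the remark after it) only guarantees that such an object always stays to the right of the \emph{current} position of $o_{n'}$, and since $o_{n'}$ itself migrates left down to agent $k-1$, an object $o_i$ with $i>n'$ may legitimately occupy any position in $\{k+1,\dots,n'\}$ once $o_{n'}$ has vacated it. For instance, with $n=4$, $n'=3$, $k=2$ and suitable preferences, the witness consisting of a swap at edge $(2,3)$ exchanging $o_2,o_3$, then a swap at $(3,4)$ exchanging $o_2,o_4$, then a swap at $(1,2)$ exchanging $o_1,o_3$ reaches the target $(o_3,o_1,o_4,o_2)$ while placing $o_4$ at agent $3=n'$.

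Second, and more importantly, the pruning step is asserted rather than proved. A swap you delete may pair an object $o_i$ with $i>n'$ against an object $o_j$ with $j\leq n'$ (as in the example above); after deleting it, $o_j$ sits at a different agent than it did in the original sequence, so the later swaps need no longer be between adjacent agents, need no longer pair the same objects, and need no longer be individually rational --- the deletions can cascade through the whole tail of the sequence. Saying that such swaps ``are not needed'' is a restatement of the lemma, not an argument for it. The paper avoids this entirely by induction on $n'$: it isolates the \emph{first} swap involving $o_{n'}$; before that swap $o_{n'}$ still sits at agent $n'$ and, by Lemma~\ref{lem_track}, acts as a barrier, so no object $o_i$ with $i>n'$ has yet entered $\{1,\dots,n'\}$ and the prefix of the sequence restricted to agents $\{1,\dots,n'\}$ is already clean; after that swap $o_{n'}$ sits at agent $n'-1$ and the induction hypothesis for $n'-1$ finishes the argument. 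If you want to keep a non-inductive pruning argument you would need to prove an exchange lemma of essentially this strength anyway, so I recommend adopting the inductive structure.
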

\begin{proof}
    We prove this lemma by induction on $n'$.
    It is clear that the lemma holds when $n'=2$. Assume that the lemma holds for $n'=n_0-1$. We show that the lemma also holds for $n'=n_0$.
    Let $n'=n_0$. When $I_{-{n'}}$  is a yes-instance, it is obvious that $I$ is also a yes-instance since we can use the same sequence of swaps as the solution to them.
    Next, we consider the converse direction and assume that $I$ is a yes-instance. Then there is a sequence of swaps $\{ \sigma_0,\sigma_1,\dots,\sigma_t\}$ such that $\sigma_t(k-1)=o_{n_0}$ and $\sigma_t(k)=o_1$.
    We consider the first swap $(\sigma_{r-1},\sigma_r)$ including object $o_{n_0}$. Since $o_{n_0}$ will be moved to agent $k-1$ with $k-1< n_0$, by Lemma~\ref{lem_exactly} we know that all swaps including $o_{n_0}$
    will move $o_{n_0}$ to the left side. So in the swap $(\sigma_{r-1},\sigma_r)$, object $o_{n_0}$ is also moved to the left side.
    By Lemma~\ref{lem_track}, we know that no object $o_i$ with $i>n_0$ is moved to the left of $o_{n_0}$ before $\sigma_r$.
    Let $\sigma_r'$ be the assignments of the first $n_0$ agents in $\sigma_r$. So
    we do not need to move any objects $o_i$ with $i>n_0$ to get $\sigma_r'$. Now object $o_{n_0}$ is at the position $n_0-1$.
    Since the lemma holds for $n'=n_0-1$, we know that there is a solution (a sequence of swaps) that does not move any objects $o_i$ with $i>n_0-1$ from $\sigma_r$ to a satisfying assignment.
    Such that there is a solution (a sequence of swaps) to $I$ that does not move any objects $o_i$ with $i>n_0$, which is also a solution to $I_{-{n'}}$. \qed
\end{proof}

By Lemma~\ref{lem_simpl3}, we can ignore all agents on the right side of $n'$ in an $(o_1,o_{n'},k)$-\textsc{Constrained} instance.

\begin{definition}
An $(o_1,o_{n'},k)$-\textsc{Constrained} instance is called \emph{neat} if $n'$ is the last agent.
\end{definition}

Next, we only consider neat $(o_1,o_{n'},k)$-\textsc{Constrained} instances.

\subsection{Characterization of Reachable Assignments}

To solve  neat $(o_1,o_{n'},k)$-\textsc{Constrained} instances,
we will give a full characterization of reachable assignments for them, which
shows that an assignment is reachable if and only if it is \emph{compatible} (defined in Definition~\ref{def_2}).
The compatibility reveals the relation between any two objects in a reachable assignment.

For two integers $x$ and $y$, we use $[x,y]$
to denote the set of integers between $x$ and $y$ (including $x$ and $y$). In the definition, we do not require  $x\leq y$.
So we have that $[x,y]$=$[y,x]$.

Consider a neat $(o_1,o_{n'},k)$-\textsc{Constrained} instance and an assignment $\sigma _t$ such that $\sigma^T_t(o_1)=k$ and $\sigma^T_t(o_{n'})=k-1$.
For any two objects $o_a$ and $o_b$, we let $a'=\sigma^T_t(o_a)$ and $b'=\sigma^T_t(o_b)$.

Note that since we only consider neat $(o_1,o_{n'},k)$-\textsc{Constrained} instances here, we know that all the objects will move before we get the final assignment where agent $k$ gets object $o_1$.

\begin{definition}\label{def_1}
    The two objects $o_a$ and $o_b$ are \emph{intersected} in assignment $\sigma _t$ if $Q=[a,a']\cap [b,b']$ is not an empty set.
\end{definition}

There are three kinds of intersections for two objects in an assignment. In the first two kinds of intersections, the two objects are moved in the same direction, i.e., $(a'-a)(b'-b)>0$.
In the last kind of intersections, the two objects are moved in the two opposite directions, i.e., $(a'-a)(b'-b)<0$.
The following Lemma~\ref{prop1} describes a property of the first two kinds of intersections and Lemma~\ref{prop2} describes a property of the third kind of intersections.

\begin{lemma}\label{prop1}
Let $(\sigma _0,\sigma _1,\dots,\sigma _t)$ be a sequence of swaps, and $o_a$ and $o_b$
be two objects with $1\leq a<b\leq n$. Let $a'=\sigma^T_t(o_a)$, $b'=\sigma^T_t(o_b)$, and $Q=[a,a']\cap [b,b']$. Assume that $Q\neq \emptyset$.\\
\textbf{(a)} If $a'>a$ and $b'>b$, then it holds that $o_a \succ_q o_b$ for all $q\in Q$;\\
\textbf{(b)} If $a'<a$ and $b'<b$, then it holds that $o_b \succ_q o_a$ for all $q\in Q$.
\end{lemma}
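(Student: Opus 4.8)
The plan is to prove both parts in a uniform way: show that agent $q$ holds each of $o_a$ and $o_b$ at some (unique) moment of the sequence, then determine the order in which $q$ receives the two objects, and finally invoke the monotonicity of an agent's held object (Observation~\ref{obs_1}) together with the strictness of preferences.

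First I would record the positional facts. In case (a) we have $a'>a$ and $b'>b$, so by Lemma~\ref{lem_exactly} both objects only ever move to the right; hence for every index $d$ we have $a\le\sigma^T_d(o_a)\le a'$ and $b\le\sigma^T_d(o_b)\le b'$, and each object occupies every integer position between its start and its end, each exactly once by Lemma~\ref{lem_nondecreasing}. Since $q\in Q\subseteq[a,a']\cap[b,b']$ and (because $[b,b']=\{b,b+1,\dots,b'\}$) $q\ge b>a$, there is a unique time $i$ with $\sigma^T_i(o_a)=q$ and a unique time $i'$ with $\sigma^T_{i'}(o_b)=q$.

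The crux is to show $i'<i$, i.e. that $q$ receives $o_b$ strictly before it receives $o_a$. For this I use a no-crossing argument: every prefix $(\sigma_0,\dots,\sigma_d)$ is itself a valid sequence of swaps, and since $o_b$ only moves right we have $\sigma^T_d(o_b)\ge b$; applying Lemma~\ref{lem_track} to this prefix with the pair $a<b$ then gives $\sigma^T_d(o_a)<\sigma^T_d(o_b)$ for all $d$. In particular, at time $i$ — when $o_a$ sits on $q$ — object $o_b$ is strictly to the right of $q$. But $o_b$ started at $b\le q$ and moves rightward one step at a time, so it must have occupied $q$ at some earlier time; thus $i'<i$ (strict because $o_b$ is past $q$ at time $i$, and in the boundary case $b=q$ one simply has $i'=0<i$ since $o_a$ starts at $a\ne q$). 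Now agent $q$ holds $o_b$ at time $i'$ and $o_a$ at the later time $i$, so applying Observation~\ref{obs_1} along $i',i'+1,\dots,i$ and using that preferences are strict and $o_a\ne o_b$ yields $o_a\succ_q o_b$, which is part (a). Part (b) is the mirror image: both objects move left, $o_b$ now starts to the right of $o_a$ and (by the same prefix argument via Lemma~\ref{lem_track}) stays to its right, so $o_b$ reaches $q$ only after $o_a$ has already passed through it; hence $q$ held $o_a$ before $o_b$, and the same monotonicity argument gives $o_b\succ_q o_a$.

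I expect the only delicate point to be this no-crossing step — ensuring Lemma~\ref{lem_track} is legitimately applied to every prefix and that the corner cases where $q$ coincides with a starting position (so one of the visiting times is $0$) still yield the \emph{strict} inequality $i'<i$. Everything else is a direct appeal to Lemmas~\ref{lem_exactly} and~\ref{lem_nondecreasing} and to Observation~\ref{obs_1}.
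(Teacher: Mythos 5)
Your proof is correct and follows essentially the same route as the paper's: both objects visit $q$, Lemma~\ref{lem_track} (applied there in one line, which you justify more carefully via prefixes of the swap sequence) pins down which object reaches $q$ first, and Observation~\ref{obs_1} plus strictness of preferences gives the conclusion. Your explicit prefix/no-crossing argument and the treatment of the boundary case $b=q$ are a sound elaboration of the step the paper leaves implicit.
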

\begin{proof}
    Both objects  $o_a$ and $o_b$ will visit each agent in $Q$ during the sequence of swaps $(\sigma _0,\sigma _1,\dots,\sigma _t)$.

    \textbf{(a)} Since $b'>b$, by Lemma~\ref{lem_track}, we know that for each agent $q\in Q$ object $o_b$ will visit agent $q$ before object $o_a$.
    By Observation~\ref{obs_1}, we know that $o_a \succ_q o_b$.

    \textbf{(b)} Since $a'<a$, by Lemma~\ref{lem_track}, we know that for each agent $q\in Q$ object $o_a$ will visit agent $q$ before object $o_b$.
    By Observation~\ref{obs_1}, we know that $o_b \succ_q o_a$. \qed
\end{proof}

See Figure~\ref{figure_prop1} for an illustration of the two kinds of intersections in Lemma~\ref{prop1}.

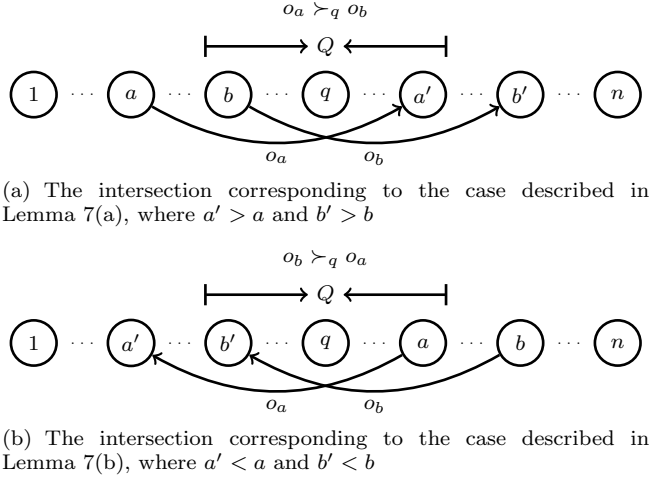
\begin{figure}[htbp]
    \centering
    \subfigure[The intersection corresponding to the case described in Lemma~\ref{prop1}(a), where $a'>a$ and $b'>b$]{
        \begin{tikzpicture}
            [scale = 0.64, line width = 1pt,solid/.style = {circle, draw, fill = black, minimum size = 0.3cm},empty/.style = {circle, draw, fill = white, minimum size = 0.6cm}]
            \node[empty] (A) at (1,1) {$1$};
            \node[] (B) at (2,1) {\tiny$\dots$};
            \node[empty] (C) at (3,1) {$a$};
            \node[] (D) at (4,1) {\tiny$\dots$};
            \node[empty] (E) at (5,1) {$b$};
            \node[] (F) at (6,1) {\tiny$\dots$};
            \node[empty] (G) at (7,1) {$q$};
            \node[] (H) at (8,1) {\tiny$\dots$};
            \node[empty, label = center:$a'$] (I) at (9,1) {};
            \node[] (J) at (10,1) {\tiny$\dots$};
            \node[empty, label = center:$b'$] (K) at (11,1) {};
            \node[] (L) at (12,1) {\tiny$\dots$};
            \node[empty] (M) at (13,1) {$n$};

            \node[label=above:$o_a\succ_q o_b$] (G1) at (7,2) {$Q$};

            \draw[|->] (4.5,2) -- (G1);
            \draw[|->] (9.5,2) -- (G1);
            \draw[->] (C) to[out = 330, in = 210] node[below]{$o_a$} (I);
            \draw[->] (E) to[out = 330, in = 210] node[below]{$o_b$} (K);
        \end{tikzpicture}
    }
    \hspace{1in}
    \subfigure[The intersection corresponding to the case described in Lemma~\ref{prop1}(b), where $a'<a$ and $b'<b$]{
        \begin{tikzpicture}
            [scale = 0.64, line width = 1pt,solid/.style = {circle, draw, fill = black, minimum size = 0.3cm},empty/.style = {circle, draw, fill = white, minimum size = 0.6cm}]
            \node[empty] (A) at (1,1) {$1$};
            \node[] (B) at (2,1) {\tiny$\dots$};
            \node[empty] (C) at (3,1) {$a'$};
            \node[] (D) at (4,1) {\tiny$\dots$};
            \node[empty] (E) at (5,1) {$b'$};
            \node[] (F) at (6,1) {\tiny$\dots$};
            \node[empty] (G) at (7,1) {$q$};
            \node[] (H) at (8,1) {\tiny$\dots$};
            \node[empty, label = center:$a$] (I) at (9,1) {};
            \node[] (J) at (10,1) {\tiny$\dots$};
            \node[empty, label = center:$b$] (K) at (11,1) {};
            \node[] (L) at (12,1) {\tiny$\dots$};
            \node[empty] (M) at (13,1) {$n$};

            \node[label=above:$o_b\succ_q o_a$] (G1) at (7,2) {$Q$};

            \draw[|->] (4.5,2) -- (G1);
            \draw[|->] (9.5,2) -- (G1);
            \draw[->] (I) to[out = 210, in = 330] node[below]{$o_a$} (C);
            \draw[->] (K) to[out = 210, in = 330] node[below]{$o_b$} (E);
        \end{tikzpicture}
    }

    \caption{An illustration of the two kinds of intersections in Lemma~\ref{prop1}}
    \label{figure_prop1}
\end{figure}

\begin{lemma}\label{prop2}
Let $(\sigma _0,\sigma _1,\dots,\sigma _t)$ be a sequence of swaps for a neat $(o_1,o_{n'},k)$-\textsc{Constrained} instance
such that $\sigma^T_t(o_1)=k$ and $\sigma^T_t(o_{n'})=k-1$, and $o_a$ and $o_b$
be two objects with $1\leq a<b\leq n'$. Let $a'=\sigma^T_t(o_a)$ and $b'=\sigma^T_t(o_b)$.
Assume that $a'>a$, $b'<b$ and $Q=[a,a']\cap [b,b']\neq \emptyset$. Let $Q'=[a+1,a']\cap [b,b']$.\\
\textbf{(a)} There is a swap including $o_a$ and $o_b$ which happens between agent $c-1$ and $c$, where $c=a'+b'-k+1\in Q'$; \\
\textbf{(b)} It holds that $o_b \succ_q o_a$ for all $\max(a,b')\leq q < c$, and $o_a \succ_q o_b$ for all $c \leq q \leq \min(a',b)$.
\end{lemma}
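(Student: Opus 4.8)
The statement has two parts, and part~(b) follows quickly once part~(a) has located the swap, so the real work is part~(a); within it the identity $c=a'+b'-k+1$ is the crux. For the existence claim in~(a), note first that $\sigma^T_t(o_a)=a'>a$ and $\sigma^T_t(o_b)=b'<b$, so by Lemma~\ref{lem_exactly} $o_a$ only ever moves rightward and $o_b$ only ever moves leftward; also $Q\neq\emptyset$ together with $a<b$ forces $b'\leq a'$, hence $b'<a'$ since the two objects occupy distinct positions in $\sigma_t$. Track the potential $d(s)=\sigma^T_s(o_a)-\sigma^T_s(o_b)$: a swap touching neither object leaves $d$ fixed, a swap touching exactly one of them raises $d$ by $1$, and a swap touching both raises $d$ by $2$ and can occur only when they are adjacent with $o_a$ immediately to the left, i.e.\ when $d=-1$. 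Thus $d$ is nondecreasing, runs from $d(0)=a-b\leq -1$ to $d(t)=a'-b'\geq 1$, and is never $0$; the only way to move from $\leq -1$ to $\geq 1$ without hitting $0$ is a $-1\to +1$ jump, which must come from a swap including both $o_a$ and $o_b$, and after it $d\geq 1$ forever, so that swap is unique. Say it occurs between agents $c-1$ and $c$, with $o_a$ moving rightward; then $o_a$ reaches $c-1$ before it (so $c-1\geq a$) and goes on from $c$ to $a'$ (so $c\leq a'$), and dually $b\geq c$ and $b'\leq c-1$, giving $c\in[a+1,a']\cap[b,b']=Q'$ already.

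To compute $c$, the key structural input is that right-movers never change their mutual left-to-right order, and neither do left-movers: for right-movers $o_i,o_j$ with $i<j$ one has $\sigma^T_s(o_j)\geq j$ for every $s$, so Lemma~\ref{lem_track} on the prefix $(\sigma_0,\dots,\sigma_s)$ gives $\sigma^T_s(o_i)<\sigma^T_s(o_j)$; symmetrically a left-mover always sits weakly left of its start. Consequently a right-mover $R$ and left-mover $L$ swap (exactly once, if at all) precisely when $\mathrm{index}(R)<\mathrm{index}(L)$ and $\sigma^T_t(R)>\sigma^T_t(L)$, and, since final positions are monotone in the index within right-movers and within left-movers, the left-movers overtaken by a given $R$ form an initial segment (by index) of the left-movers of index $>\mathrm{index}(R)$, and dually the right-movers overtaken by $L$ form a final segment of the right-movers of index $<\mathrm{index}(L)$. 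I would then examine boundary $c-1\,|\,c$: the swaps occurring there, ordered in time, pair rightward crossers (taken in decreasing index) with leftward crossers (in increasing index), the $p$-th with the $p$-th. Identifying the $(o_a,o_b)$-swap as the $p^\ast$-th and counting the right-movers and left-movers of index strictly between $a$ and $b$ (all such right-movers end $\geq c$ and all such left-movers end $<c$, by order preservation and $c\le a'$, $b'\le c-1$) yields $c=a+1+\#\{j:\ a<j<b,\ o_j\text{ is a left-mover}\}$.

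Finally I would bring in $k$. One has $a'\geq k$ (if $a=1$ then $o_a=o_1$ and $a'=k$; otherwise Lemma~\ref{lem_track} on $o_1,o_a$ gives $k<a'$) and dually $b'<k$. Since the instance is neat every object moves and $o_{n'}$, having the largest index, is a left-mover; because $a'\geq k>k-1=\sigma^T_t(o_{n'})$ the right-mover $o_a$ overtakes $o_{n'}$, hence by the initial-segment property overtakes every left-mover of index $>a$, so $a'-a=\#\{j>a:\ o_j\text{ left-mover}\}$; dually $o_b$ overtakes $o_1$, hence every right-mover of index $<b$, so $b-b'=\#\{j<b:\ o_j\text{ right-mover}\}$. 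Applying the boundary identity of the previous paragraph to the pair $(o_1,o_{n'})$ --- whose swap exists by the same potential argument, is the last swap of $o_1$, and hence lies on boundary $k-1\,|\,k$ --- gives $k=2+\#\{1<j<n':\ o_j\text{ left-mover}\}$, i.e.\ (adding $o_{n'}$) the total number of left-movers is $k-1$. Substituting these three counts into $c=a+1+\#\{a<j<b:\ o_j\text{ left-mover}\}$ and simplifying with the trivial identity ``(left-movers of index $\leq x$)$+$(right-movers of index $\leq x$)$=x$'' collapses $a'+b'-k+1$ to exactly this expression, so $c=a'+b'-k+1$ (and $b'<c\leq a'$ is immediate from this closed form and $a'\geq k\geq b'+1$). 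I expect this last step --- the decreasing/increasing-index bookkeeping at the boundary and the reduction to the ``anchor'' pair $(o_1,o_{n'})$ --- to be the main technical obstacle; the rest is routine.

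Part~(b) is then short. Write the $(o_a,o_b)$-swap as $\sigma_{r-1}\to\sigma_r$, so $\sigma^T_{r-1}(o_a)=c-1$, $\sigma^T_r(o_a)=c$, $\sigma^T_{r-1}(o_b)=c$, $\sigma^T_r(o_b)=c-1$. Since $o_a$ moves monotonically rightward from agent $a$ to agent $a'$ and is at $c-1$ at time $r-1$, it visits agent $q$ at some time $\leq r-1$ for every $q\in[a,c-1]$ and at some time $\geq r$ for every $q\in[c,a']$; dually $o_b$, moving monotonically leftward from $b$ to $b'$, visits agent $q$ at some time $\leq r-1$ for $q\in[c,b]$ and at some time $\geq r$ for $q\in[b',c-1]$. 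Hence for $q\in[\max(a,b'),c-1]$ agent $q$ holds $o_a$ at one moment and $o_b$ at a strictly later moment, so Observation~\ref{obs_1} and strictness of preferences give $o_b\succ_q o_a$; and for $q\in[c,\min(a',b)]$ agent $q$ holds $o_b$ and then $o_a$, giving $o_a\succ_q o_b$. These are exactly the two assertions in~(b).
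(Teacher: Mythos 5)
Your proposal is correct and follows essentially the same route as the paper: both locate the unique $o_a$--$o_b$ swap by counting how many opposite-direction objects with index strictly between $a$ and $b$ must be crossed first (your closed form $c=a+1+\#\{a<j<b: o_j \text{ left-mover}\}$ is exactly the paper's $c=b-(b-b'-a'+k-1)$ under the identity relating the left-mover and right-mover counts on the open interval), and part (b) is the same visit-order argument via Observation~\ref{obs_1}. The only differences are cosmetic --- you do the bookkeeping from $o_a$'s side where the paper works from $o_b$'s side, and your potential-function argument for the existence and uniqueness of the joint swap makes explicit a step the paper passes over quickly.
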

\begin{proof}
    Since $Q\neq \emptyset$, we know that there exists a swap including $o_a$ and $o_b$ in the sequence of swaps. We determine the position of this swap. We first consider \textbf{(a)}.

    By Lemma~\ref{lem_exactly}, we know that during the sequence of swaps each object will be moved in at most one direction.
    Let $o_{r_i}=\sigma_t(i)$ for all $1\leq i\leq n'$. Since $\sigma^T_t(o_1)=k$, by Lemma~\ref{lem_track} we know that $i<r_i$ for all $1\leq i<k$. Thus objects $o_{r_i}$ for all $1\leq i<k$ are moved
    to the left side. Since $\sigma^T_t(o_{n'})=k-1$, by Lemma~\ref{lem_track} we know that $i>r_i$ for all $k\leq i\leq n'$.
    Thus objects $o_{r_i}$ for all $k\leq i\leq n'$ are moved to the right side.

    Since $\sigma^T_t(o_1)=k$ and $o_a$ is moved to the right side, by Lemma~\ref{lem_track} we know that $a'\geq k$.
    By Lemma~\ref{lem_track} again, we know that   $1\leq r_i\leq a$   for each $k\leq i \leq a'$.
    For each object in $\{o_{a+1},\dots,o_b\}$, if it is moved to the right side then it can only by assigned to agents in $\{k, k+1,\dots, a'\}$ in $\sigma _t$ by Lemma~\ref{lem_track}.
    So there are exactly $a'-k+1$ objects in $\{o_{a+1},\dots,o_b\}$ that are moved to the right side. Furthermore, all of them are moved to the right of $k-1$ in $\sigma _t$.

    Since object $o_b$ is moved to the left side from $b$ to $b'$, by Lemma~\ref{lem_track} we know that there are exactly $b-b'$ objects in
    $\{o_1,\dots,o_b\}$ that will be moved to the right of $o_b$ in $\sigma _t$, which are all moved in the right direction.
    We further show that no object $o_x$ in $\{o_1,\dots,o_b\}$ moved in the right direction will stay at the left of $b'$ in $\sigma _t$.
    Since $\sigma^T_t(o_{n'})=k-1$ and $o_b$ is moved to the left side,
    by Lemma~\ref{lem_track} we know that $b'\leq k-1$, and all objects that are moved to the right side are at the right of $o_1$ in $\sigma_t$. If $o_x$ exists, then  $\sigma_t^T(o_x)>k>b'$, which means that $o_x$ is moved to the right of $o_b$ in $\sigma_t$, a contradiction.
    So there are exactly $b-b'$ objects in $\{ o_1,\dots,o_b\}$ that are moved to the right side.

    By the above two results, we know that the number of objects in $\{o_{a+1},\dots,o_b\}$ that are moved in the right direction is $b-b'-a'+k-1$.
    Since     $o_b$ should be swapped with $o_a$, we know that $o_b$ needs to be swapped with all objects in $\{o_{a+1},\dots,o_b\}$ that are moved in the right direction.
    So right before the swap including $o_a$ and $o_b$, object $o_b$ is at agent $b-(b-b'-a'+k-1)=a'+b'-k+1$. Thus, $c=a'+b'-k+1$.
    We know that \textbf{(a)} holds.

    Next, we consider \textbf{(b)}.
    For any agent $q$ such that $\max(a,b')\leq q < c$, both $o_a$ and $o_b$ will visit it. By Lemma~\ref{lem_track}, we know that $o_a$ will visit $q$ before $o_b$.
    By Observation~\ref{obs_1}, we know that $o_b \succ_q o_a$.
    For any agent $q$ such that $c \leq q \leq \min(a',b)$, both $o_a$ and $o_b$ will visit it. By Lemma~\ref{lem_track}, we know that $o_b$ will visit $q$ before $o_a$.
    By Observation~\ref{obs_1}, we know that $o_a \succ_q o_b$. Thus \textbf{(b)} holds.\qed
\end{proof}

See Figure~\ref{figure_prop2} for an illustration of the intersection in Lemma~\ref{prop2}.

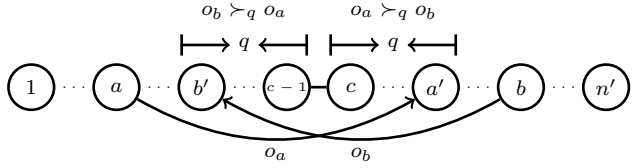
\begin{figure}[htbp]
    \centering
    \begin{tikzpicture}
        [scale = 0.56, line width = 1pt,solid/.style = {circle, draw, fill = black, minimum size = 0.3cm},empty/.style = {circle, draw, fill = white, minimum size = 0.6cm}]
        \node[empty] (A) at (1,1) {$1$};
        \node[] (B) at (2,1) {\tiny$\dots$};
        \node[empty] (C) at (3,1) {$a$};
        \node[] (D) at (4,1) {\tiny$\dots$};
        \node[empty, label = center:$b'$] (E) at (5,1) {};
        \node[] (F) at (6,1) {\tiny$\dots$};
        \node[empty, label = center:\tiny$c-1$] (G0) at (7,1) {};
        \node[empty] (G00) at (8.5,1) {$c$};
        \node[] (H) at (9.5,1) {\tiny$\dots$};
        \node[empty, label = center:$a'$] (I) at (10.5,1) {};
        \node[] (J) at (11.5,1) {\tiny$\dots$};
        \node[empty] (K) at (12.5,1) {$b$};
        \node[] (L) at (13.5,1) {\tiny$\dots$};
        \node[empty,label= center:$n'$] (M) at (14.5,1) {};

        \node[label=above:$o_b\succ_q o_a$] (F1) at (6,2) {$q$};
        \node[label=above:$o_a\succ_q o_b$] (H1) at (9.5,2) {$q$};

        \draw (G0) -- (G00);
        \draw[|->] (4.5,2) -- (F1);
        \draw[|->] (7.5,2) -- (F1);
        \draw[|->] (8,2) -- (H1);
        \draw[|->] (11,2) -- (H1);
        \draw[->] (C) to[out = 330, in = 210] node[below]{$o_a$} (I);
        \draw[->] (K) to[out = 210, in = 330] node[below]{$o_b$} (E);
    \end{tikzpicture}
    \caption{An illustration of the intersection in Lemma~\ref{prop2}, where $a'>a$ and $b'<b$}
    \label{figure_prop2}
\end{figure}

Based on Lemmas~\ref{prop1} and~\ref{prop2}, we give the conditions for assignment $\sigma _t$ to be a feasible assignment.

\begin{definition}\label{def_2}
    We say that a pair of objects $o_a$ and $o_b$ ($a< b$) are \emph{compatible} in assignment $\sigma_t$ if they are either not intersected or intersected and satisfying one of the following:\\
    \begin{enumerate}
        \item when $a'>a$ and $b'>b$, it holds that $a'<b'$ (corresponding to Lemma~\ref{lem_track}) and
        $o_a \succ_q o_b$ for all agents $q\in Q$ (corresponding to  Lemma~\ref{prop1}(a));\\
        \item when $a'<a$ and $b'<b$, it holds that $a'<b'$ (corresponding to Lemma~\ref{lem_track}) and  $o_b \succ_q o_a$ for all agents $q\in Q$ (corresponding to Lemma~\ref{prop1}(b));\\
        \item when $a'>a$ and $b'<b$, it holds that $c=a'+b'-k+1\in Q'=Q\setminus \{a\}$, $o_b \succ_q o_a$ for all $\max(a,b')\leq q < c$, and $o_a \succ_q o_b$ for all $c \leq q \leq \min(a',b)$ (corresponding to Lemma~\ref{prop2}).\\

    \end{enumerate}
\end{definition}

\begin{definition}\label{def_3}
An assignment $\sigma_t$ is \emph{compatible} if it holds that $\sigma_t(i)\neq o_i$ for any agent $i$ and any pair of objects in it are compatible.
\end{definition}

Lemma~\ref{prop1} and Lemma~\ref{prop2} imply that
\begin{lemma}\label{necessary}
If  $\sigma _t$ is a reachable assignment for a neat $(o_1,o_{n'},k)$-\textsc{Constrained} instance
such that $\sigma^T_t(o_1)=k$ and $\sigma^T_t(o_{n'})=k-1$, then $\sigma _t$ is compatible.
\end{lemma}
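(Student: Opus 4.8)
The plan is to verify directly the two conditions in Definition~\ref{def_3}: that $\sigma_t(i)\neq o_i$ for every agent $i$, and that every pair of objects is \emph{compatible} in the sense of Definition~\ref{def_2}. Since $\sigma_t$ is reachable, fix a witnessing sequence $(\sigma_0,\sigma_1,\dots,\sigma_t)$; the instance being a neat $(o_1,o_{n'},k)$-\textsc{Constrained} one, it has exactly $n'$ agents and objects, and I abbreviate $a'=\sigma^T_t(o_a)$.

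The first step is to show no object stays put (this is the remark made just before Definition~\ref{def_1}). Writing $o_{r_i}=\sigma_t(i)$, i.e.\ $\sigma^T_t(o_{r_i})=i$: for $i<k$, if $r_i\leq i$ then (noting $r_i\neq 1$ because $\sigma^T_t(o_1)=k>i$) Lemma~\ref{lem_track} applied to $o_1,o_{r_i}$, using $\sigma^T_t(o_{r_i})=i\geq r_i$, gives $k=\sigma^T_t(o_1)<i$, a contradiction; for $i\geq k$, if $r_i\geq i$ then (noting $r_i\neq n'$ because $\sigma^T_t(o_{n'})=k-1<i$) Lemma~\ref{lem_track} applied to $o_{r_i},o_{n'}$, using $\sigma^T_t(o_{r_i})=i\leq r_i$, gives $i<\sigma^T_t(o_{n'})=k-1$, again a contradiction. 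So $\sigma_t(i)\neq o_i$ in all cases.

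The second step is to fix an arbitrary pair $o_a,o_b$ with $1\le a<b\le n'$ and show it is compatible. If $Q=[a,a']\cap[b,b']=\emptyset$ this holds by definition, so assume $Q\neq\emptyset$. By the first step both objects have moved, and by Lemma~\ref{lem_exactly} each moves in a single direction, leaving four sign patterns for $(a'-a,\,b'-b)$. The pattern $a'<a,\ b'>b$ is impossible, since then $[a',a]$ lies weakly left of $b$ and $[b,b']$ weakly right of $a$, forcing $Q=\emptyset$. If $a'>a$ and $b'>b$: $a'<b'$ by Lemma~\ref{lem_track} (via $b'\ge b$) and $o_a\succ_q o_b$ on $Q$ by Lemma~\ref{prop1}(a) --- case~1 of Definition~\ref{def_2}. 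If $a'<a$ and $b'<b$: $a'<b'$ by Lemma~\ref{lem_track} (via $a'\le a$) and $o_b\succ_q o_a$ on $Q$ by Lemma~\ref{prop1}(b) --- case~2. If $a'>a$ and $b'<b$: Lemma~\ref{prop2} applies (its hypotheses $\sigma^T_t(o_1)=k$, $\sigma^T_t(o_{n'})=k-1$, $Q\neq\emptyset$ all hold), and parts (a),(b) give exactly the contents of case~3, with $c=a'+b'-k+1$ and $[a+1,a']\cap[b,b']=Q\setminus\{a\}$. Hence every admissible pair is compatible, so $\sigma_t$ is compatible.

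I expect the first step --- excluding $\sigma_t(i)=o_i$ --- to be the main obstacle, since it requires instantiating Lemma~\ref{lem_track} at precisely the right pairs ($o_1$ with the object at a position $<k$, $o_{n'}$ with the object at a position $\ge k$) and checking the side indices are distinct. After that, the argument is bookkeeping over the four sign patterns of $(a'-a,\,b'-b)$, the only non-mechanical point being that $a'<a,\ b'>b$ is incompatible with $Q\neq\emptyset$, which is what makes cases 1--3 of Definition~\ref{def_2} exhaustive.
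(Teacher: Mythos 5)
Your proposal is correct and follows essentially the same route as the paper: verify the two conditions of Definition~\ref{def_3}, then classify an intersecting pair by the sign pattern of $(a'-a,\,b'-b)$, rule out the pattern $a'<a,\ b'>b$ as forcing $Q=\emptyset$, and invoke Lemma~\ref{lem_track} together with Lemmas~\ref{prop1} and~\ref{prop2} for the three remaining cases. The only (harmless) deviation is in excluding $\sigma_t(i)=o_i$: the paper argues that such an agent would never participate in a swap and would block $o_1$ or $o_{n'}$ on the path, whereas you derive the same conclusion by instantiating Lemma~\ref{lem_track} against $o_1$ (for $i<k$) or $o_{n'}$ (for $i\geq k$); both are valid one-step arguments.
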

\begin{proof}
If there is an agent $i$ such that $i\leq k$ (resp., $i\geq k$) and $\sigma(i)= o_i$, then by Observation~\ref{obs_1} we know that agent $i$ does not participate in any swap
and then object $o_1$ cannot go to agent $k$ (resp., object $o_{n'}$ cannot go to agent $k-1$) since only two adjacent agents can participate in a swap.
So $\sigma(i)\neq o_i$ holds for any agent $i$.

For any pair of intersected objects $o_a$ and $o_b$, there are only three cases that are corresponding to the three items in Definition~\ref{def_2}.
By  Lemma~\ref{prop1} and Lemma~\ref{prop2}, we know that they must be compatible. \qed
\end{proof}

\begin{lemma}\label{oneswap}
Let $\sigma _t$ be a compatible assignment for a neat $(o_1,o_{n'},k)$-\textsc{Constrained} instance
such that $\sigma^T_t(o_1)=k$ and $\sigma^T_t(o_{n'})=k-1$.
For any two objects $o_{x-1}$ and $o_x$, if $\sigma^T_t(o_{x-1})> x-1$ and $\sigma^T_t(o_{x})< x$, then the swap between agents $x-1$ and $x$ in $\sigma_0$ is feasible. Let $\sigma_1$ denote the assignment after the swap between $x-1$ and $x$ in $\sigma_0$.
Then $\sigma _t$ is still compatible by taking $\sigma _1$ as the initial endowment.
\end{lemma}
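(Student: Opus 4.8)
The plan is to establish the two assertions in turn. Set $a=x-1$, $b=x$, $a'=\sigma^T_t(o_{x-1})$ and $b'=\sigma^T_t(o_x)$; since positions are integers the hypotheses read $a'\ge x$ and $b'\le x-1$, so $a<b$, $a'>a$ and $b'<b$. First I would note that $o_{x-1}$ and $o_x$ are intersected in $\sigma_t$: the segments $[a,a']$ and $[b,b']$ each contain both $x-1$ and $x$, so $Q=[a,a']\cap[b,b']=\{x-1,x\}\neq\emptyset$. As $\sigma_t$ is compatible and this is the third kind of intersection ($a'>a$, $b'<b$), item~3 of Definition~\ref{def_2} must hold: it forces $c=a'+b'-k+1\in Q'=Q\setminus\{a\}=\{x\}$, hence $c=x$ — equivalently the identity $a'+b'=x+k-1$, which I will reuse — and it yields $o_x\succ_{x-1}o_{x-1}$ (the range $\max(a,b')=x-1\le q<c=x$ collapses to $q=x-1$) and $o_{x-1}\succ_x o_x$ (the range $c=x\le q\le\min(a',b)=x$ collapses to $q=x$). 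In $\sigma_0$ the neighbours $x-1$ and $x$ hold $o_{x-1}$ and $o_x$, and the trade handing $o_x$ to agent $x-1$ and $o_{x-1}$ to agent $x$ strictly improves both, so it is a valid swap producing $\sigma_1$. This proves the first claim.

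For the second claim, I would take $\sigma_1$ as the new endowment and observe that relative to $\sigma_1$ the starting position of $o_{x-1}$ is $x$ and that of $o_x$ is $x-1$, while every other object and every final position $\sigma^T_t(\cdot)$ is unchanged. Hence each object's ``travel segment'' (the interval between its start and $\sigma^T_t$) is unchanged except that $o_{x-1}$'s passes from $[x-1,a']$ to $[x,a']$ and $o_x$'s from $[b',x]$ to $[b',x-1]$: each loses exactly one endpoint, stays an interval, and becomes strictly smaller only for these two objects. I then check Definition~\ref{def_2} pair by pair. A pair avoiding both $o_{x-1}$ and $o_x$ is untouched, so its compatibility is inherited. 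For the pair $\{o_{x-1},o_x\}$ itself the two new segments $[x,a']$ and $[b',x-1]$ are disjoint (because $b'\le x-1<x\le a'$), so after the swap this pair is no longer intersected and is trivially compatible.

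It remains to handle the pairs $\{o_{x-1},o_j\}$ and $\{o_x,o_j\}$ with $j\notin\{x-1,x\}$, and the condition $\sigma_t(i)\neq\sigma_1(i)$. For these pairs the ordering test $a'<b'$ of items~1–2 concerns only final positions, the quantity $c=a'+b'-k+1$ of item~3 depends only on final positions and $k$, and each preference range sitting inside $Q$ — namely $Q$ in items~1–2, and $[\max(a,b'),c)$ and $[c,\min(a',b)]$ in item~3 — can only stay the same or shrink once the start of $o_{x-1}$ moves up by one or the start of $o_x$ moves down by one; so the relevant case of Definition~\ref{def_2} still holds, the only genuine subtlety being that in item~3 the set $Q'=Q\setminus\{a\}$ shifts and one must confirm $c$ stays in it, which I would derive from the identity $a'+b'=x+k-1$ (it shows $c=x$ for such a pair would force a second object to share a final position with $o_x$ or $o_{x-1}$, which is impossible). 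The condition $\sigma_t(i)\neq\sigma_1(i)$ is inherited at $i\notin\{x-1,x\}$; at $i\in\{x-1,x\}$ equality can occur only in the boundary cases $x=2$ (the identity then forces $\sigma^T_t(o_x)=x-1$) or $x=n'$ (it forces $\sigma^T_t(o_{x-1})=n'$), where one endpoint agent has already reached its $\sigma_t$-object and may be deleted, leaving a strictly smaller neat constrained instance on which $\sigma_t$ is still compatible — so this is folded into the induction that uses the lemma rather than being a true exception. I expect this last paragraph to be the main obstacle: verifying in item~3 that $c$ remains inside the shrunken $Q'$ and that the sub-ranges $[\max(a,b'),c)$ and $[c,\min(a',b)]$ shrink rather than grow, and disposing cleanly of the two endpoint-agent-finished cases.
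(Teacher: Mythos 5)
Your proposal is correct and follows essentially the same route as the paper's proof: item~3 of Definition~\ref{def_2} applied to the pair $(o_{x-1},o_x)$ forces $c=x$ and the two preference relations that make the swap individually rational, and compatibility is preserved because $c$ depends only on final positions, the quantified ranges can only shrink, and $c=x$ for any other pair would force two objects onto the same final agent. The only divergence is that you explicitly flag the degenerate cases where $\sigma_t(i)=\sigma_1(i)$ for $i\in\{x-1,x\}$ (possible only when $x=2$ or $x=n'$) and defer them to the deletion step in the induction of Lemma~\ref{sufficient}; the paper's proof passes over this point silently, so your treatment is if anything slightly more careful.
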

\begin{proof}
    Consider the two objects $o_{x-1}$ and $o_x$, where $o_{x-1}$ is moved  in the right direction and $o_x$ is moved in the left direction. They are compatible.
    By item 3 in the definition of the compatibility, we know that $c=x$, which implies the swap between agents $x-1$ and $x$ is feasible in $\sigma_0$.

    Next, we show that if we take $\sigma _1$ as the initial endowment, the assignment $\sigma _t$ is still compatible.
    Compared with $\sigma_0$, the endowment positions of two objects are changed.
    To check whether $\sigma _t$ is compatible with $\sigma _1$ (being the initial endowment),
    we only need to check the compatibility of object pairs involving at least one of the two objects $o_{x-1}$
    and $o_x$.

    Now $\sigma_1(x-1)=o_{x}$, $\sigma_1(x)=o_{x-1}$ and $\sigma_1(i)=o_{i}$ for all other $i\notin\{ x-1,x\}$.

    First, we consider object pair $o_{x}$ and an object $o_i$ with $ i \geq x$. If $o_{x}$ and $o_i$ are intersected, then the
    intersection can only be the case described in Lemma~\ref{prop1}\textbf{(b)}.
    Item 2 in the definition of compatibility will hold because only the value of $a$ changes to a smaller value from $x$ to $x-1$ and the domain of $Q$ will not increase.

    Second, we consider object pair $o_{x}$ and an object $o_i$ with $1\leq i \leq x-2$.
    If $o_{x}$ and $o_i$ are intersected, there are two possible cases.
    When $\sigma_t(o_i)<i$, the
    intersection will be the case described in Lemma~\ref{prop1}\textbf{(b)}.
    Item 2 in the definition of compatibility will still hold because only the value of $b$ changes to a smaller value from $x$ to $x-1$
    and the domain of $Q$ will not increase.
    When $\sigma_t(o_i)>i$, the
    intersection will be the case described in Lemma~\ref{prop2}.
    We show that item 3 in the definition of compatibility will hold.
    The value of $c$ for $o_a=o_i$ and $o_b=o_x$ is the same no matter taking $\sigma_0$ or $\sigma_1$ as the initial endowment,
    since none of $a'$, $b'$ and $k$ is changed. Note that when taking $\sigma_0$ as the initial endowment  the value of $c$ is $x$ for $o_a=o_{x-1}$ and $o_b=o_x$. If the value of $c$ is also $x$ for $o_a=o_{i}$ and $o_b=o_x$ ($i\neq x-1$), then we will get a contradiction that both $o_{x-1}$ and $o_i$ will be assigned to the same agent in $\sigma_t$ by the computation formula of $c$.
    So we know that $c\neq x$ for $o_a=o_{i}$ and $o_b=o_x$. Thus, we get $c\in Q'=[i+1, \sigma _t^T(o_i)]\cap [\sigma _t^T(o_x) ,x-1]$ because  $c\in Q=[i+, \sigma _t^T(o_i)]\cap [\sigma _t^T(o_x),x]$ by $\sigma _t$ being compatible with $\sigma _0$ and $c\neq x$.
    After taking $\sigma_1$ as the initial endowment, the values of $a, a',c$ and $b'$ will not change, and the value of $b$ changes from $x$ to a smaller value $x-1$. We can see that the follows still hold:
    $o_b \succ_q o_a$ for all $\max(a,b')\leq q < c$, and $o_a \succ_q o_b$ for all $c \leq q \leq \min(a',b)$.

    Third, we consider $o_{x-1}$ and an object $o_i$ with $1\leq i \leq x-1$.
    If $o_{x-1}$ and $o_i$ are intersected, then the
    intersection can only be the case described in Lemma~\ref{prop1}\textbf{(a)}.
    Item 1 in the definition of compatibility will still hold because only the value of $b$ changes to a bigger value from $x-1$ to $x$ and the domain of $Q$ will not increase.

    Fourth, we consider $o_{x-1}$ and an object $o_i$ with $i \geq x$.
    If $o_{x-1}$ and $o_i$ are intersected, there are two possible cases.
    When $\sigma_t(o_i)>i$, the
    intersection will be the case described in Lemma~\ref{prop1}\textbf{(a)}.
    Item 1 in the definition of compatibility will still hold because only the value of $a$ changes to a bigger value from $x-1$ to $x$
    and the domain of $Q$ will not increase.
    When $\sigma_t(o_i)<i$, the
    intersection will be the case described in Lemma~\ref{prop2}. Analogously, we use similar arguments for the second case,
    we can prove that item 3 in the definition of compatibility holds.

    So if $\sigma _t$ is compatible by taking $\sigma _0$ as the initial endowment, then it is compatible by taking $\sigma _1$ as the initial endowment.\qed
\end{proof}

Based on Lemma~\ref{oneswap} we will prove the following lemma.

\begin{lemma}\label{sufficient}
Let $\sigma _t$ be an assignment for a neat $(o_1,o_{n'},k)$-\textsc{Constrained} instance
such that $\sigma^T_t(o_1)=k$ and $\sigma^T_t(o_{n'})=k-1$.
If $\sigma _t$ is compatible, then $\sigma _t$ is a reachable assignment.
\end{lemma}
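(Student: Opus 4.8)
The plan is to prove the converse of Lemma~\ref{necessary} by an explicit constructive induction that peels off one swap at a time from a compatible assignment $\sigma_t$, using Lemma~\ref{oneswap} as the engine. The key observation is that in a compatible assignment for a neat instance, every object has $\sigma_t(i)\neq o_i$, so (by Lemma~\ref{lem_track}, reflected in items~1--2 of compatibility) every agent $i<k$ receives an object from strictly to its right and every agent $i\geq k$ receives an object from strictly to its left. In particular agent $k-1$ receives something from the right and agent $k$ receives $o_1$ from the left, so somewhere in between there is a "crossing point": there exists an index $x$ with $\sigma^T_t(o_{x-1})>x-1$ and $\sigma^T_t(o_x)<x$. (Formally: look at the Boolean sequence "the object endowed at position $i$ moves right"; it is true at $i=1,\dots$ near the left and false near $k$, so it flips somewhere; more carefully one argues that the object currently at position $x-1$ in $\sigma_0$ wants to go right while the one at $x$ wants to go left, using that each object moves in exactly one direction by Lemma~\ref{lem_exactly} and the direction is determined by whether its target is left or right of its origin.)

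First I would make this crossing-point claim precise and prove it: since $\sigma^T_t(o_1)=k>1$ the object $o_1$ at position $1$ moves right, and since $\sigma^T_t(o_{n'})=k-1<n'$ the object $o_{n'}$ at position $n'$ moves left; let $x$ be the smallest index such that the object endowed at position $x$ moves left (this exists, $\le n'$, and is $\ge 2$). Then $o_{x-1}$ moves right and $o_x$ moves left, which is exactly the hypothesis of Lemma~\ref{oneswap}. Applying Lemma~\ref{oneswap}, the swap between agents $x-1$ and $x$ in $\sigma_0$ is feasible, yielding $\sigma_1$, and $\sigma_t$ remains compatible when $\sigma_1$ is taken as the new initial endowment.

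Next I would set up the induction. The natural measure is $\Phi(\sigma_0,\sigma_t)=\sum_{i}|\,\sigma^T_t(o_i)-i\,|$, the total displacement, which by Lemma~\ref{lem_exactly} equals twice the number of swaps needed; each application of Lemma~\ref{oneswap} moves $o_{x-1}$ one step right toward its target and $o_x$ one step left toward its target, so $\Phi$ drops by exactly $2$. The base case is $\Phi=0$, i.e.\ $\sigma_t=\sigma_0$; but for a neat instance a compatible $\sigma_t$ with $\sigma^T_t(o_1)=k>1$ cannot equal $\sigma_0$, so the real base case is the minimum value of $\Phi$ over compatible assignments, which is handled by noting that whenever $\sigma_t\neq\sigma_0$ the crossing-point argument produces a legal swap — so the induction never gets stuck. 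Thus after finitely many swaps we reach $\sigma_t$, exhibiting a feasible sequence $(\sigma_0,\sigma_1,\dots,\sigma_t)$, and $\sigma_t$ is reachable.

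The main obstacle — and the place I would spend the most care — is verifying that the crossing point $x$ always exists and that $o_{x-1}$, $o_x$ satisfy the precise directional hypotheses of Lemma~\ref{oneswap} (namely $\sigma^T_t(o_{x-1})>x-1$ and $\sigma^T_t(o_x)<x$, not merely "moves right/left" in some looser sense), drawing the direction information from Lemma~\ref{lem_exactly} and the position constraints from Lemma~\ref{lem_track} applied to $\sigma^T_t(o_1)=k$ and $\sigma^T_t(o_{n'})=k-1$. Once that structural claim is nailed down, the rest is the bookkeeping already supplied by Lemma~\ref{oneswap} (compatibility is preserved) plus the monotone potential $\Phi$ to guarantee termination; I do not expect any difficulty there. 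A secondary point to state cleanly is that after the swap the instance with $\sigma_1$ as endowment is still a neat $(o_1,o_{n'},k)$-\textsc{Constrained} instance with the same $\sigma_t$ as target, so the induction hypothesis applies verbatim.
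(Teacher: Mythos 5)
Your overall strategy is the same as the paper's: use Lemma~\ref{oneswap} as the engine, peel off one swap at a time, and argue by induction that compatibility is preserved until $\sigma_t$ is reached. Your identification of the initial crossing point is correct, and your argument for it is sound: since $\sigma_t$ is compatible, no object is at its endowed position, so taking $x$ to be the smallest index whose endowed object moves left forces $o_{x-1}$ to move right, and Lemma~\ref{oneswap} applies. The potential $\Phi$ is also a reasonable termination measure.

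The gap is exactly at the point you flag as critical but then do not actually close: the existence of a crossing point at \emph{every} intermediate step, not just the first. Your argument that ``$o_{x-1}$ moves right'' uses two facts: $o_{x-1}$ is not a left-mover (minimality of $x$) and $o_{x-1}$ is not already at its target. The second fact is guaranteed by Definition~\ref{def_3} only for the \emph{original} endowment $\sigma_0$. After some swaps, an object that has reached its $\sigma_t$-destination sits still, the configuration is no longer ``compatible'' in the literal sense of Definition~\ref{def_3} (it violates $\sigma(i)\neq o_i$), and so your claim that ``the induction hypothesis applies verbatim'' to the instance with $\sigma_1$ as endowment is not literally true. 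Concretely, one must rule out a stuck configuration in which the leftmost left-mover's left neighbour is a finished object: this requires showing that the leftmost left-mover always targets the leftmost not-yet-finished position, and that a right-mover never finishes before it has swapped with every left-mover it must intersect. Both facts do follow from the compatibility conditions (item~2 forces the leftmost left-mover to be the object destined for position $1$, and the formula $c=a'+b'-k+1$ in item~3 pins down when each right/left crossing occurs), but they need to be proved; the ``Boolean sequence flips somewhere'' argument does not survive once finished objects can appear. The paper avoids this by always marching the object destined for position $1$ all the way to agent $1$, then \emph{deleting} that agent and object to obtain a strictly smaller neat instance (so the induction hypothesis genuinely applies), and by a separate argument via the invariance of $c$ that the objects destined for agents $k,\dots,n'$ land correctly. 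You would need to supply the analogue of these two steps to make your induction airtight.
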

\begin{proof}
We show that we can find a sequence of swaps from $\sigma_0$ to $\sigma_t$ for each compatible assignment $\sigma _t$.
In our algorithm, we first move object $\sigma _t(1)$ from its current position to agent 1, then move object $\sigma _t(2)$ to agent 2, and so on.
A formal description of the procedure is that:
For $i$ from 1 to $k-1$, move object $\sigma _t(i)$ from its current position to agent $i$ by a sequence of
    swaps including it (if the current position of $\sigma _t(i)$ is agent $j$, then it is a sequence of $|j-i|$ swaps).
To prove the correctness of the algorithm, we need to show that each swap in the algorithm is feasible and finally we can get the assignment $\sigma _t$.

     First, we show that the first loop of the algorithm can be executed legally, i.e., $\sigma _t(1)$ can be moved to agent 1 by a sequence of swaps. The first swap between agents $x-1$ and $x$ in $\sigma_0$ is feasible by Lemma~\ref{oneswap}.
    We use $\sigma _1$ to denote the assignment after this swap in $\sigma_0$. Then  $\sigma _t$ is compatible by taking $\sigma _{1}$ as the initial endowment by Lemma~\ref{oneswap}. By applying Lemma~\ref{oneswap} iteratively,
    we know that the swap between agents $x-1-i$ and $x-i$ in $\sigma_i$ is feasible (for $i>0$).
    We use
    $\sigma _{i+1}$ to denote the assignment after the swap in $\sigma_i$ and assume that object $o_{x}$ is assigned to agent $1$ in $\sigma _{t_1}$.
    By Lemma~\ref{oneswap}, we know that the sequence of swaps from $\sigma_0$ to $\sigma_{t_1}$ are feasible and
    $\sigma _t$ is compatible by taking $\sigma _{t_1}$ as the initial endowment.

    Next, we consider  $\sigma _{t_1}$ as the initial endowment. Agent $1$ has already gotten object $o_x$ and we can ignore it. After deleting agent 1 and object $o_x$ from the instance, we get a new $(o_1,o_{n'},k)$-\textsc{Constrained} instance.
    The second loop of the algorithm is indeed to move the corresponding object to the first agent in the new instance.
    So the correctness of the second loop directly follows from the above argument for the first loop.

    By iteratively applying the above arguments, we can prove that each loop of the algorithm can be executed legally.
    Therefore, all swaps in the algorithm are feasible.

    Let $\sigma _{t'}$ be the assignment returned by the algorithm. It holds that  $\sigma _{t'}(i)=\sigma _{t}(i)$ for $i\leq k-1$.
    For $i\geq k$, we show that  $\sigma _{t'}(i)=\sigma _{t}(i)$ still holds.  For any object $o_j=\sigma _{t'}(i)$ with $i\geq k$,
    objects $o_j$ and $o_{n'}$ are intersected and the intersection can only be the case described in Lemma~\ref{prop2}.
    By item 3 in the definition of compatibility, we know the value $c$ will not change no matter what the endowment position of $o_j$ is.
    Thus, $o_j$ and $o_{n'}$ will be swapped between two fixed agents and this is the last swap including $o_j$. Object $o_j$ will arrive at the position $a'=\sigma^T_{t}(o_j)$. Therefore, $\sigma _{t'}=\sigma _{t}$.\qed
\end{proof}

By Lemmas~\ref{necessary} and~\ref{sufficient}, to solve a neat $(o_1,o_{n'},k)$-\textsc{Constrained} instance, we only need to find a compatible assignment for it.

\subsection{Computing Compatible Assignments}
In a compatible assignment, object $o_1$ will be assigned to agent $k$ and object $o_{n'}$ will be assigned to agent $k-1$.
We consider other objects $o_i$ for $i\in\{2,3,\dots, n'-1\}$. In a compatible assignment, object $o_i$ will not be assigned to agent $i$ since each agent will participate in at least one swap including object $o_1$ or $o_{n'}$. There are two possible cases:  $o_i$ is assigned to agent $i'$ such that
$i'< i$; $o_i$ is assigned to agent $i'$ such that $i'> i$. We say that $o_i$ is moved to the left side for the former case and moved to the right side for the latter case. We will show that for each direction, there is at most one possible position for each object $o_i$ in a compatible assignment.

First, we consider $i\in\{2,3,\dots, k-1\}$. Assume that object $o_i$ is moved to the left side in a compatible assignment.
Thus, $o_1$ and $o_i$ are intersected and the intersection is of the case described in Lemma~\ref{prop2}.
We check whether there is an index $i'$ such that $i'\leq i$,  $o_i\succ_{i'-1}o_1$ and  $o_1\succ_j o_i$ for each $j \in \{i', i'+1, \dots, i\}$. The index $i'$ is also corresponding to the index $c$ in Lemma~\ref{prop2}.
We can see that $i'$ is the only possible agent for object $o_i$ to make $o_1$ and $o_i$ compatible if $o_i$ is moved to the left side.
We use $i_l$ to denote this agent $i'$ if it exists for $i$.
Assume that object $o_i$ is moved to the right side in a compatible assignment.
Since $o_1$ will be moved to agent $k$ and $o_i$ will be moved to the right side, by Lemma~\ref{lem_track} we know that $o_i$ will be moved to the right of $o_1$, i.e., an agent $i''$ with $i''> k$. Thus,  $o_i$ and $o_{n'}$ are intersected and the intersection is of the case described in Lemma~\ref{prop2}.
We check whether there is an index $i'$ such that $i'> k$,  $o_i\succ_{i'}o_{n'}$ and $o_{n'}\succ_j o_i$ for each $j \in \{k-1, k, \dots, i'-1\}$.
We can see that $i'$ is the only possible agent for object $o_i$ to make $o_{n'}$ and $o_i$ compatible if $o_i$ is moved to the right side.
We use $i_r$ to denote this agent $i'$ if it exists for $i$.

Second, we consider $i\in\{k,k+1,\dots, n'-1\}$. In fact, the structure of neat $(o_1,o_{n'},k)$-\textsc{Constrained} instances is symmetrical. We can rename the agents on the path from left to
right as $\{n', n'-1, \dots, 1\}$ instead of $\{1, 2, \dots, n'\}$ and then this case becomes the above case. We can compute $i_l$ and $i_r$ for each $i\in\{k,k+1,\dots, n'-1\}$ similarly.

The procedure to compute $i_l$ and $i_r$ for each agent $i\in\{2,3,\dots, n'-1\}$ in a neat $(o_1,o_{n'},k)$-\textsc{Constrained} instance
is presented as the following.

\begin{algorithm}[h!]
    \caption{To compute $i_l$ and $i_r$}
    \label{iril}
    \For{$2\leq i\leq n'-1$}
    {
      \If{there exists $i'$ such that $i'\leq i$,  $o_i\succ_{i'-1}o_1$ and  $o_1\succ_j o_i$ for each $j \in \{i', i'+1, \dots, i\}$}
      {let $i_l= i'$\;}
      \Else{let $i_l= \perp$ to indicate that $i_l$ does not exist.}
      \If{there exists $i'$ such that $i'> k$,  $o_i\succ_{i'}o_{n'}$ and $o_{n'}\succ_j o_i$ for each $j \in \{k-1, k, \dots, i'-1\}$}
      {let $i_r= i'$\;}
      \Else{let $i_r= \perp$ to indicate that $i_r$ does not exist.}
    }
\end{algorithm}

If none of $i_l$ and $i_r$ exists for some $i$, then this instance is a no-instance.
If only one of $i_l$ and $i_r$ exists, then object $o_i$ must be assigned to this agent in any compatible assignment.
The hardest case is that both $i_l$ and $i_r$ exist, where we may not know which agent the object will be assigned to in the compatible assignment.
In this case, we will rely on algorithms for \textsc{2-Sat} to find possible solutions.

For each agent $j\in\{1,2,3,\dots, n'\}$, we will use $R_j$ to store all possible objects that may be assigned to agent $j$ in a compatible assignment.
The following procedure computes the initial $R_j$.

\begin{enumerate}
\item  Initially, let $R_{k-1}=\{o_{n'}\}$, $R_{k}=\{o_1\}$, and $R_i=\emptyset$ for all other agent $i$.
\item  For each $i \in \{2,3,\dots,n'-1\}$, compute $i_l$ and $i_r$ by Algorithm~\ref{iril}, and add $o_i$ to $R_{i_l}$ if $i_l\neq \perp$ (resp., to $R_{i_r}$ if $i_r\neq \perp$).
\end{enumerate}

We also use the following two steps to iteratively update $R_j$ and then make the size of $R_j$ at most 2 for each $j$.

\begin{enumerate}
\item If there is an empty set $R_{j_0}$ for an agent $j_0$, stop and report the instance is a no-instance;
\item If there is a set $R_{j_0}$ containing only one object $o_{i_0}$ and the object $o_{i_0}$ appears in two sets $R_{j_0}$ and $R_{j'_0}$,
then delete $o_{i_0}$ from $R_{j'_0}$.
\end{enumerate}

The correctness of the second step is based on the fact that agent $j_0$ should get one object.
If there is only one candidate object $o_{i_0}$ for agent $j_0$,  then $o_{i_0}$ can only be assigned to agent $j_0$ in any compatible assignment.

We also analyze the running time of the above procedure to compute all $R_j$.
Algorithm~\ref{iril} computes $i_l$ and $i_r$ in $O(n)$ for each object $o_i$. Therefore, we use $O(n^2)$ time to set the initial values for all sets $R_j$. To update $R_i$, we may execute at most $n$ iterations and each iteration can be executed in $O(n)$. Hence, the procedure to compute all $R_j$ runs in time $O(n^2)$.

\begin{lemma}\label{lem_exact2}
    After the above procedure, either the instance is a no-instance or it holds that $1\leq|R_j|\leq 2$ for each $j\in\{1,2,\dots, n'\}$.
\end{lemma}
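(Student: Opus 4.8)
The statement splits into the trivial lower bound $|R_j|\ge 1$ and the substantive upper bound $|R_j|\le 2$. The lower bound needs no real argument: step~1 of the update loop halts and reports a no-instance the moment some $R_j$ becomes empty, so if the loop terminates without declaring a no-instance then every $R_j$ is nonempty; termination itself holds because step~2 strictly decreases $\sum_j|R_j|$. (If moreover $i_l$ and $i_r$ both fail to exist for some object, the instance was already declared a no-instance before the sets were built.)

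For the upper bound I would first pin down the invariant maintained throughout: after the construction and after every round of the loop, $R_j$ equals exactly the set of objects that can be held by agent $j$ in some compatible assignment consistent with the placements already forced by unit propagation. For the starting value of $R_j$ this is Lemma~\ref{lem_track} together with Lemma~\ref{prop2}/Definition~\ref{def_2}: an object moved leftwards can land only at the single agent $i_l$, one moved rightwards only at $i_r$, while $o_1$ and $o_{n'}$ are pinned to agents $k$ and $k-1$ by definition of the neat $(o_1,o_{n'},k)$-\textsc{Constrained} instance. Step~2 preserves the invariant because the unique candidate of an agent must indeed be placed there. Given the invariant, the upper bound reduces to: at a fixpoint of the loop that is not a no-instance, $|R_j|\le 2$ for every~$j$.

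I would prove this reduced claim by contradiction, exploiting the one-dimensional geometry. Fix $j_0$ with $|R_{j_0}|\ge 3$. If $j_0\in\{k-1,k\}$ then inspecting the construction shows no object besides $o_{n'}$, respectively $o_1$, is ever added to $R_{j_0}$ (an object added via $i_r$ has $i_r>k$, and the $i_l$-case is excluded because agents $k-1$ and $k$ already hold pinned objects), so $|R_{j_0}|=1$, a contradiction. So assume $j_0<k-1$; the case $j_0>k$ is symmetric. Since $\sigma^T_t(o_1)=k$, Lemma~\ref{lem_track} forces every object that reaches an agent strictly left of $k$ to have been moved leftwards; consequently every object of $R_{j_0}$ was placed there via $i_l=j_0$. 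By the order-preserving part of Lemma~\ref{lem_track}, two objects moved leftwards cannot land at the same agent, so in any compatible assignment at most one object of $R_{j_0}$ moves left, and the remaining $\ge 2$ objects must move right, hence are pinned to their respective values $i_r>k$, which are pairwise distinct. Now I chase these forced placements: each such object, being the unique candidate of the agent $i_r$ it is pinned to, is removed by step~2 from every other set; iterating this, and using once more that right-moved objects occupy pairwise distinct agents in increasing index order, the cascade eventually empties some set $R_{j'}$. But then step~1 would have fired, contradicting the choice of fixpoint.

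The delicate part, on which I would spend the most effort, is exactly this cascade: turning ``the $\ge 3$ objects of $R_{j_0}$ cannot all be placed'' into ``unit propagation actually empties a set'' rather than merely ``no perfect matching exists''. This is where the linearity of the path is indispensable: restricted to the agents on one side of $k$, the candidate sets form a staircase in the agent indices, so a set of size $\ge 3$ produces a genuine Hall-type deficiency on a contiguous block of agents, and on contiguous blocks the unit propagation of step~2 is complete.
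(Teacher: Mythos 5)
Your lower-bound argument is fine, but the upper bound is where the lemma lives, and there your proof has two genuine gaps. First, the ``invariant'' you pin down --- that $R_j$ equals \emph{exactly} the set of objects that can be held by agent $j$ in some compatible assignment --- is false: the sets $R_j$ are built purely syntactically from the locally computed values $i_l$ and $i_r$ (compatibility of $o_i$ with $o_1$ and $o_{n'}$ alone), and the paper's own remark after Example~\ref{exp_2} points out that $o_i\in R_{i_r}$ does not imply $o_i$ is actually placeable at $i_r$. More importantly, the lemma is a statement about the sets as maintained by the procedure, which are modified only by steps~1 and~2; reasoning about what ``must'' happen in a hypothetical compatible assignment (``the remaining $\ge 2$ objects must move right, hence are pinned'') does not remove anything from any $R_j$. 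Second, and fatally, the cascade you rely on --- that the forced right-placements make those objects \emph{unique} candidates of their agents $i_r$, so step~2 fires and eventually empties a set --- is exactly the claim you flag as delicate and then assert rather than prove (``on contiguous blocks the unit propagation of step~2 is complete''). Unit propagation is not complete for matching feasibility in general, and nothing in your sketch establishes it here; indeed the whole reason the algorithm reduces to \textsc{2-Sat} afterwards is that propagation alone does not resolve which of $i_l,i_r$ each object takes.

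The intended proof is a two-line double-counting argument that avoids assignments entirely. At a fixpoint of the loop that is not declared a no-instance, every $R_j$ is nonempty, and by step~2 the object in any singleton $R_j$ appears in no other set (two singletons cannot share an object, else one would be emptied). Discard the singletons together with their objects: there remain equally many sets and objects, each remaining set has size at least $2$, and each remaining object lies in at most two sets (namely $R_{i_l}$ and $R_{i_r}$). Summing memberships gives $2\cdot(\#\text{remaining sets})\le\sum_j|R_j|\le 2\cdot(\#\text{remaining objects})$, forcing equality, so every remaining set has size exactly $2$. I would replace your contradiction/cascade argument with this count; no geometric case analysis on $j_0$ relative to $k$ is needed.
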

\begin{proof}
    We only need to consider the latter case where each set $R_j$ contains at least one object after the procedure.
    Note that for each set $R_{j}$ which contains only one object, the object will not appear in any other set.
    We ignore these singletons $R_{j}$ and consider the remaining sets.
    Each remaining set contains at least two objects. On the other hand, each other object $o_i$ (not appearing in a singleton) can be in at most two sets $R_{i_l}$ and $R_{i_r}$. On average each remaining set $R_{j}$ contains at most two objects.
    Therefore, each remaining set $R_{j}$ contains exactly two objects.\qed
\end{proof}

For a set $R_{j}$ containing only one object $o_i$, we know that object $o_i$ should be assigned to agent $j$ in any compatible
assignment. For sets $R_{j}$ containing two objects, we still need to decide which object is assigned to this agent such that we can get a compatible assignment.

We will reduce the remaining problem to \textsc{2-Sat}.
The \textsc{2-Sat} instance contains $n'$ variables $\{x_1, x_2, \dots, x_{n'}\}$ corresponding to the $n'$ objects.
When $x_i=1$, it means that object $o_i$ is moved to the right side, i.e, we will assign it to agent $i_r$ in the compatible assignment. When $x_i=0$, it means that object $o_i$ is moved to the left side and we will assign it to agent $i_l$ in the compatible assignment. In the \textsc{2-Sat} instance, we have two kinds of clauses, called \emph{agent clauses} and \emph{compatible clauses}.

For each set $R_{j}$, we associate $|R_{j}|$ literals with it. If there is an object $o_i$ such that $i_l=j$, we associate literal
$\overline{x_i}$ with $R_{j}$;  if there is an object $o_i$ such that $i_r=j$, we associate literal $x_i$ with $R_{j}$.
For each set $R_{j}$ of size 1 (let the associated literal be $\ell_j$), we construct one clause $c_j$ containing only one literal $\ell_j$.
For each set $R_{j}$ of size 2 (let the associated literals be $\ell_j^1$ and $\ell_j^2$), we construct two clauses $c_{j1}:\ell_j^1 \vee \ell_j^2$ and $c_{j2}:\overline{\ell_j^1} \vee \overline{\ell_j^2}$. These clauses are called \emph{agent clauses}.
From the construction, we can see that both of the two clauses $c_{j1}$ and $c_{j2}$ are satisfied if and only if exactly one object is assigned to agent $j$.
The agent clauses can guarantee that exactly one object is assigned to each agent.

For each pair of sets $R_{j}$ and $R_{i}$, we construct several clauses according to the definition of compatibility.
For any two objects $o_{j'}\in R_j$ (the corresponding literal associated with $R_j$ is $\ell_j$) and $o_{i'}\in R_i$ (the corresponding literal associated with $R_i$ is $\ell_i$), we say that $\ell_j$ and $\ell_i$ are \emph{compatible} if $o_{j'}$ and $o_{i'}$ are compatible when $o_{j'}$ is assigned to agent $j$ and $o_{i'}$ is assigned to agent $i$ in the assignment.
If $\ell_j$ and $\ell_i$ are not compatible, then either $o_{j'}$ cannot be assigned to agent $j$ or $o_{i'}$ cannot be assigned to agent $i$ in any compatible assignment. So we construct one \emph{compatible clause}: $\overline{\ell_j} \vee \overline{\ell_i}$ for each pair of
incompatible pair $\ell_j$ and $\ell_i$.
We can see that if the compatible clause $\overline{\ell_j} \vee \overline{\ell_i}$ is satisfied if and only if
either $o_{j'}$ is not assigned to agent $j$ or $o_{i'}$ is not assigned to agent $i$. Since we construct a compatible clause for any possible incompatible pair,
we know that all the compatible clauses are satisfied if and only if there is not incompatible pair in the assignment.
Note that, for each pair of sets $R_{j}$ and $R_{i}$, we will create at most $2\times2=4$ compatible clauses since each of $R_{j}$ and $R_{i}$ contains at most 2 elements.
Therefore, the number of compatible clauses in $O(n^2)$.

By Lemma~\ref{lem_exact2}, we know that there are at most two candidate objects for each agent in a compatible assignment.
Thus, each agent clause contains at most two literals.
By the construction of compatible clauses, we know that each compatible clause contains exactly two literals.
So, the constructed instance is a \textsc{2-Sat} instance.
The construction of the \textsc{2-Sat} instance implies the following lemma.

\begin{lemma}\label{reduce2sat}
The neat $(o_1,o_{n'},k)$-\textsc{Constrained} instance
has a compatible assignment if and only if the corresponding \textsc{2-Sat} instance constructed above has a satisfiable assignment.
\end{lemma}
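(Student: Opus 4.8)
The plan is to prove the two implications separately by writing down the obvious translation between assignments and truth assignments, and then checking that the agent clauses exactly encode ``$\sigma_t$ restricts to a valid assignment with $\sigma_t(i)\neq o_i$ for all $i$'' while the compatible clauses exactly encode ``every pair of objects is compatible in the sense of Definition~\ref{def_2}''; together these two conditions are precisely Definition~\ref{def_3}, so the encoding is faithful.

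First I would record the structural fact on which the whole encoding rests. After the update procedure one actually gets more than $|R_j|\le 2$: sharpening the counting argument in the proof of Lemma~\ref{lem_exact2} (there are $n'$ agents and $n'$ objects, singleton sets are in bijection with singleton objects, and each non-singleton object lies in at most two sets $R_{i_l},R_{i_r}$, so all inequalities are tight), every object lying in a singleton set occurs in no other set and comes with a unit clause pinning its direction, whereas every object $o_i$ lying in a non-singleton set has \emph{both} $i_l$ and $i_r$ defined, lies in both $R_{i_l}$ and $R_{i_r}$, and both of those sets are non-singleton; in particular $o_1$ is pinned to agent $k$ and $o_{n'}$ to agent $k-1$ by the singletons $R_k,R_{k-1}$. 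Combined with the remark preceding Algorithm~\ref{iril} — that an object moved left (resp.\ right) in a compatible assignment is \emph{forced} to land on $i_l$ (resp.\ $i_r$), these being exactly the values Algorithm~\ref{iril} returns — this says that a compatible assignment is nothing more than a choice, for each object, of one of its (at most two) admissible positions, and that the variable $x_i$ (with the reading $x_i=1$ meaning $o_i$ goes right to $i_r$) records exactly that choice, the literal attached to $R_j$ for an object $o_i\in R_j$ being true precisely when $\sigma_t(j)=o_i$.

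With this dictionary the two directions are short. For $(\Rightarrow)$, from a compatible $\sigma_t$ set $x_i=1$ iff $o_i$ is moved right. For each agent $j$ exactly one object is assigned to $j$ and all its candidates lie in $R_j$, so exactly one literal attached to $R_j$ is true, which satisfies $c_{j1}$ and $c_{j2}$ (or the unit clause when $|R_j|=1$); and a compatible clause $\overline{\ell_j}\vee\overline{\ell_i}$ was created only for an incompatible pair, a configuration that by Definition~\ref{def_3} cannot occur in $\sigma_t$, so at least one of $\ell_j,\ell_i$ is false. Hence every clause is satisfied. For $(\Leftarrow)$, from a satisfying assignment place $o_i$ at $i_r$ if $x_i=1$ and at $i_l$ if $x_i=0$ (each $o_i$ has such a position by the structural fact, and $o_1,o_{n'}$ are pinned by their unit clauses). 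The agent clauses give exactly one object per agent, and with $n'$ objects and $n'$ agents $\sigma_t$ is a bijection, hence an assignment; it never fixes an object because the admissible positions $i_l,i_r$ are by construction distinct from $i$ and because $n'\ge k$ forces $o_{n'}\neq o_{k-1}$ and $o_1\neq o_k$. Finally, if $\sigma_t(j)=o_{j'}$ and $\sigma_t(i)=o_{i'}$ then the literals witnessing these placements are both true, so the clause $\overline{\ell_j}\vee\overline{\ell_i}$ was \emph{not} created, which by the definition of incompatibility means $o_{j'}$ and $o_{i'}$ are compatible; thus $\sigma_t$ is compatible.

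I expect the real work to sit in the second paragraph: making the dictionary airtight, i.e.\ proving that after the update procedure every $x_i$ corresponds to a genuine choice among \emph{admissible} positions (two for a non-singleton object, one forced for a singleton object, none for the agents already holding $o_1$ or $o_{n'}$), so that the two translations are well defined and mutually inverse. This needs the sharpened form of Lemma~\ref{lem_exact2} together with the correctness of the ``delete a forced object from the other sets'' step. Once that is in place, the clause-by-clause verification above is a routine unwinding of Definition~\ref{def_2}, with its three cases matching the three ways the compatibility of a pair $(o_{j'},o_{i'})$ placed at positions $(j,i)$ can be decided.
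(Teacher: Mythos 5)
Your proof is correct and follows essentially the same route as the paper: translate ``$o_i$ moved right'' into $x_i=1$, check that agent clauses encode ``exactly one object per agent'' and compatible clauses encode pairwise compatibility, and invert the translation for the converse. The only difference is that you explicitly establish the well-definedness of the reverse translation (that every object in a non-singleton set has both $i_l$ and $i_r$ defined, via tightening the counting in Lemma~\ref{lem_exact2}), a point the paper's proof leaves implicit; this is a welcome addition rather than a divergence.
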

\begin{proof}
First of all, we show that if the neat $(o_1,o_{n'},k)$-\textsc{Constrained} instance
has a compatible assignment then the \textsc{2-Sat} instance is satisfiable.
Let $\sigma$ be  a compatible assignment of the neat $(o_1,o_{n'},k)$-\textsc{Constrained} instance.
By the definition of compatible assignments, we know that each object is either moved to the left or the right in the path (cannot stay at its initial position) in the assignment $\sigma$. If object $o_i$ is moved to the left we let the corresponding
variable $x_i=0$ in the \textsc{2-Sat} instance, and  if object $o_i$ is moved to the right we let the corresponding
variable $x_i=1$. Thus, we assign values for all variables. Since each agent gets exactly one object in the compatible assignment $\sigma$, we know that all the agent clauses are satisfied.
Furthermore, $\sigma(i)$ and $\sigma(j)$ are compatible for any pair of agents $i$ and $j$ in the compatible assignment $\sigma$, and then all compatible clauses are satisfied.
So the \textsc{2-Sat} instance is satisfiable.

Next, we consider the other direction. Let $\Pi$ be a satisfying assignment of the \textsc{2-Sat} instance.
We construct an assignment $\sigma'$ of the neat $(o_1,o_{n'},k)$-\textsc{Constrained} instance:
if variable $x_i$ is assigned 1 in $\Pi$, let $\sigma'(i_r)=o_i$, and let  $\sigma'(i_l)=o_i$ otherwise.
We show that $\sigma'$ is a compatible assignment. First, each object is assigned to an agent different from its initial agent since each variable has been assigned to 1 or 0.
Second, each agent only gets one object, because all the agent clauses are satisfied. Last, any pair of objects are compatible, because all the compatible clauses are satisfied.
Such assignment $\sigma'$ is compatible.
 \qed
\end{proof}

\subsection{The Whole Algorithm}

The main steps of the whole algorithm to solve \textsc{Object Reachability} in paths are listed in Algorithm~\ref{alg:RO}. 

\begin{algorithm}[h]
    \caption{Main steps of the whole algorithm}
    \label{alg:RO}
    \KwIn{An instance $(N,O,\succ,P,\sigma,k\in N,o_1\in O)$}
    \KwOut{To determine whether $o_1$ is reachable for $k$}
    \For{each $n'\in \{k, k+1, \dots, n\}$}
    {
        Construct the corresponding neat $(o_1,o_{n'},k)$-\textsc{Constrained} instance by deleting agent $i$ and object $o_i$ for all $n'<i\leq n$\;
        Compute $i_r$ and $i_l$ for all $2\leq i\leq n'-1$ by Algorithm~\ref{iril}\;
        Compute and update $R_j$ for all  $1\leq j\leq n'$ by the procedure before Lemma~\ref{lem_exact2}\;
        Construct the corresponding \textsc{2-Sat} instance according to $R_j$ by the construction method introduced before Lemma~\ref{reduce2sat}\;
        \If{the \textsc{2-Sat} instance is a yes-instance}{ \textbf{return yes} and \textbf{quit}\;}
    }
    \textbf{return no}.
\end{algorithm}

The correctness of the algorithm follows from Corollary~\ref{lem_simpl2}, Lemma~\ref{lem6}, Lemma~\ref{lem_simpl3} and Lemma~\ref{reduce2sat}.
Corollary~\ref{lem_simpl2} says that we can simply reduce the problem to check whether $o_1$ is reachable for an agent. Then Lemma~\ref{lem6} and Lemma~\ref{lem_simpl3}
imply that the original instance is a yes-instance if and only if one of the corresponding neat $(o_1,o_{n'},k)$-\textsc{Constrained} instances is a yes-instance.
Lemma~\ref{reduce2sat} says that to solve a neat $(o_1,o_{n'},k)$-\textsc{Constrained} instance, we only need to solve the corresponding \textsc{2-Sat} instance.
So we solve all the $O(n)$ corresponding neat $(o_1,o_{n'},k)$-\textsc{Constrained} instances for $n'\in \{k, k+1, \dots, n\}$ by using \textsc{2-Sat} solvers, and
the algorithm returns \textbf{yes}
when one of the corresponding neat $(o_1,o_{n'},k)$-\textsc{Constrained} instances is a yes-instance.

Next, we analyze the running time bound of the algorithm.
The major part of the algorithm is to solve $O(n)$ neat $(o_1,o_{n'},k)$-\textsc{Constrained} instances.
To solve a neat $(o_1,o_{n'},k)$-\textsc{Constrained} instance, the above arguments show that we use $O(n^2)$ time to compute all sets $R_j$.
To construct the corresponding \textsc{2-Sat} instance, we construct at most $2n$ agent clauses in $O(n)$ time and construct at most $4{n \choose 2}$ compatible clauses, each of which will take $O(n)$ time to check the compatibility. So the \textsc{2-Sat} instance can be constructed in $O(n^3)$ time.
We use the $O(n+m)$-time algorithm for \textsc{2-Sat}~\citep{aspvall1982linear} to solve the instance, where $m$ is the number of clauses and in our instances, it holds that $m=O(n^2)$.
Thus, we can solve each neat $(o_1,o_{n'},k)$-\textsc{Constrained} instance in $O(n^3)$ time.
In total, the algorithm uses $O(n^4)$ time to compute all the $O(n)$ neat $(o_1,o_{n'},k)$-\textsc{Constrained} instances.

\begin{theorem}
  \label{the_sor}
\textsc{Object Reachability} in paths can be solved in $O(n^4)$ time.
\end{theorem}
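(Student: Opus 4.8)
The plan is to assemble the structural reductions and the complexity bounds already established in this section, essentially certifying the correctness and running time of Algorithm~\ref{alg:RO}. First I would apply Corollary~\ref{lem_simpl2} to assume without loss of generality that the instance asks whether $o_1$ is reachable for agent $k$; this is a linear-time preprocessing step. Next, Lemma~\ref{lem6} together with Lemma~\ref{lem_simpl3} reduces the question to deciding the $n-k+1 = O(n)$ neat $(o_1,o_{n'},k)$-\textsc{Constrained} instances, one for each candidate $n'\in\{k,\dots,n\}$: the original instance is a yes-instance if and only if at least one of these is. Hence it suffices to show that each neat instance can be decided in $O(n^3)$ time.

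For a fixed neat $(o_1,o_{n'},k)$-\textsc{Constrained} instance, Lemmas~\ref{necessary} and~\ref{sufficient} together say it is a yes-instance if and only if it admits a compatible assignment. I would then run Algorithm~\ref{iril} to compute $i_l$ and $i_r$ for each $i\in\{2,\dots,n'-1\}$ in $O(n)$ time apiece, hence $O(n^2)$ overall, and run the update procedure preceding Lemma~\ref{lem_exact2} to build the sets $R_j$; by Lemma~\ref{lem_exact2} either a no-instance has already been detected or $1\le|R_j|\le 2$ for every $j$, and this also costs $O(n^2)$. Finally, by the construction preceding Lemma~\ref{reduce2sat} and by Lemma~\ref{reduce2sat}, a compatible assignment exists if and only if the associated \textsc{2-Sat} formula is satisfiable, which I would test with the linear-time \textsc{2-Sat} algorithm of~\citep{aspvall1982linear}.

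The part that needs care is the running-time bookkeeping for the \textsc{2-Sat} phase. There are at most $2n$ agent clauses, produced in $O(n)$ total time, and at most $4{n\choose 2}=O(n^2)$ compatible clauses, one per pair of literals taken from two of the size-$\le 2$ sets $R_j$; each such clause requires checking the compatibility conditions of Definition~\ref{def_2}, which I expect to be verifiable by a single left-to-right scan over the relevant range of agents, i.e.\ in $O(n)$ time. So the formula is built in $O(n^3)$ time and has $m=O(n^2)$ clauses over $n'$ variables, and~\citep{aspvall1982linear} decides it in $O(n'+m)=O(n^2)$ time. Each neat instance is thus handled in $O(n^3)$ time, and summing over the $O(n)$ choices of $n'$ yields the claimed $O(n^4)$ bound; correctness follows by chaining Corollary~\ref{lem_simpl2}, Lemma~\ref{lem6}, Lemma~\ref{lem_simpl3}, Lemmas~\ref{necessary}--\ref{sufficient} and Lemma~\ref{reduce2sat}. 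The main obstacle, such as it is, is confirming that no step hides a larger factor — in particular that the compatibility test per clause really is $O(n)$ and that the iterative $R_j$-update loop terminates within $O(n)$ rounds — since everything else is a direct appeal to the lemmas already proved.
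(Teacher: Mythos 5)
Your proposal is correct and follows essentially the same route as the paper's own argument: reduce via Corollary~\ref{lem_simpl2}, Lemma~\ref{lem6} and Lemma~\ref{lem_simpl3} to $O(n)$ neat $(o_1,o_{n'},k)$-\textsc{Constrained} instances, characterize each by compatibility (Lemmas~\ref{necessary} and~\ref{sufficient}), and decide each via the $R_j$-sets and the \textsc{2-Sat} reduction of Lemma~\ref{reduce2sat} in $O(n^3)$ time, for $O(n^4)$ overall. The running-time bookkeeping you flag (at most $4\binom{n}{2}$ compatible clauses, each checked in $O(n)$, and the $O(n)$-round update of the $R_j$) matches the paper's accounting exactly.
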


We give an example to show the steps to compute a compatible assignment for a neat $(o_1,o_{n'},k)$-\textsc{Constrained} instance.

\begin{example}\label{exp_2}
    Consider a neat $(o_1,o_{n'},k)$-\textsc{Constrained} instance with $n'=8$ and $k = 5$ as shown in Figure 4.
   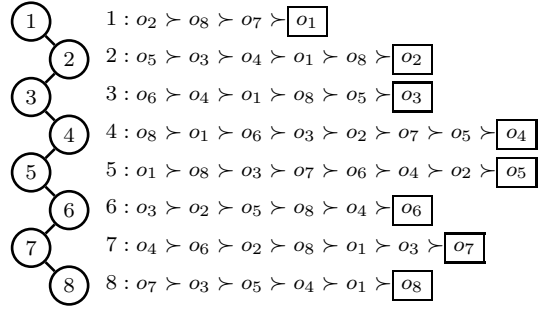
\begin{figure}[htbp]
        \centering
        \begin{tikzpicture}
        [scale =0.5, line width = 1pt,solid/.style = {circle, draw, fill = black, minimum size = 0.3cm},empty/.style = {circle, draw, fill = white, minimum size = 0.5cm}]
        \node[empty, label = center:$1$] (A) at (1,8) {};
        \node[empty, label = center:$2$] (B) at (2,7) {};
        \node[empty, label = center:$3$] (C) at (1,6) {};
        \node[empty, label = center:$4$] (D) at (2,5) {};
        \node[empty, label = center:$5$] (E) at (1,4) {};
        \node[empty, label = center:$6$] (F) at (2,3) {};
        \node[empty, label = center:$7$] (G) at (1,2) {};
        \node[empty, label = center:$8$] (H) at (2,1) {};

        \node[label = right:$1:o_2\succ o_8\succ o_7\succ$\fbox{$o_1$}] (A1) at (2.5,8) {};
        \node[label = right:$2:o_5\succ o_3\succ o_4\succ o_1\succ o_8\succ$\fbox{$o_2$}] (B1) at (2.5,7) {};
        \node[label = right:$3:o_6\succ o_4\succ o_1\succ o_8\succ o_5\succ$\fbox{$o_3$}] (C1) at (2.5,6) {};
        \node[label = right:$4:o_8\succ o_1\succ o_6\succ o_3\succ o_2\succ o_7\succ o_5\succ$\fbox{$o_4$}] (D1) at (2.5,5) {};
        \node[label = right:$5:o_1\succ o_8\succ o_3\succ o_7\succ o_6\succ o_4\succ o_2\succ$\fbox{$o_5$}] (E1) at (2.5,4) {};
        \node[label = right:$6:o_3\succ o_2\succ o_5\succ o_8\succ o_4\succ$\fbox{$o_6$}] (F1) at (2.5,3) {};
        \node[label = right:$7:o_4\succ o_6\succ o_2\succ o_8\succ o_1\succ o_3\succ$\fbox{$o_7$}] (G1) at (2.5,2) {};
        \node[label = right:$8:o_7\succ o_3\succ o_5\succ o_4\succ o_1\succ$\fbox{$o_8$}] (H1) at (2.5,1) {};

        \draw (A) -- (B);
        \draw (B) -- (C);
        \draw (C) -- (D);
        \draw (D) -- (E);
        \draw (E) -- (F);
        \draw (F) -- (G);
        \draw (G) -- (H);

        \end{tikzpicture}
        \caption{The instance of Example 2}        \label{fig-example2}
   \end{figure}

    We compute $i_r$ and $i_l$ for all $1\leq i\leq n'$ by Algorithm~\ref{iril}, the values of which are listed in Table~\ref{table_i}.
    \begin{table}[h!]
       \centering
       \caption{The values of $i_l$ and $i_r$}        \label{table_i}
       \begin{tabular*}{\linewidth}{@{\extracolsep{\fill}}ccccccccc}
            \hline
            Agent~$i$& $1$ & $2$ & $3$ & $4$ & $5$ & $6$ & $7$ & $8$ \\
            \hline
            $i_l$ & $\perp$ & $1$ & $2$ & $3$ & $2$ & $3$ & $\perp$ & $4$ \\
            $i_r$ & $5$ & $6$ & $6$ & $7$ & $6$ & $7$ & $8$ & $\perp$ \\
            \hline
        \end{tabular*}
    \end{table}

    According to the values in Table~\ref{table_i}, we compute and update $R_j$ for $j\in \{1,2,\dots, n'\}$ by the procedure before Lemma~\ref{lem_exact2}.
    After the update, it holds that $1\leq|R_j|\leq 2$ for all $1\leq j\leq n'$. The values of $R_j$ before and after updating are shown in the second and third columns of Table~\ref{table_RR}.

    \begin{table}[h!]
        \centering
        \caption{Initial and updated $R_j$ and agent clauses}\label{table_RR}
        \begin{tabular*}{\linewidth}{@{\extracolsep{\fill}}clll}
            \hline
            \it Sets & \it Initial $R_j$ & \it Updated $R_j$& \it Agent clauses \\
            \hline
            $R_1$ & $\{o_2\}$ & $\{o_2\}$ & $\overline{x_2}$ \\
            $R_2$ & $\{o_3,o_5\}$ & $\{o_3,o_5\}$ & $x_3\vee x_5,\overline{x_3}\vee\overline{x_5}$ \\
            $R_3$ & $\{o_4,o_6\}$ & $\{o_4,o_6\}$ & $x_4\vee x_6,\overline{x_4}\vee\overline{x_6}$ \\
            $R_4$ & $\{o_8\}$ & $\{o_8\}$ & $\overline{x_8}$ \\
            $R_5$ & $\{o_1\}$ & $\{o_1\}$ & $x_1$ \\
            $R_6$ & $\{o_2,o_3,o_5\}$ & $\{o_3,o_5\}$ & $x_3\vee x_5,\overline{x_3}\vee\overline{x_5}$ \\
            $R_7$ & $\{o_4,o_6\}$ & $\{o_4,o_6\}$ & $x_4\vee x_6,\overline{x_4}\vee\overline{x_6}$ \\
            $R_8$ & $\{o_7\}$ & $\{o_7\}$ & $x_7$ \\
            \hline
        \end{tabular*}
    \end{table}

    Next, we construct the corresponding \textsc{2-Sat} instance according to updated $R_j$.
    The \textsc{2-Sat} instance contains eight variables $x_1, x_2, \dots, x_8$ corresponding to the eight objects.
    There are two kinds of clauses: agents clauses and compatible clauses.

    Agent clauses are easy to construct: if $R_j$ contains only one object $o_q$, we construct one clause of a single literal $x_q$ if $q<j$ and one clause of a single literal $\overline{x_q}$ if $q>j$; if $R_j$ contains two objects $o_q$ and $o_p$, we construct two clauses of two literals according to our algorithm to constrain that exactly one
    object is assigned to agent $j$. The agent clauses for each set $R_j$ are shown in the last column of Table~\ref{table_RR}.

    We construct compatible clauses for all incompatible pairs. We check all pairs of objects and find that there
    are only two incompatible cases: $o_4$ and $o_5$ are incompatible if $o_4$ and $o_5$ are moved to agent $3$ and agent $2$, respectively; $o_4$ and $o_5$ are incompatible
    if $o_4$ and $o_5$ are moved to agent $7$ and agent $6$, respectively. So we construct the corresponding compatible clauses according to our algorithm and the compatible clauses are
    $$x_4\vee x_5 ~~~~\mbox{and}~~~~\overline{x_4}\vee\overline{x_5}.$$

    By using the $O(n+m)$ time algorithm for  \textsc{2-Sat} \citep{aspvall1982linear}, we get a feasible assignment $(1,0,0,0,1,1,1,0)$ for the \textsc{2-Sat} instance.
    The solution to the \textsc{2-Sat} instance indicates that objects $\{o_1,o_5,o_6,o_7\}$ are moved to the right side and the final positions for them are $\{5,6,7,8\}$;
    objects $\{o_2,o_3,o_4,o_8\}$ are moved to the left side and the final positions for them are $\{1,2,3,4\}$. The corresponding compatible assignment is $(o_2,o_3,o_4,o_8,o_1,o_5,o_6,o_7)$, i.e., $\sigma_7$ shown in Figure~\ref{fig_ans}.

    To obtain the corresponding sequence of feasible swaps from  $\sigma_0$ to $\sigma_7$, we use the algorithm in the proof of Lemma~\ref{sufficient}.
    We first move object $o_2$ to agent 1, then object $o_3$ to agent 2, then object $o_4$ to agent 3, and finally object $o_8$ to agent 4. The sequence of swaps is shown in Figure~\ref{fig_ans}.

    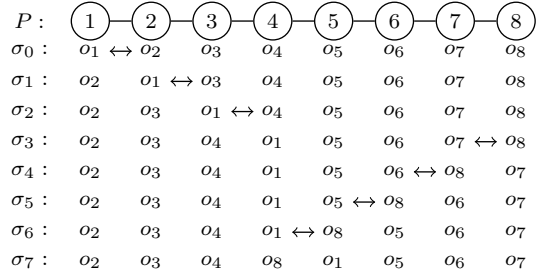
\begin{figure}[h!]
                \centering
            \begin{tikzpicture}
                [scale = 0.8, line width = 0.5pt,solid/.style = {circle, draw, fill = black, minimum size = 0.3cm},empty/.style = {circle, draw, fill = white, minimum size = 0.5cm}]
                \node [] (A) at (1,6) {$\sigma _0:$};
                \node [] (B) at (2,6) {$o_1$};
                \node [] (C) at (3,6) {$o_2$};
                \node [] (D) at (4,6) {$o_3$};
                \node [] (E) at (5,6) {$o_4$};
                \node [] (F) at (6,6) {$o_5$};
                \node [] (G) at (7,6) {$o_6$};
                \node [] (H) at (8,6) {$o_7$};
                \node [] (I) at (9,6) {$o_8$};

                \node [] (A1) at (1,5.5) {$\sigma _1:$};
                \node [] (B1) at (2,5.5) {$o_2$};
                \node [] (C1) at (3,5.5) {$o_1$};
                \node [] (D1) at (4,5.5) {$o_3$};
                \node [] (E1) at (5,5.5) {$o_4$};
                \node [] (F1) at (6,5.5) {$o_5$};
                \node [] (G1) at (7,5.5) {$o_6$};
                \node [] (H1) at (8,5.5) {$o_7$};
                \node [] (I1) at (9,5.5) {$o_8$};

                \node [] (A2) at (1,5) {$\sigma _2:$};
                \node [] (B2) at (2,5) {$o_2$};
                \node [] (C2) at (3,5) {$o_3$};
                \node [] (D2) at (4,5) {$o_1$};
                \node [] (E2) at (5,5) {$o_4$};
                \node [] (F2) at (6,5) {$o_5$};
                \node [] (G2) at (7,5) {$o_6$};
                \node [] (H2) at (8,5) {$o_7$};
                \node [] (I2) at (9,5) {$o_8$};

                \node [] (A3) at (1,4.5) {$\sigma _3:$};
                \node [] (B3) at (2,4.5) {$o_2$};
                \node [] (C3) at (3,4.5) {$o_3$};
                \node [] (D3) at (4,4.5) {$o_4$};
                \node [] (E3) at (5,4.5) {$o_1$};
                \node [] (F3) at (6,4.5) {$o_5$};
                \node [] (G3) at (7,4.5) {$o_6$};
                \node [] (H3) at (8,4.5) {$o_7$};
                \node [] (I3) at (9,4.5) {$o_8$};

                \node [] (A4) at (1,4) {$\sigma _4:$};
                \node [] (B4) at (2,4) {$o_2$};
                \node [] (C4) at (3,4) {$o_3$};
                \node [] (D4) at (4,4) {$o_4$};
                \node [] (E4) at (5,4) {$o_1$};
                \node [] (F4) at (6,4) {$o_5$};
                \node [] (G4) at (7,4) {$o_6$};
                \node [] (H4) at (8,4) {$o_8$};
                \node [] (I4) at (9,4) {$o_7$};

                \node [] (A5) at (1,3.5) {$\sigma _5:$};
                \node [] (B5) at (2,3.5) {$o_2$};
                \node [] (C5) at (3,3.5) {$o_3$};
                \node [] (D5) at (4,3.5) {$o_4$};
                \node [] (E5) at (5,3.5) {$o_1$};
                \node [] (F5) at (6,3.5) {$o_5$};
                \node [] (G5) at (7,3.5) {$o_8$};
                \node [] (H5) at (8,3.5) {$o_6$};
                \node [] (I5) at (9,3.5) {$o_7$};

                \node [] (A6) at (1,3) {$\sigma _6:$};
                \node [] (B6) at (2,3) {$o_2$};
                \node [] (C6) at (3,3) {$o_3$};
                \node [] (D6) at (4,3) {$o_4$};
                \node [] (E6) at (5,3) {$o_1$};
                \node [] (F6) at (6,3) {$o_8$};
                \node [] (G6) at (7,3) {$o_5$};
                \node [] (H6) at (8,3) {$o_6$};
                \node [] (I6) at (9,3) {$o_7$};

                \node [] (A7) at (1,2.5) {$\sigma _7:$};
                \node [] (B7) at (2,2.5) {$o_2$};
                \node [] (C7) at (3,2.5) {$o_3$};
                \node [] (D7) at (4,2.5) {$o_4$};
                \node [] (E7) at (5,2.5) {$o_8$};
                \node [] (F7) at (6,2.5) {$o_1$};
                \node [] (G7) at (7,2.5) {$o_5$};
                \node [] (H7) at (8,2.5) {$o_6$};
                \node [] (I7) at (9,2.5) {$o_7$};

                \node [label = center:$P:$] (A8) at (1,6.5) {};
                \node [empty,label = center:1] (B8) at (2,6.5) {};
                \node [empty,label = center:2] (C8) at (3,6.5) {};
                \node [empty,label = center:3] (D8) at (4,6.5) {};
                \node [empty,label = center:4] (E8) at (5,6.5) {};
                \node [empty,label = center:5] (F8) at (6,6.5) {};
                \node [empty,label = center:6] (G8) at (7,6.5) {};
                \node [empty,label = center:7] (H8) at (8,6.5) {};
                \node [empty,label = center:8] (I8) at (9,6.5) {};

                \draw (B8) -- (C8);
                \draw (C8) -- (D8);
                \draw (D8) -- (E8);
                \draw (E8) -- (F8);
                \draw (F8) -- (G8);
                \draw (G8) -- (H8);
                \draw (H8) -- (I8);

                \draw[<->] (B) -- (C);
                \draw[<->] (C1) -- (D1);
                \draw[<->] (D2) -- (E2);
                \draw[<->] (H3) -- (I3);
                \draw[<->] (G4) -- (H4);
                \draw[<->] (F5) -- (G5);
                \draw[<->] (E6) -- (F6);
            \end{tikzpicture}
            \caption{The sequence of swaps to reach the compatible assignment for Example~\ref{exp_2}} \label{fig_ans}
    \end{figure}
\end{example}

\textbf{Remark:} Although the algorithm will compute two possible values $i_l$ and $i_r$ for each $i$, it does not mean that object $o_i$ must be reachable for both $i_l$ and $i_r$.
In the above example, object $o_2$ is not  reachable for agent $2_r=6$ since agent 3 prefers its initially endowed object $o_3$ to $o_2$ and then $o_2$ cannot go to the right side.
In our algorithm, the compatible clauses can avoid assigning an object to an unreachable value $i_l$ or $i_r$. In the above example, if object $o_2$ goes to agent 6, then some object $o_i$ with $i>2$ will go to agent 1 or 2 and we will get an incompatible pair $o_2$ and $o_i$.

\section{Weak Preference Version in Paths}\label{sec-wpath}
We have proved that \textsc{Object Reachability} in paths is polynomial-time solvable. Next, we show that  \textsc{Weak Object Reachability} in paths is NP-hard.
One of the most important properties is that  Lemma~\ref{lem_nondecreasing} does not hold for  \textsc{Weak Object Reachability} and an object may `visit' an agent more than once.
Our proof involves a similar high-level idea as that of the NP-hardness proof for \textsc{Object Reachability} in a tree by~\citet{DBLP:conf/ijcai/GourvesLW17}.


The NP-hardness is proved by a reduction from the known NP-complete problem \textsc{2P1N-SAT}~\citep{yoshinaka2005higher}.
In a \textsc{2P1N-SAT} instance, we are given a set $V=\{ v_1,v_2,\dots,v_n\}$ of variables and a set $\mathcal{C}=\{C_1,C_2,\dots,C_m\}$ of clauses over $V$ such that every variable occurs 3 times in $\mathcal{C}$ with 2 positive literals and 1 negative literal.  The question is to check whether there is a variable assignment satisfying $\mathcal{C}$.
For a \textsc{2P1N-SAT} instance $I_{SAT}$, we construct an instance $I_{WOR}$ of \textsc{Weak Object Reachability} in paths such that $I_{SAT}$ is a yes-instance if and only if $I_{WOR}$ is a yes-instance.

Instance $I_{WOR}$ contains $6n+m+1$ agents and objects, which are constructed as follows.
There is an agent named $T$.
For each clause $C_i$ ($i\in\{1,\dots,m\}$), we introduce an agent also named $C_i$.
For each variable $v_i$, we add six agents, named as  $\overline{X}_i^{n_i}, {X}_i^{p_i},{X}_i^{q_i}, A_i^3, A_i^2$ and $A_i^1$. They form a path of length 5 in this order.
The path is called a \emph{block} and is denoted by $B_i$. See Figure~\ref{fig_bi}.
The names of the six agents have certain meaning: $\overline{X}_i^{n_i}$ means that the negative literal of $v_i$ appears in clause $C_{n_i}$;
${X}_i^{p_i}$ and ${X}_i^{q_i}$ mean that the positive literals of $v_i$ appear in the two clauses $C_{p_i}$ and $C_{q_i}$;  $A_i^3, A_i^2$ and $A_i^1$ are three auxiliary agents.

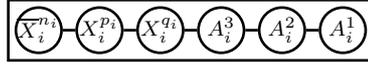
\begin{figure}[htbp]
        \centering
        \begin{tikzpicture}
            [scale =0.8, line width = 1pt,solid/.style = {circle, draw, fill = black, minimum size = 0.3cm},empty/.style = {circle, draw, fill = white, minimum size = 0.6cm}]
            \node [empty, label = center:\small$\overline{X}_i^{n_i}$] (A) at (1,1) {};
            \node [empty, label = center:$X_i^{p_i}$] (B) at (2,1) {};
            \node [empty, label = center:$X_i^{q_i}$] (C) at (3,1) {};
            \node [empty, label = center:$A_i^3$] (D) at (4,1) {};
            \node [empty, label = center:$A_i^2$] (E) at (5,1) {};
            \node [empty, label = center:$A_i^1$] (F) at (6,1) {};
            \draw (A) -- (B);
            \draw (B) -- (C);
            \draw (C) -- (D);
            \draw (D) -- (E);
            \draw (E) -- (F);
            \draw (0.5,0.5) rectangle (6.5,1.5);
        \end{tikzpicture}
        \caption{Block $B_i$} \label{fig_bi}
\end{figure}
The whole path is connected in the order shown in Figure~\ref{fig_path}.
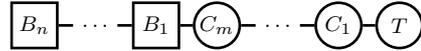
\begin{figure}[htbp]
    \centering
       \begin{tikzpicture}
        [scale = 0.8, line width = 1pt,solid/.style = {rectangle, draw, fill = white, minimum size = 0.6cm},empty/.style = {circle, draw, fill = white, minimum size = 0.6cm}]
        \node [solid, label = center:$B_n$,] (A) at (1,1) {};
        \node [] (B) at (2,1) {$\dots$};
        \node [solid, label = center:$B_1$] (C) at (3,1) {};
        \node [empty, label = center:$C_m$] (D) at (4,1) {};
        \node [] (E) at (5,1) {$\dots$};
        \node [empty, label = center:$C_1$] (F) at (6,1) {};
        \node [empty, label = center:$T$] (G) at (7,1) {};

        \draw (A) -- (B);
        \draw (B) -- (C);
        \draw (C) -- (D);
        \draw (D) -- (E);
        \draw (E) -- (F);
        \draw (F) -- (G);
    \end{tikzpicture}
    \caption{The structure of the whole path} \label{fig_path}
\end{figure}

In the initial assignment $\sigma_0$, object $t$ is assigned to agent $T$,
object $c_i$ is assigned to agent $C_i$ for $i\in \{1,2,\dots,m\}$,
object $a_i^j$ is assigned to agent $A_i^j$ for each $i\in \{1,2,\dots,n\}$ and $j\in \{1,2,3\}$, and
$\overline{o}_i^{n_i}$ (resp., ${o}_i^{p_i}$ and ${o}_i^{q_i}$) is assigned to agent $\overline{X}_i^{n_i}$ (resp., ${X}_i^{p_i}$ and ${X}_i^{q_i}$) for each $i\in \{1,2,\dots,n\}$.

Next, we define the preference profile $\succeq$. We only show the objects that each agent prefers at least as its initial one
and all other objects can be put behind its initial endowment in any order.
Let $L_i$ be the set of the objects associated with the literals in clause $C_i$. For example, $L_i$ is the object set $\{\overline{o}_a^{n_a}, {o}_b^{p_b}, {o}_c^{q_c}\}$ with $n_a=i$, $p_b=i$ and $q_c=i$.
The set $L_i$ will have the following property: in a reachable assignment, one object in $L_i$ will be assigned to agent $C_i$, which will be corresponding to the true literal in the clause $C_i$ in the
\textsc{2P1N-SAT} instance $I_{SAT}$.
For each variable $v_i$, we use $W_i$ to denote the set of objects $\{ c_1,\dots,c_m\}\cup\{ \overline{o}_j^{n_j}:j>i\}\cup\{ o_j^{p_j}:j\neq i\}\cup \{ o_j^{q_j}:j>i\}\cup\{ a_j^l:j<i,l=1,2,3\}$.
We are ready to give the preferences for the agents.

First, we consider the preferences for $T$ and $C_i$. The following preferences ensure that when
$C_i$ holds an object in $L_i$ for each $i\in \{1,\dots, m\}$, object $t$ is reachable for agent $C_m$ via a sequence of $m$ swaps between $C_i$ and $C_{i-1}$ for $i=1,\dots,m$, where $C_0=T$. Note that the squares in the preferences indicate the initial endowments for agents.\\

\noindent $T:L_1\succ$\fbox{$t$} ;\\
$C_i:L_{i+1}\succ t\succ L_i\succ c_1\succ L_{i-1}\succ \dots\succ c_{i-1}\succ L_1\succ$\fbox{$c_i$} , for all $1\leq i\leq m-1$ ;\\
$C_m:t\succ L_m\succ c_1\succ L_{m-1}\succ \dots\succ c_{m-1}\succ L_1\succ$\fbox{$c_m$} .\\

Next, we consider the preferences for the agents in each block $B_i$.
The following preferences ensure that at most one of $\overline{o}_i^{n_i}$ and $\{o_i^{p_i}, o_i^{q_i}\}$ can be moved to the right of the block,
which will indicate that the corresponding variable is set to 1 or 0. If $\overline{o}_i^{n_i}$ is moved to the right of the block, we will assign the corresponding variable 0;
if some of $\{o_i^{p_i}, o_i^{q_i}\}$ is moved to the right of the block, we will assign the corresponding variable 1.
We use the preference of $A_i^3$ to control this. Furthermore, we use $A_i^1$, $A_i^2$ and $A_i^3$ to (temporarily) hold  $\overline{o}_i^{n_i}$ (or $o_i^{p_i}$ and $o_i^{q_i}$) if they do not need
to be moved to the right of the block.\\

\noindent
$\overline{X}_i^{n_i}:W_i\cup\{ a_i^1,a_i^2,a_i^3,o_i^{p_i},o_i^{q_i}\}\succ$\fbox{$\overline{o}_i^{n_i}$} ,\\
$X_i^{q_i}:W_i\cup\{ a_i^1,a_i^2,a_i^3,\overline{o}_i^{n_i},o_i^{p_i}\}\succ$\fbox{$o_i^{q_i}$} ,\\
$X_i^{p_i}:W_i\cup\{ a_i^1,a_i^2,a_i^3,o_i^{q_i},\overline{o}_i^{n_i}\}\succ$\fbox{$o_i^{p_i}$} ,\\
$A_i^1:W_i\cup\{$\fbox{$a_i^1$}$,a_i^2,a_i^3,o_i^{p_i},o_i^{q_i},\overline{o}_i^{n_i}\}$ ,\\
$A_i^2:W_i\cup\{ a_i^1,$\fbox{$a_i^2$}$,a_i^3,o_i^{p_i},o_i^{q_i},\overline{o}_i^{n_i}\}$ ,\\
$A_i^3:W_i\cup\{ a_i^1,a_i^2,o_i^{p_i},o_i^{q_i}\}\succ\overline{o}_i^{n_i}\succ$\fbox{$a_i^3$} , for all $1\leq i\leq n$.\\

Instance $I_{WOR}$ is to determine whether object $t$ is reachable for agent $C_m$.

\begin{lemma}
    \label{lem_2p1n}
A \textsc{2P1N-SAT} instance $I_{SAT}$ is a yes-instance if and only if the corresponding instance $I_{WOR}$ of \textsc{Weak Object Reachability} in paths constructed above is a yes-instance.
\end{lemma}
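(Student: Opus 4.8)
The plan is to prove the two implications separately, treating each block $B_i$ as the gadget for variable $v_i$, each agent $C_i$ as the gadget for clause $C_i$, and the objects of $L_i$ as tokens for the literals of $C_i$.

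\textbf{From a satisfying assignment to reachability.} Given a satisfying assignment $\phi$, fix for each clause $C_i$ a literal it makes true and let $\ell_i\in L_i$ be the associated object. I would build an explicit swap sequence in three stages. (1) For every block $B_i$, push to the right only those literal objects of $v_i$ that equal some $\ell_j$ --- among $o_i^{p_i},o_i^{q_i}$ if $v_i$ is true, or $\overline{o}_i^{n_i}$ if $v_i$ is false --- parking the remaining literal objects of $B_i$ on the auxiliary agents $A_i^1,A_i^2,A_i^3$, with the objects of $W_i$ (the relevant $a_j^l$ and $c_j$) acting as free fillers; \emph{consistency of $\phi$} is exactly what guarantees that we never have to evacuate both a positive and the negative token of one variable, so the auxiliary slots suffice. (2) Route each $\ell_i$ rightward along the path to agent $C_i$, processing $i=1,\dots,m$ in order; the clause agents' preferences are tailored so that the $c_j$'s shuffle among the $C_h$'s as fillers and every such swap is individually rational. (3) Perform the chain $T=C_0,C_1,\dots,C_m$: inductively $C_{i-1}$ holds $t$ and $C_i$ holds $\ell_i\in L_i$, and since $C_{i-1}$ strictly prefers $L_i$ to $t$ while $C_i$ strictly prefers $t$ to $L_i$, the swap is rational and passes $t$ to $C_i$; after step $m$, agent $C_m$ holds $t$. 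The correctness of (1)--(3) is a long but routine check against the stated profiles.

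\textbf{From reachability to a satisfying assignment.} Suppose $t$ is reachable for $C_m$. Since every agent strictly prefers its own endowment to every object not explicitly listed (every ``tail'' object), Observation~\ref{obs_1} yields two facts: (i) $t$ never leaves the subpath $C_m,\dots,C_1,T$, because the neighbour $A_1^1$ of $C_m$ starts with a listed object and so never accepts the tail object $t$; and (ii) for each $i\in\{1,\dots,m\}$, the first time $t$ reaches $C_i$ it comes from $C_{i-1}$, and in that swap $C_{i-1}$ (then holding $t$) must receive an object it strictly prefers to $t$, which in its list can only be an object of $L_i$ --- hence agent $C_i$ holds an $L_i$-object at that moment. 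Such an object is a literal token $\overline{o}_j^{n_j},o_j^{p_j}$ or $o_j^{q_j}$ born in block $B_j$, and since it reaches a clause agent it must have left $B_j$ to the right. Set $\phi(v_j)=\textsf{false}$ if $\overline{o}_j^{n_j}$ ever leaves $B_j$, $\phi(v_j)=\textsf{true}$ if some $o_j^{p_j}$ or $o_j^{q_j}$ ever leaves $B_j$, and arbitrarily otherwise; the tokens witnessed at the clause agents then certify that $\phi$ satisfies every clause. So it remains to prove the \emph{Gadget Lemma}: for every $j$, block $B_j$ cannot release both $\overline{o}_j^{n_j}$ and one of $\{o_j^{p_j},o_j^{q_j}\}$ to its right.

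\textbf{The main obstacle.} The Gadget Lemma is the technical heart, and the weak-preference setting makes it delicate: Lemma~\ref{lem_nondecreasing} fails, objects may revisit agents, and ``leaving a block'' is not permanent. The key leverage is the asymmetric preference of $A_i^3$, whose top (tied) block is strictly above $\overline{o}_i^{n_i}$, which is in turn strictly above its endowment $a_i^3$: once $A_i^3$ holds any top-block object it does so forever and can never again accept $\overline{o}_i^{n_i}$ or move it rightward. Combining this with the facts that the auxiliary objects $a_j^l$ can never leave $B_j$ to the right and that $W_i$ restricts which foreign objects the block agents (especially $A_i^3$) accept, one shows that pushing the negative token across $B_j$ freezes the right half of $B_j$ in a configuration from which no positive token can escape, and symmetrically. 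I expect the write-up of this step to require a careful case analysis on the order in which the internal edges of $B_j$ are first traversed rightward.
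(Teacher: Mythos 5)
Your proposal follows essentially the same route as the paper: the same selected-literal construction for the forward direction (stage tokens inside each block, then route one $L_i$-object to each $C_i$, then pass $t$ down the clause chain), and for the converse the same reduction to the single claim that no block can release both its negative token and a positive token to the right. The only substantive issue is that this claim -- your ``Gadget Lemma'' -- is exactly the technical heart of the paper's proof and you leave it open, and the mechanism you sketch for it is not quite the right one: after $\overline{o}_i^{n_i}$ passes through $A_i^3$, the right half of $B_i$ is \emph{not} frozen (the agents $A_i^3,A_i^2,A_i^1$ then hold top-tier objects and remain free to trade, as they must for tokens of other blocks to transit). The paper's actual argument is on the left half: since $\overline{o}_i^{n_i}$ is second-to-last for $A_i^3$, agent $A_i^3$ cannot trade at all before receiving $\overline{o}_i^{n_i}$, so the negative token must traverse $X_i^{p_i}$ and $X_i^{q_i}$ and reach $A_i^3$ while $A_i^3$ still holds $a_i^3$; this forces $X_i^{p_i}$ and $X_i^{q_i}$ to have parted with their endowments $o_i^{p_i}$ and $o_i^{q_i}$, and because each of these agents ranks its own endowment strictly below everything else it will ever hold, it never re-accepts it -- so the displaced positive tokens can never cross back rightward past their home agents and hence never leave the block. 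With that replacement your case analysis collapses to a few lines and the rest of your argument goes through.
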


\begin{proof}
First, we prove that a reachable assignment of $I_{WOR}$  implies a satisfying assignment for $I_{SAT}$.
     Assume that object $t$ is reachable for agent $C_m$. Then there are $m$ swaps including $t$ that happen between $C_i$ and $C_{i+1}$ for $i\in\{0,1,\dots, m-1\}$, where $C_0=T$. Note that the swap between $C_i$ and $C_{i+1}$ (where $C_i$ holds object $t$) can happen if and only if $C_{i+1}$ holds an object $a\in L_{i+1}$.
    We will prove the claim that for each block $B_i$, it is impossible that both $\overline{o}_i^{n_i}$ and one of ${o}_i^{p_i}$ and ${o}_i^{q_i}$ are moved to the right of this block.
    By this claim, we can see that a satisfying assignment for $I_{SAT}$ can be obtained by letting the literal corresponding to the object $a\in L_i$ to 1 for all agents $C_i$.

Next, we prove the claim that for each block $B_i$, it is impossible that both $\overline{o}_i^{n_i}$ and one of ${o}_i^{p_i}$ and ${o}_i^{q_i}$ are moved to the right of this block.
    Assume to the contrary that there is a block $B_i$ such that both $\overline{o}_n^{n_i}$ and one of ${o}_i^{p_i}$ and ${o}_i^{q_i}$ are moved to the right of this block.
    When $\overline{o}_k^{n_k}$ is moved to the right of this block, it must reach $A_k^3$ at some time.
    We know that $\overline{o}_i^{n_i}$ is the last but one in agent $A_i^3$'s given preferences. This implies that before $\overline{o}_i^{n_i}$ is swapped with $a_i^3$ between $X_i^{q_i}$ and $A_i^3$, $A_i^3$ cannot participate in any other swap.
    Therefore, the only way to move $\overline{o}_k^{n_k}$ to $A_i^3$ is that $\overline{o}_k^{n_k}$ is swapped with ${o}_i^{p_i}$, ${o}_i^{q_i}$ and $a_i^3$ in this order.
    For each agent of $\{\overline{X}_i^{n_i},X_i^{p_i},X_i^{q_i}\}$, once its endowment object is swapped with an object, the endowment object
    will not come back to the agent again.
    Thus, both $o_i^{p_i}$ and $o_i^{q_i}$ cannot be moved to the right of this block, which is a contradiction.

    On the other hand, if there is a satisfying assignment $\tau$ for $I_{SAT}$, we can construct a reachable assignment for $I_{WOR}$.
    We consider the value of each variable $v_i$ in $\tau$. If $v_i=1$, we move ${o}_i^{p_i}$ and ${o}_i^{q_i}$ to agents $A_i^3$ and $A_i^2$, respectively.
    The sequence of swaps is shown in Figure~\ref{casev1}. If $v_i=0$, we move $\overline{o}_i^{n_i}$ to agent $A_i^2$.
    The sequence of swaps is shown in Figure~\ref{casev0}. Note that for any case, all the swaps happen within a block.
    The objects ${o}_i^{p_i}$, ${o}_i^{q_i}$ and $\overline{o}_i^{n_i}$ in the above procedure are called \emph{true objects}.

    \begin{figure}[htbp]
        \centering
            \begin{tikzpicture}
                [line width = 1pt,solid/.style = {circle, draw, fill = black, minimum size = 0.3cm},empty/.style = {circle, draw, fill = white, minimum size = 0.6cm}]
                \node [empty, label = center:\tiny$\overline{X}_i^{n_i}$] (A) at (1,6) {};
                \node [empty, label = center:\tiny$X_i^{p_i}$] (B) at (2,6) {};
                \node [empty, label = center:\tiny$X_i^{q_i}$] (C) at (3,6) {};
                \node [empty, label = center:\small$A_i^3$] (D) at (4,6) {};
                \node [empty, label = center:\small$A_i^2$] (E) at (5,6) {};
                \node [empty, label = center:\small$A_i^1$] (F) at (6,6) {};

                \node [] (A1) at (1,5.5) {\small$\overline{x}_i^{n_i}$};
                \node [] (B1) at (2,5.5) {\small$x_i^{p_i}$};
                \node [] (C1) at (3,5.5) {\small$x_i^{q_i}$};
                \node [] (D1) at (4,5.5) {\small$a_i^3$};
                \node [] (E1) at (5,5.5) {\small$a_i^2$};
                \node [] (F1) at (6,5.5) {\small$a_i^1$};

                \node [] (A2) at (1,5) {\small$\overline{x}_i^{n_i}$};
                \node [] (B2) at (2,5) {\small$x_i^{p_i}$};
                \node [] (C2) at (3,5) {\small$a_i^3$};
                \node [] (D2) at (4,5) {\small$x_i^{q_i}$};
                \node [] (E2) at (5,5) {\small$a_i^2$};
                \node [] (F2) at (6,5) {\small$a_i^1$};

                \node [] (A3) at (1,4.5) {\small$\overline{x}_i^{n_i}$};
                \node [] (B3) at (2,4.5) {\small$x_i^{p_i}$};
                \node [] (C3) at (3,4.5) {\small$a_i^3$};
                \node [] (D3) at (4,4.5) {\small$a_i^2$};
                \node [] (E3) at (5,4.5) {\small$x_i^{q_i}$};
                \node [] (F3) at (6,4.5) {\small$a_i^1$};

                \node [] (A4) at (1,4) {\small$\overline{x}_i^{n_i}$};
                \node [] (B4) at (2,4) {\small$a_i^3$};
                \node [] (C4) at (3,4) {\small$x_i^{p_i}$};
                \node [] (D4) at (4,4) {\small$a_i^2$};
                \node [] (E4) at (5,4) {\small$x_i^{q_i}$};
                \node [] (F4) at (6,4) {\small$a_i^1$};

                \node [] (A5) at (1,3.5) {\small$\overline{x}_i^{n_i}$};
                \node [] (B5) at (2,3.5) {\small$a_i^3$};
                \node [] (C5) at (3,3.5) {\small$a_i^2$};
                \node [] (D5) at (4,3.5) {\small$x_i^{p_i}$};
                \node [] (E5) at (5,3.5) {\small$x_i^{q_i}$};
                \node [] (F5) at (6,3.5) {\small$a_i^1$};
                \draw (A) -- (B);
                \draw (B) -- (C);
                \draw (C) -- (D);
                \draw (D) -- (E);
                \draw (E) -- (F);
                \draw[<->] (C1) -- (D1);
                \draw[<->] (D2) -- (E2);
                \draw[<->] (B3) -- (C3);
                \draw[<->] (C4) -- (D4);
            \end{tikzpicture}
            \caption{The sequences of swaps to move ${o}_i^{p_i}$ and ${o}_i^{q_i}$ to agents $A_i^3$ and $A_i^2$, respectively}\label{casev1}
    \end{figure}
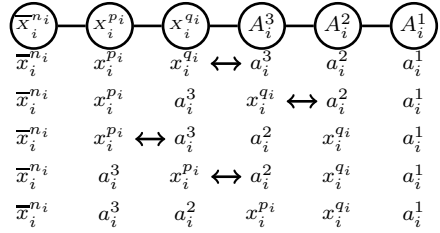

    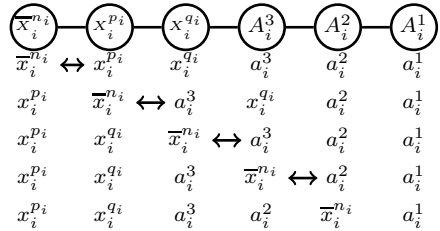
\begin{figure}[htbp]
        \centering
            \begin{tikzpicture}
                [line width = 1pt,solid/.style = {circle, draw, fill = black, minimum size = 0.3cm},empty/.style = {circle, draw, fill = white, minimum size = 0.6cm}]
                \node [empty, label = center:\tiny$\overline{X}_i^{n_i}$] (A) at (1,6) {};
                \node [empty, label = center:\tiny$X_i^{p_i}$] (B) at (2,6) {};
                \node [empty, label = center:\tiny$X_i^{q_i}$] (C) at (3,6) {};
                \node [empty, label = center:\small$A_i^3$] (D) at (4,6) {};
                \node [empty, label = center:\small$A_i^2$] (E) at (5,6) {};
                \node [empty, label = center:\small$A_i^1$] (F) at (6,6) {};

                \node [] (A1) at (1,5.5) {\small$\overline{x}_i^{n_i}$};
                \node [] (B1) at (2,5.5) {\small$x_i^{p_i}$};
                \node [] (C1) at (3,5.5) {\small$x_i^{q_i}$};
                \node [] (D1) at (4,5.5) {\small$a_i^3$};
                \node [] (E1) at (5,5.5) {\small$a_i^2$};
                \node [] (F1) at (6,5.5) {\small$a_i^1$};

                \node [] (A2) at (1,5) {\small$x_i^{p_i}$};
                \node [] (B2) at (2,5) {\small$\overline{x}_i^{n_i}$};
                \node [] (C2) at (3,5) {\small$a_i^3$};
                \node [] (D2) at (4,5) {\small$x_i^{q_i}$};
                \node [] (E2) at (5,5) {\small$a_i^2$};
                \node [] (F2) at (6,5) {\small$a_i^1$};

                \node [] (A3) at (1,4.5) {\small$x_i^{p_i}$};
                \node [] (B3) at (2,4.5) {\small$x_i^{q_i}$};
                \node [] (C3) at (3,4.5) {\small$\overline{x}_i^{n_i}$};
                \node [] (D3) at (4,4.5) {\small$a_i^3$};
                \node [] (E3) at (5,4.5) {\small$a_i^2$};
                \node [] (F3) at (6,4.5) {\small$a_i^1$};

                \node [] (A4) at (1,4) {\small$x_i^{p_i}$};
                \node [] (B4) at (2,4) {\small$x_i^{q_i}$};
                \node [] (C4) at (3,4) {\small$a_i^3$};
                \node [] (D4) at (4,4) {\small$\overline{x}_i^{n_i}$};
                \node [] (E4) at (5,4) {\small$a_i^2$};
                \node [] (F4) at (6,4) {\small$a_i^1$};

                \node [] (A5) at (1,3.5) {\small$x_i^{p_i}$};
                \node [] (B5) at (2,3.5) {\small$x_i^{q_i}$};
                \node [] (C5) at (3,3.5) {\small$a_i^3$};
                \node [] (D5) at (4,3.5) {\small$a_i^2$};
                \node [] (E5) at (5,3.5) {\small$\overline{x}_i^{n_i}$};
                \node [] (F5) at (6,3.5) {\small$a_i^1$};
                \draw (A) -- (B);
                \draw (B) -- (C);
                \draw (C) -- (D);
                \draw (D) -- (E);
                \draw (E) -- (F);

                \draw[<->] (A1) -- (B1);
                \draw[<->] (B2) -- (C2);
                \draw[<->] (C3) -- (D3);
                \draw[<->] (D4) -- (E4);

            \end{tikzpicture}
            \caption{The sequences of swaps to move $\overline{o}_i^{n_i}$ to agent $A_i^2$}\label{casev0}
    \end{figure}
    After this procedure, we move only one of true objects in $L_j$ to each agent $C_j$ one by one in an order where $C_j$ with smaller $j$ first gets its object in $L_j$ (for every agent in blocks, almost all objects in her given preferences are indifferent with each other).
    During this procedure, once another true object 
    is moved out of its position $A_i^3$ or $A_i^2$ (this may happen when the true object is on the moving path of another true object to $C_{j'}$ with $j<j'$), we will simply move it back to $A_i^3$ or $A_i^2$  by one swap. We can do this swap because almost all objects except $\overline{o}_i^{n_i}$ and $a_i^3$
    are equivalent for $A_i^3$, $A_i^2$ and $A_i^1$.
    So in each iteration, only one true object is moved out of its current position $A_i^3$ or $A_i^2$ and it is moved to its final position $C_j$ directly.
    Note that $C_{i+1}$ holds an object $a\in L_{i+1}$ for $i$ from $0$ to $m-1$ now and $C_0=T$ holds the object $t$. We make a swap between $C_i$ and $C_{i+1}$ for $i$ from $0$ to $m-1$, and then
    the object $t$ will be moved to agent $C_m$.
    \qed
\end{proof}

Lemma~\ref{lem_2p1n} implies that

\begin{theorem}
    \label{the_wor}
      \textsc{Weak Object Reachability} is NP-hard even when the network is a path.
\end{theorem}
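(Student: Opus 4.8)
The plan is to obtain Theorem~\ref{the_wor} as an immediate consequence of Lemma~\ref{lem_2p1n} together with the known NP-completeness of \textsc{2P1N-SAT}~\citep{yoshinaka2005higher}. Concretely, I would argue that the map $I_{SAT}\mapsto I_{WOR}$ specified just before Lemma~\ref{lem_2p1n} is a polynomial-time many-one reduction, and then invoke Lemma~\ref{lem_2p1n} for correctness.

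First I would check that the reduction is computable in polynomial time. Given a \textsc{2P1N-SAT} instance with $n$ variables and $m$ clauses, the constructed instance $I_{WOR}$ has exactly $6n+m+1$ agents and the same number of objects; its path structure (the blocks $B_n,\dots,B_1$, then the clause agents $C_m,\dots,C_1$, then $T$) and the indices $n_i,p_i,q_i$ attached to the block agents are read off directly from the pattern of literal occurrences in $\mathcal{C}$, which costs a single pass over the clauses. Each agent's preference list (the sets $L_i$, $W_i$, the auxiliary objects $a_i^j$, with the remaining objects appended in arbitrary order) has length at most the number of objects, so the full description of $I_{WOR}$, including $\succeq$, has size polynomial in $n+m$ and is produced in polynomial time. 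Since \textsc{2P1N-SAT} instances satisfy $m=O(n)$ (each variable contributes exactly three literal occurrences), the whole construction is in fact of size $O(n^2)$.

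Second, Lemma~\ref{lem_2p1n} already establishes that $I_{SAT}$ is a yes-instance of \textsc{2P1N-SAT} if and only if $I_{WOR}$ is a yes-instance of \textsc{Weak Object Reachability} in paths. Combining this equivalence with the polynomial-time computability of the reduction and the NP-completeness of \textsc{2P1N-SAT}, I conclude that \textsc{Weak Object Reachability} restricted to paths is NP-hard; because every path is a special case of a tree and of a (connected) general graph, the same hardness carries over to trees and to general graphs, which is what Table~\ref{table1} records for Theorem~\ref{the_wor}.

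The only work genuinely left to the theorem is the routine size/time bookkeeping of the previous paragraph, so there is no real obstacle at this level. Had I needed to prove everything from scratch, the hard part would lie inside Lemma~\ref{lem_2p1n}, specifically the forward direction: showing that whenever $t$ reaches $C_m$, no block $B_i$ can simultaneously push $\overline{o}_i^{n_i}$ and one of $o_i^{p_i},o_i^{q_i}$ to the right of the block, so that declaring a literal true precisely when its object is delivered to the corresponding $C_i$ yields a \emph{consistent} satisfying assignment. This is subtle exactly because, under weak preferences, Lemma~\ref{lem_nondecreasing} fails and an object may revisit an agent, so one must exploit the fact that $\overline{o}_i^{n_i}$ sits second-to-last in $A_i^3$'s preference list to forbid the forbidden swap ordering; but since Lemma~\ref{lem_2p1n} already handles this, the proof of Theorem~\ref{the_wor} is a one-line deduction.
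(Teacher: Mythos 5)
Your proposal is correct and follows exactly the paper's route: the paper also obtains Theorem~\ref{the_wor} as an immediate consequence of Lemma~\ref{lem_2p1n} (stating simply that the lemma implies the theorem), with the polynomial-time computability of the construction left implicit. Your explicit bookkeeping of the reduction's size and running time is a harmless addition that only makes the deduction more complete.
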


The above reduction from \textsc{2P1N-SAT} can also be used to prove the NP-hardness of \textsc{Weak Pareto Efficiency} and the problem of maximizing the utilitarian social welfare.
We will use the following lemma.

\begin{lemma}\label{newadded1}
For the above instance $I_{WOR}$ constructed from the \textsc{2P1N-SAT} instance $I_{SAT}$, if $I_{WOR}$ is a yes-instance, then there is a reachable assignment $\sigma$ where every agent holds one of its most favorite objects.
\end{lemma}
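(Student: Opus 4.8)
The plan is to piggyback on the reachable assignment that the proof of Lemma~\ref{lem_2p1n} already constructs. Since $I_{WOR}$ is a yes-instance, Lemma~\ref{lem_2p1n} supplies a satisfying truth assignment $\tau$ of $I_{SAT}$; feeding $\tau$ into the ``if'' direction of that proof produces a reachable assignment $\sigma$ in which $T$ holds an object of $L_1$, $C_i$ holds an object of $L_{i+1}$ for $1\le i\le m-1$, and $C_m$ holds $t$. The first (trivial) step is to read off the explicit preference lists of $T,C_1,\dots,C_m$: each of these objects lies in that agent's top indifference class, so all $m+1$ of these agents already hold a most preferred object, and we never touch them again.

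The bulk of the argument concerns the six agents of each block $B_i$. The governing fact is that a block agent's preference list is one huge top class plus at most a couple of low-ranked objects: the only objects \emph{not} in the top class of \emph{some} agent of $B_i$ are its four ``private'' objects $\overline{o}_i^{n_i}$ (low for $\overline{X}_i^{n_i}$ and for $A_i^3$), $o_i^{p_i}$ (low for $X_i^{p_i}$), $o_i^{q_i}$ (low for $X_i^{q_i}$), $a_i^3$ (lowest for $A_i^3$), together with $t$ and certain objects originating in \emph{other} blocks. I would then verify, by tracking the moves prescribed in the proof of Lemma~\ref{lem_2p1n}, that $t$ and all those other-block objects stay away from $B_i$ entirely --- every object that is ``junk'' for an agent of $B_i$ only ever travels away from $B_i$ along the path (objects from lower-indexed blocks move further right toward the clauses, the foreign $a$-objects of higher-indexed blocks move further left). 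Hence every object that does occupy an agent of $B_i$ in $\sigma$ is either one of the four private objects of $B_i$ or is top-ranked by all six agents of $B_i$; consequently the only agents that can be unhappy in $\sigma$ are block agents holding their own private object.

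Finally I would run a constant-length clean-up inside each block. Whenever a private object of $B_i$ sits at a forbidden agent, that agent holds its own worst object (or, for $\overline{o}_i^{n_i}$ at $A_i^3$, its next-to-worst), so every swap it makes is individually rational for it; and each private object is top-ranked by a suitable neighbour of every agent for which it is bad (e.g. $o_i^{p_i}$ and $\overline{o}_i^{n_i}$ are top for $\overline{X}_i^{n_i}$, $o_i^{q_i}$ is top for $X_i^{p_i}$, $a_i^3$ and $\overline{o}_i^{n_i}$ are top for $A_i^2$), so one individually rational swap moves it one step out of the forbidden position, landing it somewhere that is not forbidden for it and is top-ranked by its new holder --- nobody becomes worse off. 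A brief case check (at most four private objects on a path of six agents, and the target positions never collide) shows $O(1)$ such swaps per block turn $\sigma$ into a reachable assignment in which every agent of every block also holds a top-class object; together with the first paragraph this is the desired $\sigma$. The step I expect to be the real work is the verification in the middle paragraph --- showing that no object that is junk for $B_i$ ever drifts into $B_i$ during the phase in which true objects are routed to the clause agents (equivalently, that that phase can be carried out so that each block's leftmost agent always holds a top object, allowing objects to transit through). This is the bookkeeping-heavy part, handled by performing each block's internal procedure before routing its true objects and by routing the true objects to the clauses in a suitable order.
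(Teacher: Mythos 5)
Your overall strategy is the same as the paper's -- take a reachable assignment witnessing the yes-answer, observe that $T$, the clause agents, and most block agents automatically hold top-class objects, and then perform $O(1)$ individually rational clean-up swaps inside each block -- and your final clean-up step is essentially the paper's. The difference is in how you establish that only a block's own ``private'' objects can leave a block agent unhappy. You route through the explicit assignment built in the forward direction of Lemma~\ref{lem_2p1n} and propose to track every object's trajectory to show that no ``junk'' object drifts into a block; you correctly flag this as the bookkeeping-heavy part, and it is in fact where your argument would get delicate: when a true object of $B_j$ transits rightward through the boundary between $B_{i+1}$ and $B_i$, the object displaced leftward into $A_{i+1}^1$ is whatever $\overline{X}_i^{n_i}$ holds, and when $v_i=1$ that is $\overline{o}_i^{n_i}$, which is \emph{not} in $W_{i+1}$ and hence is exactly the kind of junk-entering-a-block event you claim never happens (one must first park $\overline{o}_i^{n_i}$ at $X_i^{p_i}$, say, before the transit). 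The paper avoids all of this with a one-line monotonicity argument that works for an \emph{arbitrary} reachable assignment $\sigma$ with $\sigma(C_m)=t$: by Observation~\ref{obs_1} an agent never accepts an object it ranks below its current (hence its initial) one, so $A_i^1$ and $A_i^2$ always hold top-class objects, each of $\overline{X}_i^{n_i}$, $X_i^{p_i}$, $X_i^{q_i}$ holds either a top-class object or its own endowment, and $A_i^3$ holds a top-class object, $\overline{o}_i^{n_i}$, or $a_i^3$; a counting argument (the at most two bad objects for $A_i^3$ cannot simultaneously sit at $A_i^3$ and at both of $A_i^1,A_i^2$) then guarantees a rational swap fixes $A_i^3$, and swaps among $\{\overline{X}_i^{n_i},X_i^{p_i},X_i^{q_i}\}$ fix the rest. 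So your proof is workable but its hardest step is unnecessary; replacing your middle paragraph with this monotonicity observation yields the paper's shorter argument and frees the lemma from depending on the details of the Lemma~\ref{lem_2p1n} construction. One small slip in your clean-up paragraph: $\overline{o}_i^{n_i}$ is \emph{not} top-ranked by $\overline{X}_i^{n_i}$ (it is that agent's penultimate object); what you need, and what is true, is that $\overline{o}_i^{n_i}$ is top-ranked by the neighbor $X_i^{p_i}$, so the swap that evacuates it from $\overline{X}_i^{n_i}$ is rational for both parties.
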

\begin{proof}
    Let $\sigma$ be a reachable assignment in the above instance $I_{WOR}$ such that $\sigma (C_m)=t$.
    Clearly, each agent $C_i$ for $i\in\{ 0,1,\cdots,m\}$ must holds one of its favorite objects.
    Form the construction of the preference profile, we know that $A_i^1$ and $A_i^2$ must hold their favorite objects respectively for every $i$.
    If $\sigma(A_i^3)$ is not one of the favorite objects of $A_i^3$, then one of $\sigma(A_i^2)$ and $\sigma(A_i^1)$ is one of the favorite objects of $A_i^3$ and it can be swapped to $A_i^3$ directly.
    Let $X$ be one of $\overline{X}_i^{n_i}$, $X_i^{q_i}$ and $X_i^{p_i}$ such that $\sigma(X)$ is not one of the favorite objects of $X$ and let $Y$ be a neighbor of $X$ in $\{\overline{X}_i^{n_i},X_i^{q_i},X_i^{p_i}\}$.
    Then, we have that $\sigma(X)\succeq_Y \sigma(Y)$ and $\sigma(Y)\succeq_X \sigma(X)$. We make a swap between $X$ and $Y$ now.
    So, we can always get a reachable assignment $\sigma'$ such that each agent holds one of its favorite objects in $\sigma'$ resulting from $\sigma$.
\qed
\end{proof}

By this lemma, we know that if $I_{WOR}$ is a yes-instance, then in any Pareto optimal assignment for $I_{WOR}$ every agent holds one of its most favorite objects.
On the other hand, if there is a Pareto optimal assignment where every agent holds one of its most favorite objects, then $C_m$ must hold $t$.
Thus the above reduction also reduces \textsc{2P1N-SAT} to \textsc{Weak Pareto Efficiency} in paths.
The instance $I_{WOR}$ has a Pareto optimal assignment such that every agent holds one of its most favorite objects if and only if the \textsc{2P1N-SAT} instance $I_{SAT}$ is a yes-instance.

\begin{theorem}
  \label{the_pareto}
  \textsc{Weak Pareto Efficiency} is NP-hard even when the network is a path.
\end{theorem}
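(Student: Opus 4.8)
The plan is to recycle, essentially verbatim, the reduction from \textsc{2P1N-SAT} that proves Theorem~\ref{the_wor}. Given a \textsc{2P1N-SAT} instance $I_{SAT}$, I would build the path instance $I_{WOR}$ exactly as above and argue that a polynomial-time algorithm for \textsc{Weak Pareto Efficiency} in paths could be used to decide $I_{SAT}$. The link between the two problems is Lemma~\ref{newadded1}: whenever $I_{WOR}$ is a yes-instance it admits a reachable assignment $\tau$ in which every agent holds one of its most favorite objects. Since a path is also a tree and a general graph, the NP-hardness proved this way for paths immediately covers the tree and general-graph cases as well.

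The first substantive step is to record two elementary facts. (i) An assignment $\tau$ in which every agent holds one of its top objects is Pareto optimal, because no agent can be made weakly better off; moreover such a $\tau$ Pareto-dominates every assignment $\sigma$ in which some agent fails to hold a top object, since $\tau(i)\succeq_i\sigma(i)$ for all $i$ (as $\tau(i)$ is a top object of $i$) and this is strict for any agent not holding a top object in $\sigma$. (ii) Agent $C_m$ has a \emph{unique} top object, namely $t$, since its preference list reads $t\succ L_m\succ\cdots$. Combining (i) with Lemma~\ref{newadded1}: if $I_{WOR}$ is a yes-instance, then \emph{every} Pareto optimal reachable assignment $\sigma^{*}$ must give every agent a top object, for otherwise $\sigma^{*}$ would be Pareto-dominated by the all-top assignment $\tau$ guaranteed by Lemma~\ref{newadded1}; hence by (ii) $\sigma^{*}(C_m)=t$. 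Note that the definition of Pareto optimality quantifies over \emph{all} assignments, so using $\tau$ as the dominator is legitimate, and the fact that $\tau$ is moreover reachable is not even needed here.

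The reduction then closes via Lemma~\ref{lem_2p1n}. Run the assumed algorithm on $I_{WOR}$ to obtain a Pareto optimal reachable assignment $\sigma^{*}$, and answer ``$I_{SAT}$ is a yes-instance'' exactly when $\sigma^{*}(C_m)=t$. For soundness: if $\sigma^{*}(C_m)=t$ then $t$ is reachable for $C_m$, so $I_{WOR}$ is a yes-instance of \textsc{Weak Object Reachability}, hence $I_{SAT}$ is a yes-instance of \textsc{2P1N-SAT} by Lemma~\ref{lem_2p1n}. For completeness: if $I_{SAT}$ is a yes-instance, then $I_{WOR}$ is a yes-instance by Lemma~\ref{lem_2p1n}, and by the previous paragraph every Pareto optimal reachable assignment, in particular the returned $\sigma^{*}$, satisfies $\sigma^{*}(C_m)=t$. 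Since the construction of $I_{WOR}$ from $I_{SAT}$ is polynomial (it is literally the construction used for Theorem~\ref{the_wor}) and the test $\sigma^{*}(C_m)=t$ is trivial, this polynomial-time procedure decides \textsc{2P1N-SAT}, so \textsc{Weak Pareto Efficiency} in paths is NP-hard.

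The heavy lifting — that $I_{WOR}$ faithfully encodes $I_{SAT}$, and that a yes-instance forces an all-top reachable assignment — is already done in Lemmas~\ref{lem_2p1n} and~\ref{newadded1}. The only genuinely new ingredient is the short domination argument in the second paragraph, and the one point that needs care is the upgrade from ``$I_{WOR}$ admits \emph{one} all-top reachable assignment'' (Lemma~\ref{newadded1}) to ``\emph{every} Pareto optimal reachable assignment of $I_{WOR}$ is all-top''; I do not expect a real obstacle, but this is the step to state precisely.
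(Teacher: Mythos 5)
Your proposal is correct and follows essentially the same route as the paper: the paper also reuses the \textsc{2P1N-SAT} reduction behind Theorem~\ref{the_wor} and argues via Lemma~\ref{newadded1} that $I_{SAT}$ is satisfiable if and only if every Pareto optimal reachable assignment of $I_{WOR}$ gives each agent a top object (hence gives $t$ to $C_m$). Your explicit domination argument and the observation that $t$ is $C_m$'s unique top object are exactly the details the paper leaves implicit, and your handling of the reachability of the dominating assignment $\tau$ is sound under either reading of Pareto optimality.
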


Next, we discuss the problem of maximizing the utilitarian social welfare.
In this problem, we will associate a value function $f_i(\cdot)$ on the objects for each agent $i$.
For any two objects $o_a$ and $o_b$, the value function holds that $f_i(o_a)> f_i(o_b)$ if and only if $o_{a}\succ_i o_{b}$ (resp., $f_i(o_a)= f_i(o_b)$ if and only if $o_{a}=_i o_{b}$).
Thus, each agent has the maximum value on the most favorite objects.
By Lemma~\ref{newadded1}, in the above instance $I_{WOR}$, all agents get their maximum value, i.e., the utilitarian social welfare is maximum, if and only if the \textsc{2P1N-SAT} instance $I_{SAT}$ is a yes-instance.

\begin{theorem}
  \label{the_sw}
    Finding a reachable assignment that maximizes the utilitarian social welfare is NP-hard even when the network is a path.
\end{theorem}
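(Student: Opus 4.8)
The plan is to reuse verbatim the reduction from \textsc{2P1N-SAT} that proves Theorem~\ref{the_wor}, only now equipping the constructed instance $I_{WOR}$ with value functions. For each agent $i$ fix any integer-valued $f_i(\cdot)$ on the objects consistent with $\succeq_i$, i.e.\ $f_i(o_a)>f_i(o_b)$ iff $o_a\succ_i o_b$ and $f_i(o_a)=f_i(o_b)$ iff $o_a=_i o_b$ (such an $f_i$ exists and is polynomial-size since $\succeq_i$ is a weak order over $n$ objects). Let $U^{\ast}=\sum_{i\in N}\max_{o\in O}f_i(o)$. Because $U^{\ast}$ maximizes each summand separately, every assignment has utilitarian social welfare at most $U^{\ast}$, and an assignment attains $U^{\ast}$ \emph{iff} every agent holds one of its most favorite objects. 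Hence it suffices to show that $I_{WOR}$ admits a reachable assignment of welfare $U^{\ast}$ exactly when $I_{SAT}$ is a yes-instance; together with the fact that the construction (the one of Theorem~\ref{the_wor} plus the $f_i$'s and the threshold $U^{\ast}$) is computable in polynomial time, this is a polynomial many-one reduction from the NP-complete problem \textsc{2P1N-SAT} to the decision version of maximizing utilitarian social welfare.

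For the ``only if'' direction I would simply chain the earlier lemmas: if $I_{SAT}$ is a yes-instance then by Lemma~\ref{lem_2p1n} object $t$ is reachable for agent $C_m$, so $I_{WOR}$ is a yes-instance, and Lemma~\ref{newadded1} then produces a reachable assignment in which every agent holds one of its most favorite objects; that assignment has welfare exactly $U^{\ast}$.

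For the ``if'' direction, suppose a reachable assignment $\sigma$ has welfare $U^{\ast}$. Then, by the observation above, every agent holds one of its most favorite objects in $\sigma$. I would then argue that this forces $\sigma(C_m)=t$ by inspecting the explicit preference lists of the construction: $t$ is the unique top object of $C_m$ (it is strictly on top of $C_m$'s list), whereas $t$ is \emph{not} a most favorite object of any other agent --- it is the bottom (endowment) of $T$, it lies strictly below the whole of $L_{i+1}$ in $C_i$'s list for every $i<m$, and it does not occur in the top indifference class $W_i\cup\{\ldots\}$ of any block agent. So the only agent that can hold $t$ while holding a most favorite object is $C_m$; hence $\sigma(C_m)=t$, i.e.\ $t$ is reachable for $C_m$, and Lemma~\ref{lem_2p1n} yields that $I_{SAT}$ is a yes-instance.

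The only step needing genuine care --- the one I would treat as the main obstacle --- is this last claim that being someone's most favorite object pins $t$ down to $C_m$; everything else is bookkeeping on top of Lemma~\ref{newadded1} and Lemma~\ref{lem_2p1n}. But verifying it requires nothing more than locating $t$ in each of the $6n+m+1$ preference lists, which is immediate from the construction ($t$ is top only for $C_m$, and bottom or absent elsewhere), so no new argument about swap sequences is needed. Finally, I would note that since every path instance is also a tree (and a general graph) instance, the same reduction establishes NP-hardness of this problem on trees and on general graphs as well.
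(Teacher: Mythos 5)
Your proposal is correct and follows essentially the same route as the paper: reuse the $I_{WOR}$ reduction from \textsc{2P1N-SAT} with value functions consistent with $\succeq$, observe that maximum welfare is attained exactly when every agent holds a most favorite object, and apply Lemma~\ref{newadded1} together with Lemma~\ref{lem_2p1n} for both directions. The one step you single out as needing care --- that an all-top-objects assignment forces $\sigma(C_m)=t$ because $t$ is a most favorite object only for $C_m$ --- is exactly the observation the paper makes (in its discussion preceding Theorem~\ref{the_pareto}), and your verification of it from the preference lists is sound.
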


\section{Weak Preference Version in Stars}\label{sec-star}
In this section, we assume that the input graph is a star and that the preferences allow ties.
We will consider the reachability of an object for an agent and the problem of maximizing the utilitarian social welfare.

\subsection{\textsc{Weak Object Reachability} in Stars}
We have shown that \textsc{Weak Object Reachability} in paths is NP-hard.
Next, we show that  \textsc{Weak Object Reachability} in stars can be solved in polynomial time.
We first show several properties of the problem, which will allow us to transform the original problem to the problem of whether the target object is reachable for the center agent under some constraints. We further show that if such kinds of feasible assignments exist, then there is a sequence of swaps to reach a feasible assignment such that each leaf agent participates in at most one swap, which is called a \emph{simple} sequence. Inspired by the idea of the algorithm for the strict version by \citet{DBLP:conf/ijcai/GourvesLW17},
we construct an auxiliary graph and show that a satisfying simple sequence exists if and only if the auxiliary graph has a directed path from one given vertex to another given vertex. Thus,
we can solve the problem by finding a directed path in a graph.

Without loss of generality, we assume that the network is a star of $n$ vertices and the center agent of the star is $n$;
the initially endowed object assigned to each agent $i$ is $o_i$.
It is trivial for the case of asking whether the center object $o_n$ is reachable for a leaf agent $k$, where we only need to check whether $o_n \succeq_k o_k $ and $o_k \succeq_n o_n$.
Next, we consider whether an object of a leaf agent is reachable for another agent.
Without loss of generality, we assume that the problem is to ask whether object $o_1$ is reachable for agent $k$, where $k$ is allowed to be $n$.

Since the network is a star, each feasible swap must include the center agent $n$. This property will be frequently used in our analysis.
Next, we show more properties of the problem.

\begin{lemma}\label{lem_star_delete}
    For any agent $i_0\not\in\{1,k,n\}$, if $o_{i_0}\succ_n o_1$ or $o_n \succ_n o_{i_0}$, then deleting agent $i_0$ and its initially endowed object $o_{i_0}$ will not change the reachability of the instance.
\end{lemma}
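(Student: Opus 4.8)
The plan is to show the two instances are equivalent with respect to reachability of $o_1$ for $k$, the crux being that under either hypothesis the object $o_{i_0}$ can never move, so agent $i_0$ is a pure spectator. One direction is immediate: if $o_1$ is reachable for $k$ after deleting $i_0$, then running the very same sequence of swaps in the original instance works, since $i_0$ never participates in it (it keeps holding $o_{i_0}$ throughout) and every individual-rationality check compares only the two objects being exchanged, none of which is $o_{i_0}$; in particular the deleted preference entry for $o_{i_0}$ is never consulted.

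For the nontrivial direction, suppose $o_1$ is reachable for $k$ in the original instance and fix a witnessing sequence of swaps; truncating it, we may assume it ends at the first moment $t$ with $\sigma_t(k)=o_1$. Since the network is a star, every swap passes through the center $n$, so the last swap $(\sigma_{t-1},\sigma_t)$ is between $n$ and $k$ (or, when $k=n$, it is the swap that brings $o_1$ to $n$); in either case the center holds $o_1$ at some time $t^\ast\in\{t-1,t\}$. I will invoke the weak-preference counterpart of Observation~\ref{obs_1}, which is immediate from individual rationality of each swap: for every agent $j$ the sequence $\sigma_0(j),\sigma_1(j),\dots$ is nondecreasing with respect to $\succeq_j$. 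Applied to the center, this gives $\sigma_r(n)\succeq_n o_n$ for all $r$, and $\sigma_r(n)\preceq_n \sigma_{t^\ast}(n)=o_1$ for all $r\le t^\ast$.

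Now I claim $o_{i_0}$ never changes location along this (truncated) sequence. Since $o_{i_0}$ starts at the leaf $i_0$, if it ever moved it would first have to be swapped into the center, i.e.\ $\sigma_r(n)=o_{i_0}$ for some $r$. If $o_n\succ_n o_{i_0}$, this is impossible since $\sigma_r(n)\succeq_n o_n\succ_n o_{i_0}$ for every $r$. If instead $o_{i_0}\succ_n o_1$: were $\sigma_r(n)=o_{i_0}$ with $r\le t^\ast$, monotonicity of the center would give $o_1=\sigma_{t^\ast}(n)\succeq_n\sigma_r(n)=o_{i_0}$, contradicting $o_{i_0}\succ_n o_1$; the only other possibility is $k\ne n$ and $r=t$, but then, as the last swap is $n\leftrightarrow k$, the center would receive $\sigma_{t-1}(k)=o_{i_0}$, so $o_{i_0}$ would have been held by $k$ at time $t-1$ and hence (since $o_{i_0}$ started at $i_0\ne k$) must have entered $k$ from the center at some step $\le t-1$, putting $o_{i_0}$ at the center at a time $\le t-2\le t^\ast$ — the case just excluded. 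Hence $o_{i_0}$ stays at $i_0$, so agent $i_0$ takes part in no swap (any swap with the leaf $i_0$ would move $o_{i_0}$), and deleting agent $i_0$ and object $o_{i_0}$ (all distinct from $o_1$, from $k$, and from the center) leaves the truncated sequence intact, witnessing that $o_1$ is reachable for $k$ in the reduced instance.

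I expect the bookkeeping in the case $o_{i_0}\succ_n o_1$ to be the only delicate point: since ties are allowed, Lemma~\ref{lem_nondecreasing} fails and an object may revisit an agent, so one has to rule out $o_{i_0}$ slipping into the center at the very last swap; this is exactly what the truncation to the first arrival of $o_1$ at $k$, together with monotonicity of the center's held object, takes care of.
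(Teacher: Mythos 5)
Your proof is correct and follows essentially the same route as the paper's: in the case $o_n\succ_n o_{i_0}$ the center's monotonically improving holding never drops to $o_{i_0}$, and in the case $o_{i_0}\succ_n o_1$ the center cannot hold $o_{i_0}$ before delivering $o_1$ to $k$, so agent $i_0$ is inert and can be deleted. Your explicit truncation and the ruling out of $o_{i_0}$ entering the center at the very last swap is a more careful treatment of an edge case the paper's two-line argument glosses over, but it is the same underlying idea.
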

\begin{proof}
For any agent $i_0$ with $o_n \succ_n o_{i_0}$, it cannot participate in a feasible swap, because the center agent $n$ will not hold an object worse than its initially endowed object $o_n$.
So we can simply delete it. Next, we assume that $o_{i_0}\succ_n o_1$.
Agent $i_0$ can only be a leaf agent because $i_0\neq n$. Before agent $k$ holds object $o_1$, the center agent $n$ cannot hold object $i_0$, otherwise agent $n$ will not
get $o_1$ anymore and then agent $k$ cannot get $o_1$. Since $i_0\neq k$, we can assume that agent $i_0$ does not participate in any swap and delete it.
\qed
\end{proof}

\begin{lemma}\label{lem_star_case1}
If $o_k\succ_k o_1$ or $o_n\succ_n o_1$, then $o_1$ is not reachable for agent $k$.
\end{lemma}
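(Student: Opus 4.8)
The plan is to derive a contradiction from the assumption that $o_1$ is reachable for agent $k$, using two elementary facts: (i) along any sequence of swaps the object held by a fixed agent can only improve in that agent's (weak) preference, and (ii) in a star every swap must include the center agent $n$. I would first record the weak-preference analogue of Observation~\ref{obs_1}: since an individually rational trade gives each participant an object at least as good as the one it currently holds, for any sequence of swaps $(\sigma_0,\sigma_1,\dots,\sigma_t)$ and any agent $j$ we have $\sigma_{i+1}(j)\succeq_j\sigma_i(j)$ for all $i$, hence $\sigma_i(j)\succeq_j\sigma_0(j)=o_j$ for every $i$. This statement holds verbatim for weak preferences and does not rely on strictness (so it should be stated directly from the definition of a feasible swap rather than cited as Observation~\ref{obs_1}).

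For the first case, $o_k\succ_k o_1$: by the monotonicity fact, $\sigma_i(k)\succeq_k o_k$ for every $i$, and combining with $o_k\succ_k o_1$ gives $\sigma_i(k)\succ_k o_1$, so $\sigma_i(k)\neq o_1$ for all $i$; hence $o_1$ is never held by agent $k$. (If $k=n$ this case is subsumed by the second one.)

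For the second case, $o_n\succ_n o_1$: if $k=n$, applying the monotonicity fact to $j=n$ already gives $\sigma_i(n)\succ_n o_1$ for all $i$, and we are done; so assume $k\neq n$ (and also $k\neq 1$, since otherwise $o_1$ is trivially held by agent $k$). Suppose for contradiction that $\sigma_t(k)=o_1$ for some feasible sequence $(\sigma_0,\dots,\sigma_t)$. Since $o_1$ starts at the leaf agent $1\neq k$, it must be moved at least once, so consider the first swap $(\sigma_{r-1},\sigma_r)$ that includes $o_1$. At time $r-1$ object $o_1$ still sits at the leaf agent $1$, so this swap is between agent $1$ and the center $n$, and in it agent $n$ receives $o_1$. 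Individual rationality for agent $n$ requires $o_1\succeq_n\sigma_{r-1}(n)$; but the monotonicity fact gives $\sigma_{r-1}(n)\succeq_n o_n\succ_n o_1$, i.e. $\sigma_{r-1}(n)\succ_n o_1$, a contradiction. Hence no such sequence exists.

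The proof is short, and I do not expect any real obstacle. The only points that need care are the uniform treatment of the boundary case $k=n$ and the remark that the monotonicity observation must be used in its weak-preference form (it is immediate from the definition of an individually rational trade, whereas Observation~\ref{obs_1} as stated in the excerpt is phrased under strict preferences). Everything else follows from the star structure forcing $o_1$ through the center, where the center's monotone improvement blocks it.
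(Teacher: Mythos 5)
Your proof is correct and follows essentially the same route as the paper's: monotonicity of each agent's held object under individually rational swaps settles the case $o_k\succ_k o_1$ directly, and for $o_n\succ_n o_1$ the star structure forces $o_1$ through the center, which can never rationally accept it. The paper states this in two sentences; you merely fill in the details (the first swap involving $o_1$ must be with the center, and the boundary cases $k=n$, $k\neq 1$), which is a faithful elaboration rather than a different argument.
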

\begin{proof}
Since any agent will not receive an object worse than its initially endowed object, we know that agent $k$ will not receive $o_1$ if $o_k\succ_k o_1$.
For the case of $o_n\succ_n o_1$, the center agent $n$ will not receive $o_1$ and then $k$ will not be able to get object $o_1$.
\qed
\end{proof}

Based on Lemma~\ref{lem_star_delete} and Lemma~\ref{lem_star_case1}, we can further assume that $o_1\succeq_n o_i \succeq_n o_n$ for any $i\neq k$ and $o_1\succeq_k o_k$.

\begin{lemma}\label{lem_1k}
    If $o_1\succ_n o_k$ and $k\neq n$, then $o_1$ is not reachable for agent $k$.
\end{lemma}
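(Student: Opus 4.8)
The plan is to argue by contradiction, exploiting the fact that in a star every swap involving the leaf agent $k$ is forced to be a swap with the center $n$, and then tracking the object that travels along the single edge between $k$ and $n$. So I would assume that $o_1$ is reachable for $k$ and fix a witnessing sequence of swaps $(\sigma_0,\dots,\sigma_t)$ with $\sigma_t(k)=o_1$.

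First I would isolate the steps $s_1<\dots<s_m$ at which a swap involves $k$; since $1\neq k$ we have $\sigma_0(k)=o_k\neq o_1$, so $m\geq 1$, and because $G$ is a star and $k\neq n$, agent $k$ is a leaf, hence each of these swaps is between $k$ and $n$. Between two consecutive such steps the object held by $k$ does not change, so $k$'s holdings are described by a chain $Y_0=o_k,Y_1,\dots,Y_m$ with $Y_j=\sigma_{s_j}(k)$, and $Y_m=o_1$ because no swap involves $k$ after step $s_m$. The key step is then to observe that this chain is non-increasing for the center's preference: at step $s_j$ agents $k$ and $n$ exchange their current objects, so immediately before it $n$ holds $Y_j$ and immediately after it $n$ holds $Y_{j-1}$; since the swap is individually rational for $n$, the object $n$ ends up with is at least as good for $n$ as the one it gives away, i.e. $Y_{j-1}\succeq_n Y_j$. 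Telescoping over $j=1,\dots,m$ gives $o_k=Y_0\succeq_n Y_1\succeq_n\cdots\succeq_n Y_m=o_1$, hence $o_k\succeq_n o_1$, which contradicts the hypothesis $o_1\succ_n o_k$. Therefore $o_1$ is not reachable for $k$.

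I do not expect a real obstacle: the proof is a short telescoping argument and the only care needed is routine bookkeeping — verifying that $k$'s object is constant between its swaps, that the center is the other endpoint of every swap touching $k$, and that the chain of $\succeq_n$-comparisons is oriented correctly (the center can only ever trade its object for something it weakly prefers, so whatever flows along the $k$–$n$ edge can only weaken, in $n$'s eyes, as it moves toward $k$). Note that this argument uses neither the standing simplifications accumulated before the lemma nor Lemma~\ref{lem_star_case1}; it is self-contained given only that the network is a star and that all swaps are individually rational.
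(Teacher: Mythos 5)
Your proof is correct and follows essentially the same argument as the paper: both track the chain of objects held by the leaf agent $k$, note that every swap touching $k$ is with the center $n$, and telescope the individual-rationality condition for $n$ to conclude $o_k\succeq_n o_1$, contradicting $o_1\succ_n o_k$. The only difference is presentational (your explicit indexing of the swap steps versus the paper's listing of the distinct objects held by $k$).
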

\begin{proof}

    Assume to the contrary that that $o_1$ is reachable for agent $k$. In the sequence of assignments from the initial one to a reachable one via swaps,
    we let $o_{x_1}, o_{x_2},\dots, o_{x_l}$ be the sequence of different objects assigned to agent $k$,
    where $o_{x_1}=o_k$ and $o_{x_l}=o_1$. For each $1\leq i \leq l-1$, there is a swap between agent $n$ and agent $k$ where  agent $n$ holds object $o_{x_{i+1}}$
    and agent $k$ holds object $o_{x_{i}}$ (before the swap). This means that $o_{x_{i}} \succeq_n o_{x_{i+1}}$. So we get
    $$o_{x_1} \succeq_n o_{x_2} \succeq_n\dots \succeq_n o_{x_l}.$$
    This is in contradiction with the fact that $o_1\succ_n o_k$. So we know that $o_1$ is not reachable for agent $k$ for this case.
\qed
\end{proof}

To solve \textsc{Weak Object Reachability} in stars, we consider whether $k=n$ or not. If $k=n$, it is to check whether object $o_1$ is reachable for the center agent $n$. 
If $k\neq n$, we will transfer it to checking the reachability of $o_1$ for the center agent $n$ under some constraints. Our idea is based on the following observations. Before object $o_1$ reaches agent $k\neq n$, object $o_1$ must be held by the center agent $n$. We will first consider the case that $k\neq n$.

\begin{definition}
An assignment $\sigma$ is called \emph{crucial} if it holds that $\sigma(n)=o_1$, $o_1\succeq_k \sigma(k)$ and $\sigma(k)\succeq_n o_1$.
\end{definition}
\begin{ob}\label{ob-equel}
Assume that $k\neq n$.
Object $o_1$ is reachable for agent $k$ if and only if there is a reachable crucial assignment.
\end{ob}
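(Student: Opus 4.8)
The statement is essentially a reformulation of reachability of $o_1$ for the leaf agent $k$ in terms of reachability of $o_1$ for the center agent $n$ subject to the two side constraints $o_1\succeq_k\sigma(k)$ and $\sigma(k)\succeq_n o_1$. The plan is to prove both directions directly from the definition of an individually rational swap, using only the structural fact that in a star the unique neighbor of the leaf $k$ is the center $n$; no deep machinery is needed. Throughout I use the standing assumption of this subsection that $k\neq 1$ (we ask about a leaf agent's object for \emph{another} agent), so $\sigma_0(k)=o_k\neq o_1$.

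\textbf{The ``only if'' direction.} Suppose $o_1$ is reachable for $k$, witnessed by a swap sequence $(\sigma_0,\sigma_1,\dots,\sigma_t)$ with $\sigma_t(k)=o_1$. Since $\sigma_0(k)\neq o_1$, there is a first index $s\in\{1,\dots,t\}$ with $\sigma_{s-1}(k)\neq o_1$ and $\sigma_s(k)=o_1$. The swap $(\sigma_{s-1},\sigma_s)$ changes the object held by $k$, so $k$ participates in it; as $k$ is a leaf, this swap is between $k$ and the center $n$, and since $k$ obtains $o_1$ we get $\sigma_{s-1}(n)=o_1$. Individual rationality of this swap on $\sigma_{s-1}$ gives $\sigma_{s-1}(n)\succeq_k\sigma_{s-1}(k)$ and $\sigma_{s-1}(k)\succeq_n\sigma_{s-1}(n)$, i.e. $o_1\succeq_k\sigma_{s-1}(k)$ and $\sigma_{s-1}(k)\succeq_n o_1$. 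Hence $\sigma_{s-1}$ is crucial, and it is reachable because $(\sigma_0,\dots,\sigma_{s-1})$ is a valid swap sequence.

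\textbf{The ``if'' direction.} Let $\sigma$ be a reachable crucial assignment, so $\sigma(n)=o_1$, $o_1\succeq_k\sigma(k)$, and $\sigma(k)\succeq_n o_1$. Then the swap between $k$ and $n$ on $\sigma$ is a feasible swap: it is between neighbors of the star, and it is individually rational since $\sigma(n)=o_1\succeq_k\sigma(k)$ and $\sigma(k)\succeq_n o_1=\sigma(n)$. Performing it yields an assignment in which $k$ holds $o_1$; appending this single swap to a swap sequence that reaches $\sigma$ shows that $o_1$ is reachable for $k$.

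\textbf{Main obstacle.} There is essentially none — the only point needing a word of care is that under weak preferences Lemma~\ref{lem_nondecreasing} fails, so $o_1$ may enter and leave agent $k$ several times along the sequence; this is harmless because in the ``only if'' direction we only need \emph{some} swap that delivers $o_1$ to $k$, and choosing the first such swap makes the argument go through unchanged. The value of the observation is downstream: it lets the subsequent analysis replace the leaf-target problem by the center-target problem with the two added constraints on $\sigma(k)$, which is what the simple-sequence/auxiliary-graph construction will exploit.
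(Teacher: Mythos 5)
Your proof is correct and follows essentially the same route as the paper's (two-sentence) argument: take the swap that delivers $o_1$ to $k$ and note the preceding assignment is crucial, and conversely append the swap between $n$ and $k$ to a sequence reaching a crucial assignment. Your version is just a more careful spelling-out, including the harmless subtlety about $o_1$ possibly visiting $k$ more than once under weak preferences.
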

\begin{proof}
Before the last swap to reach a reachable assignment $\sigma$ such that $\sigma(k)=o_1$, the assignment is a reachable crucial assignment.
Furthermore, for a crucial assignment, we can make a trade between $n$ and $k$ so that agent $k$ can get object $o_1$.
\qed
\end{proof}

Next, we will analyze the properties of reachable crucial assignments and design an algorithm to find them if they exist.

\begin{lemma}\label{lem_sequence}
Assume that $k\neq n$.
    For any sequence of swaps $(\sigma_0,\sigma_1,\dots,\sigma_t)$ such that $\sigma_t$ is a reachable crucial assignment (if it exists),
    it holds that $\sigma_i(k)\succeq_n o_1$ for any $i\in \{0,1,\dots, t\}$.
\end{lemma}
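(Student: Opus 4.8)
The plan is to show that, along any such sequence, the object held by agent $k$ can only get worse from agent $n$'s point of view, and then combine this monotonicity with the crucial-assignment condition $\sigma_t(k)\succeq_n o_1$ to pull the bound back to every earlier step.

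First I would exploit the star structure: since $n$ is the center, every edge of the network is incident to $n$, so any swap that involves agent $k$ is the swap on the edge $\{k,n\}$. Fix a step $i\to i+1$. If agent $k$ does not take part in this swap, then $\sigma_{i+1}(k)=\sigma_i(k)$. If agent $k$ does take part, then $k$ and $n$ exchange their objects, so $\sigma_{i+1}(k)=\sigma_i(n)$ and $\sigma_{i+1}(n)=\sigma_i(k)$; applying Observation~\ref{obs_1} to agent $n$ gives $\sigma_{i+1}(n)\succeq_n\sigma_i(n)$, that is, $\sigma_i(k)\succeq_n\sigma_{i+1}(k)$. In both cases we obtain $\sigma_i(k)\succeq_n\sigma_{i+1}(k)$.

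Chaining these inequalities over $i=0,1,\dots,t-1$ and using transitivity of $\succeq_n$ yields $\sigma_0(k)\succeq_n\sigma_1(k)\succeq_n\cdots\succeq_n\sigma_t(k)$. Since $\sigma_t$ is crucial we have $\sigma_t(k)\succeq_n o_1$, and hence $\sigma_i(k)\succeq_n\sigma_t(k)\succeq_n o_1$ for every $i\in\{0,1,\dots,t\}$, which is exactly the claim; incidentally the $i=0$ instance also recovers $o_k\succeq_n o_1$, consistent with Lemma~\ref{lem_1k} via Observation~\ref{ob-equel}. The only delicate point is the direction of the monotonicity: one is tempted to think that $k$'s object improves along a rational sequence, whereas what is needed here is that it degrades for agent $n$ — and this is precisely what the star buys us, since each trade of $k$ is a trade with $n$, so the object $k$ receives (the very object $n$ relinquishes) is never strictly $\succ_n$-better than what $k$ surrenders, because $n$'s holding is nondecreasing in $\succeq_n$. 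Beyond phrasing this cleanly I anticipate no obstacle.
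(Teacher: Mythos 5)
Your proof is correct. It takes a slightly different route from the paper's: the paper argues by contradiction, assuming $o_1\succ_n\sigma_i(k)$ for some $i$, re-rooting the sequence at $\sigma_i$ as a new initial endowment, and then invoking Lemma~\ref{lem_1k} together with Observation~\ref{ob-equel} to conclude that $o_1$ could not reach $k$ from $\sigma_i$, contradicting the fact that the tail of the sequence does exactly that. You instead prove the monotonicity $\sigma_0(k)\succeq_n\sigma_1(k)\succeq_n\cdots\succeq_n\sigma_t(k)$ directly from the star structure and individual rationality, and close with $\sigma_t(k)\succeq_n o_1$ from the definition of a crucial assignment. The underlying insight is the same — your monotone chain is precisely the chain $o_{x_1}\succeq_n o_{x_2}\succeq_n\cdots\succeq_n o_{x_l}$ that appears inside the paper's proof of Lemma~\ref{lem_1k} — but your version is self-contained and arguably cleaner, since it avoids the re-rooting step and the detour through Observation~\ref{ob-equel}. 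One cosmetic remark: Observation~\ref{obs_1} is stated for strict preferences (``either equal or strictly better''); under weak preferences the fact you actually need, $\sigma_{i+1}(n)\succeq_n\sigma_i(n)$, follows directly from the definition of an individually rational trade, so you should cite that definition rather than the observation. This does not affect the validity of the argument.
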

\begin{proof}
Assume to the contrary that there exists $\sigma_i$ $(0\leq i \leq t)$ such that  $o_1 \succ_n \sigma_i(k)$. We take $\sigma_i$ as the initial endowment and then
$o_1$ is not reachable for agent $k$ by Lemma~\ref{lem_1k} and Observation~\ref{ob-equel}. This is in contradiction with the existence of $\sigma_t$. So the lemma holds.
\qed
\end{proof}

\begin{definition}
A sequence of swaps is called \emph{simple} if each leaf agent in the star participates in at most one swap.
\end{definition}

\begin{lemma}\label{lem_sequence2}
If there exists a reachable crucial assignment, then there exists
a simple sequence of swaps $(\sigma_0,\sigma_1,\dots,\sigma_t)$ such that $\sigma_t$ is crucial and $\sigma_i(k)\succeq_n o_1$ holds for any $i\in \{0,1,\dots, t\}$.
\end{lemma}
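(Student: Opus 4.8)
The plan is to reduce the statement to producing a \emph{simple} sequence of swaps that reaches \emph{some} reachable crucial assignment, and then to build such a sequence explicitly. First observe that the side condition ``$\sigma_i(k)\succeq_n o_1$ for all $i$'' comes for free: by Lemma~\ref{lem_sequence}, every sequence of swaps whose last assignment is crucial already has this property. So it suffices to show that whenever a crucial assignment is reachable, one is reachable by a simple sequence.

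Fix any sequence $(\sigma_0,\dots,\sigma_t)$ with $\sigma_t$ crucial. Since the network is a star, every swap involves the center $n$, so the objects held by $n$ form a chain $c_0=o_n,c_1,\dots,c_t=o_1$ that is non-decreasing in $\succeq_n$ by Observation~\ref{obs_1}. Call a swap \emph{fresh} if the leaf it involves is participating for the first time. The key structural fact I would establish is that the object held by $n$ immediately before any swap is either $o_n$ or the initial object of some fresh leaf; this follows by induction on the position of the swap, since a non-fresh swap hands back to $n$ exactly the object that leaf received at its own previous swap, which by induction was $o_n$ or a fresh leaf object. Using this, define a parent on the set of fresh leaves: the parent of a fresh leaf $m$ is the leaf (or $n$) whose initial object $n$ holds just before $m$'s fresh swap. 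That object was held by $n$ strictly earlier, so the parent's fresh swap precedes $m$'s, and we obtain a tree rooted at $n$. Agent $1$ is a fresh leaf, because $o_1$ starts at agent $1$, agent $1$ can only swap with $n$, and $\sigma_t(n)=o_1$ forces agent $1$ to swap, its first such swap moving $o_1$ to $n$. Take the tree-path $n=p_0,p_1,\dots,p_q=1$.

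The simple sequence I propose is: $n$ swaps with $p_1$, then with $p_2$, \dots, then with $p_q$. Feasibility is checked inductively: just before the swap with $p_i$, agent $n$ holds $o_{p_{i-1}}$ and $p_i$ still holds its initial object $o_{p_i}$; the center-side requirement $o_{p_i}\succeq_n o_{p_{i-1}}$ holds because in the original run the fresh swap of $p_i$ had $n$ improving from $o_{p_{i-1}}$ to $o_{p_i}$, and the leaf-side requirement $o_{p_{i-1}}\succeq_{p_i}o_{p_i}$ is precisely the individual-rationality inequality recorded at that fresh swap. It is exactly here that weak preferences matter: we are offering $p_i$ \emph{the very object} its parent contributed, not merely an $=_n$-equivalent one, so the recorded inequality transfers verbatim. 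After the last swap $n$ holds $o_{p_q}=o_1$, the agents $p_1,\dots,p_q$ are distinct and swap once each, and all other agents are untouched, so the sequence is simple; the invariant $\sigma_i(k)\succeq_n o_1$ holds along it by Lemma~\ref{lem_sequence}.

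The remaining point is that the terminal assignment $\sigma'$ must be crucial, i.e.\ also satisfy $o_1\succeq_k\sigma'(k)$ and $\sigma'(k)\succeq_n o_1$. If $k$ lies off the path then $\sigma'(k)=o_k$, and both inequalities are among the standing simplifications (the second because a reachable crucial assignment makes $o_1$ reachable for $k$ by Observation~\ref{ob-equel}, so $o_1\succ_n o_k$ is impossible by Lemma~\ref{lem_1k}). The delicate case, which I expect to be the main obstacle, is when $k$ itself appears on the path; then $\sigma'(k)$ is the object $k$'s parent contributed, and one must show it is still $\succeq_n o_1$ and $\preceq_k o_1$. I would handle this by starting from an original sequence in which $k$ swaps as few times as possible: if $k$ swaps only once there, then $\sigma_t$ being crucial already pins down exactly the two inequalities needed for the object $k$ ends up holding; and if $k$ swaps more than once, one argues the path can be rerouted so as to bypass $k$ entirely. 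Making this last case analysis airtight, alongside a careful statement and proof of the structural induction, is where the real effort goes.
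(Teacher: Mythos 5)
Your construction is correct but takes a genuinely different route from the paper. The paper's proof takes a \emph{shortest} sequence of swaps reaching a crucial assignment and shows it is automatically simple: if some leaf swapped twice, the center would hold the same object at two distinct indices $i<j$, and (choosing $j$ maximal) the segment of swaps between $\sigma_i$ and $\sigma_j$ can be excised to yield a strictly shorter feasible witness, contradicting minimality. You instead extract from an \emph{arbitrary} witness a parent tree on the first-time ("fresh") swaps and replay only the tree-path from $n$ to agent $1$. Your structural induction is sound (the center always holds $o_n$ or the initial object of an already-activated leaf, so the parent map is well defined and strictly decreasing in activation time, hence a tree), and the replayed swaps inherit their individual-rationality inequalities verbatim from the corresponding fresh swaps. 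The trade-off: the paper's argument is shorter, while yours is constructive and produces the simple sequence explicitly rather than by a minimality/exchange argument.

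The one step you leave open --- the case where $k$ lies on the replayed path --- does not require minimizing the number of $k$'s swaps or any rerouting; it closes immediately with facts you already have. If $k=p_j$, then $\sigma'(k)=o_{p_{j-1}}$ is precisely the object $k$ receives at its \emph{first} swap in the original run, so $k$ holds it at some intermediate assignment there; Lemma~\ref{lem_sequence} applied to that run then gives $o_{p_{j-1}}\succeq_n o_1$ directly. For the other inequality, individual rationality makes the objects held by $k$ non-decreasing in $\succeq_k$ along the original run, so $o_1\succeq_k\sigma_t(k)\succeq_k o_{p_{j-1}}$ by cruciality of $\sigma_t$. (The off-path case is as you say, via Observation~\ref{ob-equel} and Lemma~\ref{lem_1k}.) With that two-line patch in place of your proposed case analysis, the proof is complete.
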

\begin{proof}
Assume reachable crucial assignments exist. We let  $\psi=(\sigma_0,\sigma_1,\dots,\sigma_t)$ be one of the shortest sequences of swaps such that $\sigma_t$ is a reachable crucial assignment. Assume to the contrary that at least one leaf agent participates in two swaps in $\psi$. Then $\sigma_i(n)=\sigma_j(n)$ will hold for two different indices $0\leq i < j \leq t$. We assume that $j$ is the maximum index satisfying the above condition. By the selection of the index $j$, we know that any leaf agent that participates in a swap in $(\sigma_j,\sigma_{j+1},\dots,\sigma_t)$ will not participate in any swap before $\sigma_j$. Thus, after deleting the sequence of swaps in $(\sigma_i,\sigma_{i+1},\dots,\sigma_j)$
    from $\psi$, we will still get a feasible sequence of swaps. However, the length of the new sequence of swaps is shorter than that of $\psi$, a contradiction to the choice of $\psi$.
    So we know that $\psi$ must be simple. By Lemma~\ref{lem_sequence}, we know that $\sigma_i(k)\succeq_n o_1$ holds for any $i\in \{0,1,\dots, t\}$. Thus, the lemma holds.
\qed
\end{proof}


\begin{lemma}\label{lem_star-algorithm}
    Assume that $k\neq n$, $o_1 \succeq_k o_k$ and $o_k \succeq_n o_1$.
    There is an algorithm that runs in $O(n^2)$ time to check the existence of a simple sequence of swaps $(\sigma_0,\sigma_1,\dots,\sigma_t)$ such that $\sigma_t$ is crucial and $\sigma_i(k)\succeq_n o_1$ holds for any $i\in \{0,1,\dots, t\}$, and to find one if it exists.
\end{lemma}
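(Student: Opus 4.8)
The plan is to characterize when a simple, "monotone" (meaning $\sigma_i(k)\succeq_n o_1$ throughout) sequence reaching a crucial assignment exists, by translating the problem into the existence of a directed path in a suitable auxiliary digraph, mimicking the approach used for the strict case in~\citep{DBLP:conf/ijcai/GourvesLW17} but accommodating ties. First I would observe that in a simple sequence every leaf agent participates in at most one swap, and since the center $n$ is in every swap, the whole sequence is determined by the ordered list of leaf agents that the center trades with, say $i_1,i_2,\dots,i_t$. Along this list the center's held object evolves as $o_n=p_0, p_1,\dots,p_t$ where $p_{j}$ is the object that leaf $i_j$ held just before its swap; because leaf $i_j$ is untouched before step $j$, that object is exactly $o_{i_j}$ (its endowment). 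Hence the sequence is legal iff, writing $p_j=o_{i_j}$, each swap is individually rational: $o_{i_j}\succeq_n p_{j-1}$ (center does not lose) and $p_{j-1}\succeq_{i_j} o_{i_j}$ (leaf $i_j$ does not lose). The terminal condition is that the center ends holding $o_1$, i.e. $i_t=1$, and that the resulting assignment is crucial — the center has $o_1$ and agent $k$ holds an object $x$ with $o_1\succeq_k x$ and $x\succeq_n o_1$. By Lemma~\ref{lem_sequence2} this simple monotone form loses no generality.

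Next I would build the auxiliary digraph $H$ whose vertices are the leaf agents (equivalently their endowed objects), plus possibly a source encoding the center's initial state. I put an arc from (agent holding object) $o_a$ to $o_b$ whenever a center currently holding $o_a$ could immediately swap with the leaf endowed with $o_b$: that is, $o_b\succeq_n o_a$ (center weakly improves) and $o_a\succeq_{b} o_b$ (that leaf weakly improves). A directed walk in $H$ from the vertex representing $o_n$ to the vertex representing $o_1$ then corresponds exactly to a legal simple sequence ending with the center holding $o_1$ — provided the walk visits each vertex at most once, which is automatic in a simple sequence since each leaf trades at most once, so we only need a directed \emph{path}, not merely a walk. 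The monotonicity constraint $\sigma_i(k)\succeq_n o_1$ is handled by requiring that agent $k$'s object never drops below $o_1$ in the center's ranking; since agent $k$'s held object only changes if $k$ itself trades with the center, and $o_k\succeq_n o_1$ holds by hypothesis while after $k$ trades it receives something the center was holding (which, once we have passed $o_1$ to the center's hand, could be worse) — so I would treat the case "$k$ participates" separately: either $k$ never trades (then $\sigma_t(k)=o_k$, which is already $\succeq_n o_1$ and $\preceq_k o_1$, so crucial iff we can route $o_n\rightsquigarrow o_1$ in $H$ avoiding forcing $k$ to trade), or $k$ trades exactly once, and then the object $k$ receives must itself be $\succeq_n o_1$, which I encode by deleting from $H$ all arcs into $k$ that would hand it an object $\prec_n o_1$, and by requiring $k$'s received object $x$ satisfy $o_1\succeq_k x$. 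After these adjustments, reachability of a crucial assignment via a simple monotone sequence reduces to a pure reachability query in $H$, solvable by BFS/DFS in $O(n+|E(H)|)=O(n^2)$ time; building $H$ also costs $O(n^2)$ (checking the two preference comparisons per ordered pair of leaves).

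I would then verify the two directions carefully. For soundness: given a directed path $o_n=o_{i_0}\to o_{i_1}\to\cdots\to o_{i_t}=o_1$ in $H$, replay the corresponding swaps; distinctness of vertices guarantees each leaf trades once (simple), the arc conditions guarantee each swap is individually rational, the special handling of $k$ guarantees the monotonicity invariant and that the final assignment is crucial. For completeness: given any simple monotone sequence to a crucial assignment, extract its ordered list of traded leaves; Lemma~\ref{lem_sequence} (applied to the tail) or direct inspection shows each consecutive pair satisfies the arc condition and the $k$-handling constraints, yielding a path in $H$. The main obstacle I anticipate is the bookkeeping around agent $k$: unlike the strict case, ties mean the center may be willing to accept from $k$ an object equivalent (for $n$) to $o_1$ without that being $o_1$ itself, and also after the center holds $o_1$ it might pass $o_1$ to some other leaf and re-acquire it — but Lemma~\ref{lem_sequence2}'s simplicity reduction forbids the center ever revisiting a state, so once the center holds $o_1$ the only remaining productive move is the single trade with $k$; making this rigorous, and confirming that restricting to paths (not walks) in $H$ costs nothing, is the delicate part. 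Everything else is routine graph search and the $O(n^2)$ bound follows immediately.
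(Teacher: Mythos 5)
Your proposal is correct and follows essentially the same route as the paper: both build an auxiliary digraph on the agents with an arc $\overrightarrow{ij}$ exactly when the center holding $o_i$ can rationally trade with leaf $j$ holding its endowment, restrict the arcs into $k$ by the extra conditions $o_1\succeq_k o_i$ and $o_i\succeq_n o_1$ to enforce cruciality and the monotonicity invariant, and reduce the question to finding a simple directed path from $n$ to $1$, all within $O(n^2)$ time. The delicate points you flag (simplicity of the path, the single trade involving $k$) are handled in the paper exactly as you suggest, via Lemma~\ref{lem_sequence2} and the restricted arc set into $k$.
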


\begin{proof}
We will transform the problem to a problem of finding a simple directed path in an auxiliary graph following the idea of the algorithm for strict preferences given in~\citep{DBLP:conf/ijcai/GourvesLW17}.
For a simple sequence of swaps, the center agent and a leaf agent make a trade only when the leaf agent holds its initially endowed object.
We construct an auxiliary directed graph $G_D=(N,E_D)$ on the set $N$ of agents.
There is an arc $\overrightarrow{ij}$ from  agent $i\in N$ to agent $j\in N\setminus\{n,k,i\}$ if and only if
the center agent $n$ and the leaf agent $j$ can rationally trade when the center agent $n$ holds object $o_i$ and the leaf agent $j$ holds object $o_j$,
i.e., $o_i\succeq_j o_j$ and $o_j\succeq_n o_i$.
There is an arc $\overrightarrow{ik}$ from  agent $i\in N\setminus\{1,k\}$ to agent $k$ if and only if
$o_1\succeq_k o_i\succeq_k o_k$ and $o_k\succeq_n o_i \succeq_n o_1$.
The auxiliary graph can be constructed in $O(n^2)$ by using some data structure.
Next, we show that a simple sequence of swaps $(\sigma_0,\sigma_1,\dots,\sigma_t)$ such that $\sigma_t$ is crucial and $\sigma_i(k)\succeq_n o_1$ holds for any $i\in \{0,1,\dots, t\}$ exists if
and only if there is a simple directed path from $n$ to 1 in graph $G_D$.

It is easy to see that such a simple sequence of swaps implies a simple directed path from $n$ to 1 in $G_D$.
We consider the other direction and assume that a simple directed path $P$ from $n$ to 1 in $G_D$ exists. We can further assume that no vertex appears more than once in $P$ since a simple path always exists if a path between two vertices exists. We can get a sequence $(\sigma_0,\sigma_1,\dots,\sigma_t)$ of swaps according to $P$:
swap $(\sigma_i,\sigma_{i+1})$ is a trade between $n$ and the head of the $i$th arc in $P$.
First, the sequence of swaps is simple because no vertex appears more than once in $P$.
Second, it holds that $\sigma_t(n)=o_1$ because the last swap happens between agent $n$ and agent $1$ and this is the first swap including agent 1.
Third, it holds that $o_1\succeq_k \sigma_i(k)$ for every $i$ because agent $k$ has an initially endowed object $o_k$ such that $o_1 \succeq_k o_k$ and $G_D$ has no arc from $j$ to $k$ for any $j$ with $o_j \succ_k o_1$.
Last, we have $\sigma_i(k)\succeq_n o_1$ for any $i\in \{0,1,\dots, t\}$ because agent $k$ has an initially endowed object $o_k$ such that  $o_k \succeq_n o_1$ and $G_D$ has no arc from $i$ to $k$ for any $i$ with $o_1 \succ_n o_i$.
Therefore, we know that $\sigma_t$ is crucial and $\sigma_i(k)\succeq_n o_1$ holds for any $i\in \{0,1,\dots, t\}$.

A simple directed path from $n$ to 1 in graph $G_D$ can be computed in linear time by the depth-first search. The lemma holds.
\qed
\end{proof}

For the case of $k=n$, we still need the following Lemma~\ref{lem_sequencen} and Lemma~\ref{lem_star-algorithmn}. The proof of Lemma~\ref{lem_sequencen} is referred to the first part of the proof of Lemma~\ref{lem_sequence2}. We omit the redundant arguments.
The same arguments in Lemma~\ref{lem_star-algorithm} can be used to prove Lemma~\ref{lem_star-algorithmn}.
However, the construction of the auxiliary graph is a little bit different.

\begin{lemma}\label{lem_sequencen}
If object $o_1$ is reachable for the center agent $n$, then there exists
a simple sequence of swaps $(\sigma_0,\sigma_1,\dots,\sigma_t)$ such that $\sigma_t(n)=o_1$.
\end{lemma}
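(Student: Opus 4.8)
The plan is to imitate the first part of the proof of Lemma~\ref{lem_sequence2}, via a shortest-sequence argument. Suppose $o_1$ is reachable for the center agent $n$, and among all sequences of swaps $(\sigma_0,\sigma_1,\dots,\sigma_t)$ with $\sigma_t(n)=o_1$ fix one, say $\psi$, of minimum length $t$. I would then show that $\psi$ is automatically simple. Assume not: some leaf agent $\ell$ takes part in at least two swaps of $\psi$. Since the network is a star, every swap involves the center $n$, so we may look at two consecutive swaps of $\psi$ that involve $\ell$, say $(\sigma_{i},\sigma_{i+1})$ and $(\sigma_{j},\sigma_{j+1})$ with $i<j$ and no swap between them touching $\ell$.

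The key observation is that the object held by the center must recur. After the swap $(\sigma_i,\sigma_{i+1})$ the leaf $\ell$ receives $\sigma_i(n)$, and since $\ell$ is untouched on $(\sigma_{i+1},\dots,\sigma_j)$ it still holds $\sigma_i(n)$ at time $j$; the swap $(\sigma_j,\sigma_{j+1})$ then hands $\sigma_j(\ell)=\sigma_i(n)$ back to the center, so $\sigma_{j+1}(n)=\sigma_i(n)$. Hence there exist two indices carrying the same center object. Now let $j^\ast$ be the largest index for which $\sigma_{i^\ast}(n)=\sigma_{j^\ast}(n)$ for some $i^\ast<j^\ast$. By the maximality of $j^\ast$, any leaf that participates in a swap of the tail $(\sigma_{j^\ast},\dots,\sigma_t)$ participates in no swap before $\sigma_{j^\ast}$ — otherwise the computation above would produce a repeated center object at an index larger than $j^\ast$. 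In particular, each such leaf still holds its initial object in $\sigma_{i^\ast}$, matching $\sigma_{j^\ast}$ on it, while also $\sigma_{i^\ast}(n)=\sigma_{j^\ast}(n)$.

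Therefore I would splice out the segment $(\sigma_{i^\ast},\dots,\sigma_{j^\ast})$: starting from $\sigma_{i^\ast}$ and replaying exactly the swaps of $(\sigma_{j^\ast},\dots,\sigma_t)$ is legal, because each of those swaps depends only on the object currently held by the center and by the particular leaf it touches, and these agree between $\sigma_{i^\ast}$ and $\sigma_{j^\ast}$. This yields a strictly shorter sequence still ending with the center holding $o_1$, contradicting the minimality of $\psi$. Hence $\psi$ is simple, which proves the lemma. The only point requiring care is the bookkeeping in the splicing step — one must check that the leaves touched in the tail are in identical states in $\sigma_{i^\ast}$ and $\sigma_{j^\ast}$ (they are, since they are untouched up to $\sigma_{j^\ast}$ and thus still hold their endowments), and that leaves touched only inside the deleted segment are irrelevant to the tail; this is exactly the cycle-removal argument of Lemma~\ref{lem_sequence2}, but without the extra invariant $\sigma_i(k)\succeq_n o_1$, so no appeal to Lemma~\ref{lem_sequence} is needed here.
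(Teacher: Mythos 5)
Your proposal is correct and follows essentially the same route as the paper, which proves this lemma by invoking the shortest-sequence/splicing argument from the first part of the proof of Lemma~\ref{lem_sequence2} (correctly noting, as you do, that the invariant $\sigma_i(k)\succeq_n o_1$ and Lemma~\ref{lem_sequence} are not needed here). Your added detail deriving the repeated center object and justifying the legality of replaying the tail from $\sigma_{i^\ast}$ is a faithful expansion of the argument the paper leaves implicit.
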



    \begin{lemma}\label{lem_star-algorithmn}
        Verifying the existence of a simple sequence of swaps $(\sigma_0,\sigma_1,\dots,\sigma_t)$ such that $\sigma_t(n)=o_1$, and constructing it if it exists, can be done in $O(n^2)$.
    \end{lemma}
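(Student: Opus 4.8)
The plan is to adapt the auxiliary‑graph argument of Lemma~\ref{lem_star-algorithm} to the case where the target agent is the center $n$ itself. This case is in fact simpler, since there is no distinguished leaf agent $k$ whose holding must be kept weakly above $o_1$ throughout, so no extra invariant has to be threaded into the arcs. By Lemma~\ref{lem_sequencen} it suffices to look for a \emph{simple} sequence of swaps ending with $\sigma_t(n)=o_1$.

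First I would build a directed auxiliary graph $G_D=(N,E_D)$ on the set of agents, with an arc $\overrightarrow{ij}$ from $i\in N$ to $j\in N\setminus\{n,i\}$ whenever the center $n$ and the leaf $j$ can rationally trade while $n$ holds $o_i$ and $j$ still holds its initial object $o_j$, i.e.\ $o_i\succeq_j o_j$ and $o_j\succeq_n o_i$. (Under the standing preprocessing assumptions obtained from Lemmas~\ref{lem_star_delete} and~\ref{lem_star_case1} we may assume $o_1\succeq_n o_i\succeq_n o_n$ for every leaf $i$, which keeps $G_D$ clean, though it is not essential for correctness.) Exactly as in Lemma~\ref{lem_star-algorithm}, with a suitable data structure this graph can be constructed in $O(n^2)$ time.

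The core claim to establish is: a simple sequence of swaps $(\sigma_0,\ldots,\sigma_t)$ with $\sigma_t(n)=o_1$ exists if and only if $G_D$ contains a directed path from $n$ to $1$. For the forward direction, recall that in a star every swap involves the center; in a simple sequence the center trades with a pairwise distinct list of leaves $w_1,\ldots,w_t$, each $w_i$ still holding $o_{w_i}$ at the moment it is traded, and an easy induction gives that the center holds $o_{w_i}$ right after swap $i$ (with $w_0:=n$). Rationality of swap $i$ is precisely the defining condition of the arc $\overrightarrow{w_{i-1}w_i}$, and $\sigma_t(n)=o_1$ forces $w_t=1$; since $w_0=n$ is the center while $w_1,\ldots,w_t$ are distinct leaves, $n\to w_1\to\cdots\to w_t=1$ is already a simple directed path. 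Conversely, from a simple directed path $n=v_0\to v_1\to\cdots\to v_t=1$ (a path between two vertices can always be taken simple), I would perform, for $i=1,\ldots,t$, the trade between $n$ and the leaf $v_i$; by induction the center holds $o_{v_{i-1}}$ just before swap $i$, each $v_i$ is visited only once on the path and hence still holds $o_{v_i}$, and the arc condition makes the trade rational, so this is a valid simple sequence with $\sigma_t(n)=o_{v_t}=o_1$.

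Finally I would observe that a directed path from $n$ to $1$ in $G_D$ can be found by depth‑first search in time $O(|N|+|E_D|)=O(n^2)$, and the sequence of swaps is read off from it directly, yielding the claimed $O(n^2)$ bound for both verification and construction. I do not expect any real obstacle here: the only point needing care is noticing that, unlike in Lemma~\ref{lem_star-algorithm}, no side condition of the form $\sigma_i(k)\succeq_n o_1$ must be encoded, so the arc set is uniform and the equivalence with plain path existence is immediate; everything else is the same bookkeeping as in the strict‑preference algorithm of~\citet{DBLP:conf/ijcai/GourvesLW17}.
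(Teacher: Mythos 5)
Your proposal is correct and follows essentially the same route as the paper: the paper constructs exactly this auxiliary graph (arcs $\overrightarrow{ij}$ for $j\in N\setminus\{n,i\}$ with $o_i\succeq_j o_j$ and $o_j\succeq_n o_i$) and reduces the question to finding a directed path from $n$ to $1$, citing the argument of Lemma~\ref{lem_star-algorithm} for the equivalence. Your write-up merely spells out the two directions of that equivalence in more detail than the paper does, and does so correctly.
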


\begin{proof}
We construct an auxiliary directed graph $G_D=(N,E_D)$ on the set $N$ of agents.
There is an arc $\overrightarrow{ij}$ from  agent $i\in N$ to agent $j\in N\setminus\{n,i\}$ if and only if
the center agent $n$ and the leaf agent $j$ can rationally trade when the center agent $n$ holds object $o_i$ and the leaf agent $j$ holds object $o_j$,
i.e., $o_i\succeq_j o_j$ and $o_j\succeq_n o_i$.
Similar to the argument in the proof of Lemma~\ref{lem_star-algorithm}, we can prove that
a simple sequence of swaps $(\sigma_0,\sigma_1,\dots,\sigma_t)$ such that $\sigma_t(n)=o_1$ exists if
and only if there is a simple directed path from $n$ to 1 in graph $G_D$.
\qed
\end{proof}

The main steps of our algorithm are listed in Algorithm~\ref{alg:star}.

\begin{algorithm}[h]
    \caption{The Algorithm for Weak Object Reachability in Stars}
    \label{alg:star}
    \KwIn{An instance in a star, where we assume that $n$ is the center agent in the star, $o_1\neq o_n$ and $k\neq 1$.}
    \KwOut{To determine whether $o_1$ is reachable for agent $k$}
    \For{each $i\in \{2, 3, \dots, n-1\}\setminus\{k\}$}
    {\If {$o_{i}\succ_n o_1$ or $o_n \succ_n o_{i}$} {delete agent $i$ and object $o_i$\;} }
    \If {$o_k\succ_k o_1$ or $o_n\succ_n o_1$} {\textbf{return no} and \textbf{quit}\;}
    \If {$k\neq n$ and $o_1\succ_n o_k$} {\textbf{return no} and \textbf{quit}\;}
    \If {$k\neq n$} {Check whether there is a simple directed path from $n$ to $1$ in the graph $G_D$ constructed in the proof of Lemma~\ref{lem_star-algorithm}\;
     \textbf{return yes} if it exists and \textbf{no} otherwise\;}
    \Else {Check whether there is a simple directed path from $n$ to $1$ in the graph $G_D$ constructed in the proof of Lemma~\ref{lem_star-algorithmn}\;
     \textbf{return yes} if it exists and \textbf{no} otherwise.}
\end{algorithm}

\begin{theorem}
  \label{lem_wor}
  \textsc{Weak Object Reachability} in stars can be solved in $O(n^2)$ time.
\end{theorem}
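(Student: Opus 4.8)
The plan is to prove correctness and the $O(n^2)$ running-time bound of Algorithm~\ref{alg:star} by composing the structural lemmas already established. First I would dispose of the two easy normalizations: if the question asks whether the center object $o_n$ is reachable for a leaf agent $k$, a single pair of inequality checks $o_n\succeq_k o_k$ and $o_k\succeq_n o_n$ settles it; otherwise the target is a leaf object, which we rename $o_1$, and $k$ may or may not equal the center $n$. With this setup, the first loop of the algorithm deletes every agent $i_0\notin\{1,k,n\}$ with $o_{i_0}\succ_n o_1$ or $o_n\succ_n o_{i_0}$; Lemma~\ref{lem_star_delete} guarantees this does not change reachability, and afterwards the instance satisfies $o_1\succeq_n o_i\succeq_n o_n$ for all $i\neq k$. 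Then Lemma~\ref{lem_star_case1} lets us answer ``no'' when $o_k\succ_k o_1$ or $o_n\succ_n o_1$, and Lemma~\ref{lem_1k} lets us answer ``no'' when $k\neq n$ and $o_1\succ_n o_k$; after these checks we may assume $o_1\succeq_k o_k$ and, when $k\neq n$, also $o_k\succeq_n o_1$, which are exactly the hypotheses of Lemma~\ref{lem_star-algorithm}.

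Second, I would treat the two branches of the algorithm. When $k\neq n$, Observation~\ref{ob-equel} reduces reachability of $o_1$ for $k$ to the existence of a reachable crucial assignment; Lemma~\ref{lem_sequence2} shows such an assignment exists if and only if there is a \emph{simple} sequence of swaps ending at a crucial assignment along which the invariant $\sigma_i(k)\succeq_n o_1$ holds; and Lemma~\ref{lem_star-algorithm} shows this last condition is equivalent to the existence of a simple directed $n$-to-$1$ path in the auxiliary digraph $G_D$, decidable in $O(n^2)$ time. When $k=n$, the target is simply ``$o_1$ reachable for the center,'' and Lemma~\ref{lem_sequencen} together with Lemma~\ref{lem_star-algorithmn} gives the same $O(n^2)$ reduction to a simple directed path in the (slightly differently constructed) graph $G_D$. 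In both branches the algorithm returns ``yes'' exactly when such a path exists, so correctness follows from the chain of equivalences.

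Third, for the running time I would account for each phase: with an $O(n^2)$-time preprocessing that ranks all objects according to the preferences of $n$ and of $k$, each of the $O(n)$ deletion tests in the first loop becomes $O(1)$; the constant number of global checks is $O(1)$; building $G_D$ costs $O(n^2)$ since the rational-trade condition is checked for each of the $O(n^2)$ candidate arcs; and the depth-first search for a simple $n$-to-$1$ path runs in $O(|N|+|E_D|)=O(n^2)$. Summing the phases yields the claimed $O(n^2)$ bound.

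Finally, the main obstacle is not any individual computation but verifying that the chain of ``if and only if'' reductions composes without loss in both directions --- in particular that the normalization steps (deleting agents, forcing the preference inequalities, restricting attention to crucial assignments) are reachability-preserving, and that the invariant $\sigma_i(k)\succeq_n o_1$ used by Lemma~\ref{lem_star-algorithm} is genuinely forced by Lemma~\ref{lem_sequence} rather than an artificial restriction. Since all of these statements are already proved above, the remaining work for the theorem is to confirm that Algorithm~\ref{alg:star} invokes them in a mutually consistent order and that the handling of the distinguished agents $1$, $k$, and $n$ in the deletion loop leaves no case uncovered, which is routine.
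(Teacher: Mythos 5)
Your proposal is correct and follows essentially the same route as the paper: it establishes correctness of Algorithm~\ref{alg:star} by invoking Lemma~\ref{lem_star_delete}, Lemma~\ref{lem_star_case1}, Lemma~\ref{lem_1k}, then Observation~\ref{ob-equel}, Lemma~\ref{lem_sequence2} and Lemma~\ref{lem_star-algorithm} for $k\neq n$, and Lemma~\ref{lem_sequencen} with Lemma~\ref{lem_star-algorithmn} for $k=n$, before summing the $O(n^2)$ costs of the graph construction and path search. The only differences are cosmetic (e.g., your explicit $O(n^2)$ preference-ranking preprocessing, where the paper simply notes the pre-Step-8 work is linear), so nothing further is needed.
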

\begin{proof}
We prove this theorem by proving the correctness of Algorithm~\ref{alg:star}. By Lemma~\ref{lem_star_delete}, we can do Steps 1-3 to simplify the instance.
By Lemma~\ref{lem_star_case1}, we can verify the correctness of Steps 4-5, and  by Lemma~\ref{lem_1k},
we can verify the correctness of Steps 6-7.
After Step 7, it always holds that $o_{1}\succeq_k o_k$, $o_1 \succeq_n o_{n}$ and $o_k \succeq_n o_{1}$ for $k\neq n$.
For the case of $k\neq n$, the correctness of Steps 8-10 is derived from Observation~\ref{ob-equel}, Lemma~\ref{lem_sequence2} and Lemma~\ref{lem_star-algorithm}.
For the case of $k= n$, it is to check whether $o_1$ is reachable for the center agent $n$ and the correctness of Steps 12-13 is derived from Lemma~\ref{lem_sequencen} and Lemma~\ref{lem_star-algorithmn}.

Before Step 8, the algorithm uses linear time. Steps 8-10 use $O(n^2)$ time and Steps 12-13 use $O(n^2)$ time.
\qed
\end{proof}

\subsection{Maximizing The Utilitarian Social Welfare in Stars}

For \textsc{Weak Pareto Efficiency} in stars, we did not find a polynomial-time algorithm or a proof for the NP-hardness.
However, we can prove the hardness of the problem to find a solution maximizing the utilitarian social welfare.
In this problem, the preference profile is replaced with the value function.

To prove the NP-hardness, we give a reduction from the known NP-complete problem -- the \textsc{Directed Hamiltonian Path} problem~\cite{DBLP:books/fm/GareyJ79}, which is to find a directed path visiting each vertex exactly once in a directed graph $D=(V,A)$.
We construct an instance $I$ of the problem such that $I$ admits a reachable assignment with the utilitarian social welfare at least $3|V|+|A|-1$ if and only if the graph $D$ has a directed Hamiltonian path starting from a given vertex $s\in V$. We simply assume that $s$ has no incoming arcs.

Instance $I$ is constructed as follows. It contains $|V|+|A|+1$ agents and $|V|+|A|+1$ objects.
For each arc $e\in A$, there is an associated agent $a_e$, called an \emph{arc agent}.
For each vertex $v\in V$, there is an associated agent $a_v$, called a \emph{vertex agent}.
So there are $|A|$ arc agents and $|V|$ vertex agents. There is also a \emph{center agent} $a_c$, which is adjacent to all arc agents
and vertex agents to form a star.

In the initial endowment $\sigma_0$,
the object assigned to the center agent $a_c$ is $o_c$, called the \emph{center object}; for each vertex agent $a_v$, the object assigned to it is
$o_v$, called a \emph{vertex object}; and for each arc agent $a_e$,
the object assigned to it is $o_e$, called an \emph{arc object}.

Next, we construct the value function for each agent.
For the center agent $a_c$, all objects have the same value of $0$.
For the starting vertex agent $a_s$, the initially endowed object $o_s$ is valued 1, the center object $o_c$ is valued 2, and all other objects are valued 0.
For  any vertex agent $a_v$ with $v\in V\setminus\{s\}$, the initially endowed object $o_v$ is valued 1, arc object $o_e$
with $e$ being an arc from a vertex to the vertex $v$
is valued 2, and all other objects are valued 0.
For any arc agent $a_e$ with $e\in A$, where $e=\overrightarrow{uv}$ is an arc starting from $u$ to $v$, the initially endowed object $o_e$ is valued 1, the vertex object $o_u$ is valued 2, and all other objects are valued 0.

See Figure~\ref{fig10} and Table~\ref{tableexmp} for an example to construct the instance.

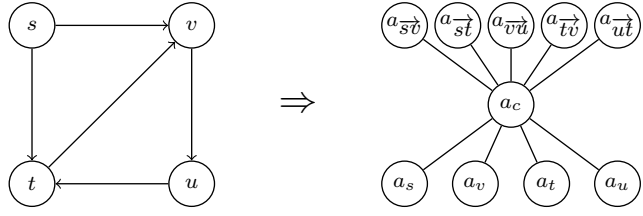
\begin{figure}[htbp]
    \centering
    \begin{tikzpicture}
        [scale = 0.7, line width = 0.5pt,solid/.style = {circle, draw, fill = black, minimum size = 0.3cm},empty/.style = {circle, draw, fill = white, minimum size = 0.6cm}]
        \node [empty, label = center:$s$] (S) at (-9,1.5) {};
        \node [empty, label = center:$t$] (T) at (-9,-1.5) {};
        \node [empty, label = center:$v$] (V) at (-6,1.5) {};
        \node [empty, label = center:$u$] (U) at (-6,-1.5) {};

        \draw[->] (S) -- (V);
        \draw[->] (S) -- (T);
        \draw[->] (V) -- (U);
        \draw[->] (T) -- (V);
        \draw[->] (U) -- (T);

        \node [label = center:\Large$\Rightarrow$] (R) at (-4,0) {};

        \node [empty, label = center:$a_{\overrightarrow{sv}}$] (A) at (-2,1.5) {};
        \node [empty, label = center:$a_{\overrightarrow{st}}$] (B) at (-1,1.5) {};
        \node [empty, label = center:$a_{\overrightarrow{vu}}$] (C) at (0,1.5) {};
        \node [empty, label = center:$a_{\overrightarrow{tv}}$] (D) at (1,1.5) {};
        \node [empty, label = center:$a_{\overrightarrow{ut}}$] (E) at (2,1.5) {};

        \node [empty, label = center:$a_c$] (F) at (0,0) {};

        \node [empty, label = center:$a_s$] (G) at (-2,-1.5) {};
        \node [empty, label = center:$a_v$] (H) at (-0.67,-1.5) {};
        \node [empty, label = center:$a_t$] (I) at (0.67,-1.5) {};
        \node [empty, label = center:$a_u$] (J) at (2,-1.5) {};

        \draw (A) -- (F);

        \draw (B) -- (F);
        \draw (C) -- (F);
        \draw (D) -- (F);
        \draw (E) -- (F);
        \draw (G) -- (F);
        \draw (H) -- (F);
        \draw (I) -- (F);
        \draw (J) -- (F);
    \end{tikzpicture}
    \caption{An example of the construction in the proof of Theorem~\ref{the_sv_star}} \label{fig10}
\end{figure}

\begin{table}[h]
    \centering
    \caption{The value functions of the example in Figure 10}\label{tableexmp}
    \begin{tabular}{c|p{0.5cm}<{\centering}|p{0.5cm}<{\centering}|p{0.5cm}<{\centering}|p{0.5cm}<{\centering}|p{0.5cm}<{\centering}|p{0.5cm}<{\centering}|p{0.5cm}<{\centering}|p{0.5cm}<{\centering}|p{0.5cm}<{\centering}|p{0.5cm}<{\centering}}
        \hline
        \diagbox{Objects}{Agents} & $a_c$ & $a_s$ & $a_v$ & $a_t$ & $a_u$ & $a_{\overrightarrow{sv}}$ & $a_{\overrightarrow{st}}$ & $a_{\overrightarrow{vu}}$ & $a_{\overrightarrow{tv}}$ & $a_{\overrightarrow{ut}}$ \\
        \hline
        \rule{0pt}{8pt}$o_c$ &0&2&0&0&0&0&0&0&0&0\\
        \hline
        \rule{0pt}{8pt}$o_s$ &0&1&0&0&0&2&2&0&0&0\\
        \hline
        \rule{0pt}{8pt}$o_v$ &0&0&1&0&0&0&0&2&0&0\\
        \hline
        \rule{0pt}{8pt}$o_t$ &0&0&0&1&0&0&0&0&2&0\\
        \hline
        \rule{0pt}{8pt}$o_u$ &0&0&0&0&1&0&0&0&0&2\\
        \hline
        \rule{0pt}{8pt}$o_{\overrightarrow{sv}}$ &0&0&2&0&0&1&0&0&0&0\\
        \hline
        \rule{0pt}{8pt}$o_{\overrightarrow{st}}$ &0&0&0&2&0&0&1&0&0&0\\
        \hline
        \rule{0pt}{8pt}$o_{\overrightarrow{vu}}$ &0&0&0&0&2&0&0&1&0&0\\
        \hline
        \rule{0pt}{8pt}$o_{\overrightarrow{tv}}$ &0&0&2&0&0&0&0&0&1&0\\
        \hline
        \rule{0pt}{8pt}$o_{\overrightarrow{ut}}$ &0&0&0&2&0&0&0&0&0&1\\
        \hline
    \end{tabular}
\end{table}

From the value functions of instance $I$, we can see the following properties.

\begin{enumerate}
\item [(P1)] the utilitarian social welfare of the initial endowment is exactly $|V|+|A|$;
\item [(P2)] the vertex agent $a_s$ can only participate in a trade with $a_c$ where $o_s$ and $o_c$ are swapped;
\item [(P3)] the first swap must happen between agents $a_s$ and $a_c$, because all other agents value object $o_c$ as 0;
\item [(P4)] for a vertex object $o_v$, among all vertex agents, only vertex agent $a_v$ values it as 1 and all other vertex agents value it as 0;
\item [(P5)] for a vertex object $o_v$, among all arc agents, only arc agents whose arc starts from vertex $v$ value it as 2, and all other arc agents value it as 0;
\item [(P6)] for an arc object $o_e$, among all vertex agents, only vertex agents which is the endpoint of edge $e$ value it as 2, and all other vertex agents value it as 0;
\item [(P7)] for an arc object $o_e$, among all arc agents, only arc agent $a_e$ values it as 1 and all other arc agents value it as 0.
\end{enumerate}

From (P4), (P5), (P6) and (P7), we can see that
\begin{enumerate}
\item [(P8)] when a vertex object $o_v$ is held by the center agent $a_c$, the next swap can only be a trade
between the center agent $a_c$ and an arc agent $a_e$ with $e$ being an arc starting from vertex $v$;
\item [(P9)] when an arc object $o_e$ is held by the center agent $a_c$, the next swap can only be a trade
between the center agent $a_c$ and the vertex agent $a_v$ with $v$ being the ending point of arc $e$.
\end{enumerate}

Based on the above properties, we prove the following results.

\begin{lemma} \label{resultf}
    Instance $I$ has a reachable assignment such that the total value of all agents is at least $3|V|+|A|-1$ if and only if there is a directed Hamiltonian path starting from vertex $s$ in the directed graph $D$.
\end{lemma}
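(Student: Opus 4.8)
The plan is to prove both implications by tracking the ``trace'' of objects that pass through the center agent $a_c$, and reading off a walk of $D$ from it. The starting bookkeeping observation is that every leaf agent initially holds an object of value $1$ for it, never receives a worse object in a rational trade, and never holds an object of value exceeding $2$; since $a_c$ values every object $0$, the utilitarian social welfare of any reachable assignment $\sigma$ equals $|V|+|A|+|\{\ell : f_\ell(\sigma(\ell))=2\}|$ where $\ell$ ranges over the leaf agents. Thus reaching welfare $3|V|+|A|-1$ is equivalent to getting at least $2|V|-1$ leaf agents to value $2$.

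For the ``if'' direction, from a Hamiltonian path $s=v_0\to v_1\to\cdots\to v_{|V|-1}$ with arcs $e_i=\overrightarrow{v_{i-1}v_i}$ I would exhibit an explicit sequence of swaps: first trade $o_s$ out of $a_s$ for $o_c$; then for $i=1,\dots,|V|-1$, trade the current center object $o_{v_{i-1}}$ into arc agent $a_{e_i}$ (which values $o_{v_{i-1}}$ at $2$), and then trade the resulting center object $o_{e_i}$ into vertex agent $a_{v_i}$ (which values $o_{e_i}$ at $2$). A direct check of rationality shows each of these $1+2(|V|-1)=2|V|-1$ swaps is legal, and each lifts exactly one (distinct) leaf agent from value $1$ to value $2$; hence the final welfare is $|V|+|A|+2|V|-1=3|V|+|A|-1$.

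For the ``only if'' direction, suppose $\sigma$ is reachable with welfare $\ge 3|V|+|A|-1$, so at least $2|V|-1$ leaves are at value $2$. Using properties (P3), (P8) and (P9), the sequence of objects held by $a_c$ must have the form $o_c,\ o_{v_0}=o_s,\ o_{e_1},\ o_{v_1},\ o_{e_2},\ o_{v_2},\dots$ with each $e_i=\overrightarrow{v_{i-1}v_i}$ a genuine arc; call $v_0,v_1,\dots$ the induced walk. I would then prove the structural claim that this walk never revisits a vertex: once $a_c$ hands a vertex object $o_{v_i}$ to the arc agent $a_{e_{i+1}}$, that agent keeps $o_{v_i}$ forever (its only value-$2$ object is $o_{v_i}$, so it will never trade it away, and there is nothing equally good to trade it for), so $a_c$ can never again hold $o_{v_i}$ — and in particular a self-loop or an immediate repeat is impossible. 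Hence the walk is a simple directed path of some length $L\le |V|-1$. Finally I would observe that a vertex agent can reach value $2$ only if the walk visits its vertex, and an arc agent only if the walk uses its (distinct) arc; so the number of value-$2$ leaves is at most $(L+1)+L=2L+1\le 2|V|-1$. Together with the hypothesis this forces $L=|V|-1$, i.e.\ the walk $v_0=s,v_1,\dots,v_{|V|-1}$ is a Hamiltonian path of $D$ starting at $s$.

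The main obstacle is exactly the structural claim that the center's walk is a simple path. It requires a careful ``freezing'' analysis: pinning down, for each object, the full set of agents that can ever hold it and the circumstances under which they part with it — in particular that every vertex object gets stuck at an arc agent once it leaves the center, and that every arc object can only shuttle between $a_c$ and the head vertex agent — and then combining this with $s$ having no incoming arcs. Once (P1)--(P9) are available, the welfare bookkeeping and the explicit swap sequence in the ``if'' direction are routine.
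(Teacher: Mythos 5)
Your welfare bookkeeping and your ``if'' direction coincide with the paper's proof, and your ``freezing'' observation (a vertex object, once handed to an arc agent, stays there forever, so the center never re-holds a vertex object) is correct and is essentially what the paper uses. The gap is in the asserted shape of the center's object sequence. Properties (P8)--(P9) tell you \emph{whom} the center trades with, not \emph{what it receives}: when the center holds an arc object $o_e$ with $e=\overrightarrow{uv}$ and trades with the vertex agent $a_v$, it receives $o_v$ only if $a_v$ still holds its initial object. If $a_v$ has already traded once, it holds some other arc object $o_{e'}$ with head $v$ (which it values at $2$), the exchange $o_e\leftrightarrow o_{e'}$ is still weakly rational, and the center receives a second consecutive arc object. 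So your claimed strictly alternating form $o_c,o_s,o_{e_1},o_{v_1},o_{e_2},\dots$ is false in general, and ``the walk never revisits a vertex'' does not follow from ``the center never re-holds a vertex object'': an extra arc can point back to an already-visited vertex without the center ever re-holding that vertex's object. This is exactly the second case the paper treats separately, showing such a configuration can only occur as a terminal back-and-forth between $a_c$ and a single vertex agent.

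The damage is limited but real: in that scenario $L+1$ arc agents (not $L$) can reach value $2$, so your bound $2L+1$ on the number of value-$2$ leaves should be $2L+2$. By integrality $2L+2\ge 2|V|-1$ still forces $L=|V|-1$, so the conclusion survives; but as written your argument neither notices this case nor justifies why the prefix of the trade sequence before the terminal repetition is itself the desired Hamiltonian path. To close the gap you need the additional step (as in the paper): once the center receives a second arc object with the same head, (P9) pins every further trade to that one vertex agent, so the anomaly can occur only at the very end of the sequence and contributes at most one additional value-$2$ leaf, after which your counting goes through.
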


\begin{proof}

    Assume that there is a directed Hamiltonian path $v_1v_2v_3\dots,v_{|V|}$ starting from $s$ in the directed graph $D$, where $v_1=s$
    and the arc from $v_i$ to $v_{i+1}$ is denoted by $e_{i}$.
    We show that there is a satisfying sequence of swaps in $I$.
    Note that each swap in a star happens between the center agent and a leaf agent. So a sequence of swaps can
    be specified by a sequence of leaf agents. So we will use a sequence of leaf agents to denote a sequence of swaps.
    In fact, we will show that $(a_{v_1}, a_{e_1}, a_{v_2}, a_{e_2},\dots, a_{e_{|V|-1}}, a_{v_{|V|}})$ is a satisfying sequence of swaps in $I$.

    It is easy to verify that the first two swaps can be executed. We can prove by induction that, for each $i>1$, before the swap between $a_c$ and $a_{v_i}$ (resp., $a_{e_i}$), agent $a_c$ holds object $o_{e_{i-1}}$ (resp., $o_{v_{i-1}}$)  and  agent $a_{v_i}$ holds object $o_{v_i}$ (resp., agent $a_{e_i}$ holds object $o_{e_i}$).
    According to (P4), (P5), (P6), (P7) and the fact that the center agent $a_c$ has no difference among all the objects, we know that the sequence of swaps
     $(a_{v_1}, a_{e_1}, a_{v_2}, a_{e_2},\dots, a_{e_{|V|-1}}, a_{v_{|V|}})$ can be executed. Furthermore, for each swap in the sequence, the leaf agent participating in the swap will increase its value by 1. There are $2|V|-1$ swaps and the initial endowment has a value of $|V|+|A|$. So in the final assignment, the total value is $|V|+|A|+(2|V|-1)=3|V|+|A|-1$.

Next, we assume that there is a reachable assignment $\sigma$ with the total value of agents at least $3|V|+|A|-1$ in $I$.
Based on this assumption, we construct a Hamiltonian path starting from $s$ in $D$. We let a sequence of leaf agents $\pi=(a_{x_1},a_{x_2}, \dots, a_{x_l})$ to denote the sequence of swaps from the initial endowment to the final assignment $\sigma$, where we assume that $x_i\neq x_{i+1}$ for all $i$ since two consecutive swaps happened between the same pair of agents mean doing nothing and they can be deleted.
We will show that $\pi$ is a sequence of alternating elements between vertices and arcs in $D$.


 By (P3), we know that $x_1$ is $s$. We consider two cases: all vertex agents in $\pi$ are different or not.
 We first consider the case that all vertex agents in $\pi$ are different. For this case, it is easy to see that all arc
 agents in $\pi$ are also different by the value functions of the agents.
 Now all agents in $\pi$ are different. By (P8) and (P9), we know that the sequence of vertices and arcs in $D$ corresponding to $\pi$ is a directed path starting from $s$.

 We also have the following claims by the fact that each vertex and arc agent appears at most once in $\pi$. For each swap in $\pi$, if it is a trade between $a_c$ and a vertex agent $a_v$,
then the object hold by agent $a_v$ changes from the initially endowed object $o_v$ to an arc object $o_e$, where agent $a_v$ must value object $o_e$ as 2.
Thus, the value of agent $a_v$ increases by 1.
For each swap in $\pi$, if it is a trade between $a_c$ and an arc agent $a_e$,
then the object hold by agent $a_e$ changes from the initially endowed object $o_e$ to a vertex object $o_v$, where agent $a_e$ must value object $o_v$ as 2.
Thus, the value of agent $a_e$ increases by 1. In any case, each swap will increase the total value by exactly 1.
By (P1), the utilitarian social welfare of the initial endowment is $|V|+|A|$. So there are exactly $3|V|+|A|-1-(|V|+|A|)=2|V|-1$
swaps in $\pi$. The length of $\pi$ is  $2|V|-1$ and then $\pi$ contains $|V|$ vertices and $|V|-1$ arcs, where all vertices are different. This can only be a Hamiltonian path.

Next, we consider the case that some vertex agents appear at least twice in $\pi$.
 Let $\pi'$ be the shortest subsequence of $\pi$ starting from the beginning agent $a_s$ such that a vertex agent appears twice in $\pi'$. Let $a_v$ be the last agent in $\pi'$.
 By the choice of $\pi'$, we know that
  each agent in $\pi'$ except the last one appears at most once in $\pi'$. By (P8) and (P9), we know that $\pi'$ is corresponding to a directed path starting from $s$ in $D$, where only the last vertex $v$ in the path appears twice.  By (P2), we know that $v$ cannot be $s$. Thus, there is
  an arc agent $a_{e_1}$ before the first appearance of $a_v$ in $\pi'$. Let $a_{e_2}$ be the agent before the second appearance of $a_v$ in $\pi'$, i.e., the last but one agent in $\pi'$. We can see that before the second swap between $a_c$ and $a_v$, center agent $a_c$ holds object $o_{e_2}$ and vertex agent $a_v$ holds object $o_{e_1}$,
  where both $e_1$ and $e_2$ are arcs with the ending point being $v$.
   After this swap, center
  agent $a_c$ will hold object $o_{e_1}$ and vertex agent $a_v$ will hold object $o_{e_2}$. From then on, only vertex agent $a_{v}$
  can make a trade with center agent $a_c$. We have assumed that there are no two consecutive swaps that happened between the same pair of agents. Thus the second $a_v$ must be the last agent in $\pi$. Furthermore, in the last swap between $a_c$ and $a_v$, the value of
  $a_v$ will not increase. By the above analysis for the first case,
  we know that $\pi''=(a_{x_1},a_{x_2}, \dots, a_{x_{l-2}})$ is corresponding to a simple directed path, where $a_{x_{l-2}}$ is a vertex agent. In $\pi$, the total value of the agents
  can increase by at most 1 during the last but one swap between $a_c$ and $a_{x_{l-1}}$, and can increase by 0 during the last swap between $a_c$ and $a_{x_{l}}$. There is an odd number of agents in $\pi''$ since vertex agents and arc agents appear alternately in it and both the first and the last agents are vertex agents. Each swap in $\pi''$ can increase the total value by 1. Since the total value of the last assignment is at least $3|V|+|A|-1$ and the initial value is $|V|+|A|$, we know that the length of $\pi''$ is at least $2|V|-1$, which implies that $\pi''$ is corresponding to a Hamiltonian path.
    \qed
\end{proof}
Lemma~\ref{resultf} implies that

\begin{theorem}
    \label{the_sv_star}
    Finding a reachable assignment that maximizes the utilitarian social welfare is NP-hard even when the network is a star.
\end{theorem}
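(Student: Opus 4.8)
The plan is to observe that the theorem is an immediate consequence of Lemma~\ref{resultf} together with the classical fact that \textsc{Directed Hamiltonian Path} remains NP-complete when the starting vertex is prescribed (and, as noted, we may further assume this vertex has no incoming arc). First I would phrase the optimization problem as its natural decision version: given a star instance with value functions and a threshold $W$, decide whether some reachable assignment has utilitarian social welfare at least $W$. NP-hardness of this decision problem yields NP-hardness of the optimization problem in the usual way, since from an optimal assignment one can read off whether the threshold is met.

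Next I would set up the reduction exactly as described just before Lemma~\ref{resultf}: from an instance $(D=(V,A),s)$ of \textsc{Directed Hamiltonian Path} with $s$ having no incoming arc, build the star $I$ with center agent $a_c$, one vertex agent $a_v$ for each $v\in V$, one arc agent $a_e$ for each $e\in A$, the initial endowment $\sigma_0$ assigning every agent its eponymous object, and the value functions specified there (the center is indifferent among all objects; each vertex agent values its own object $1$ and the object of every arc entering it $2$; $a_s$ additionally values $o_c$ at $2$; each arc agent values its own object $1$ and the object of its tail vertex $2$; all remaining values are $0$). This construction produces an instance with $|V|+|A|+1$ agents and objects and is clearly computable in polynomial time, so the reduction is polynomial.

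Then correctness is precisely the content of Lemma~\ref{resultf}: the instance $I$ admits a reachable assignment with utilitarian social welfare at least $3|V|+|A|-1$ if and only if $D$ has a directed Hamiltonian path starting from $s$. Setting the threshold $W=3|V|+|A|-1$ therefore completes the reduction, and since \textsc{Directed Hamiltonian Path} with a prescribed source is NP-complete, Theorem~\ref{the_sv_star} follows.

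The genuinely hard part is Lemma~\ref{resultf} itself, which I am entitled to assume. Its nontrivial direction shows that, because every feasible swap in a star must involve the center agent, any reachable sequence achieving welfare at least $3|V|+|A|-1$ is encoded by a sequence of leaf agents that alternates between vertex agents and arc agents and corresponds to a walk in $D$ starting at $s$ (properties (P3), (P8), (P9)), that each swap raises the total value by at most $1$, and that the only way to accumulate the required surplus $2|V|-1$ is to traverse a Hamiltonian path from $s$; the subtle point is ruling out a vertex agent being used twice, which relies on (P2) and a careful accounting of the final two swaps. If I were to prove the lemma from scratch rather than cite it, that repeated-vertex case analysis, together with verifying that the forward direction's prescribed swap order $(a_{v_1},a_{e_1},a_{v_2},a_{e_2},\dots,a_{e_{|V|-1}},a_{v_{|V|}})$ is feasible at every step, is where the real work lies.
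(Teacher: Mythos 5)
Your proposal is correct and follows essentially the same route as the paper: the paper derives Theorem~\ref{the_sv_star} directly from the star construction together with Lemma~\ref{resultf}, exactly as you do, with your added remarks about the decision-version threshold $W=3|V|+|A|-1$ and the NP-completeness of \textsc{Directed Hamiltonian Path} with a prescribed source being standard bookkeeping the paper leaves implicit.
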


\section{Conclusion and Discussion}\label{sec-con}
In this paper, we investigate some problems about how to obtain certain assignments from an initial endowment via a sequence of rational trades between two agents under two different preferences: ties are allowed (weak preferences) or not (strict preferences).
We assume that the agents are embedded in a social network and only neighbors are allowed to exchange their objects. This is a reasonable assumption since it well simulates the situation in real life. To understand the computational complexity of the problem, we consider different network structures. See Table~\ref{table1} in the introduction for a survey of the results.

For \textsc{Object Reachability}, now we know it can be solved in polynomial time in paths. Recently, it was shown in \cite{DBLP:journals/corr/abs-1905-04219} that it can also be solved in polynomial time in cycles, which implies the problem is polynomial-time solvable in graphs with maximum degree 2.
Will the problem become NP-hard in graphs with maximum degree 3? How about the computational complexity of the problem in bounded-degree trees?
Finding more social structures under which the problem is polynomial-time solvable will be helpful for us to understand the nature of the problem.

In this paper, we regard the preferences of agents as independent. In real life, the preferences of agents may be similar, i.e., some objects are preferred by most agents.
Will the problem become easier if the preferences have some properties like single peakedness or single crossingness?

In our model, we only consider trades between two agents. Trades among three or more agents also exist, such as in kidney exchange, a trade among three agents is common and
a trade among nine agents (a 9-cycle kidney
exchange) was performed successfully in 2015 at two San Francisco hospitals~\cite{DBLP:journals/jco/Mak-Hau17}.
General trades among an arbitrary number of agents have been considered in \cite{DBLP:conf/atal/DamammeBCM15}. It would be
 interesting to further reveal more properties of the model under general trades.
In kidney exchange, each agent is allowed to participate in at most one trade. In our model, an agent can participate in an arbitrary number of trades.
Another direction for further study is to consider the model with an upper bound on the number of trades for each agent.
It has been proved in~\cite{DBLP:conf/sagt/SaffidineW18} that \textsc{Object Reachability} is NP-hard in graphs with maximum degree 4 even if the maximum number of swaps that each agent can participate is 2.

We believe it is worthy to further study more object allocation models under exchange operations, especially for problems coming from real-life applications.

\section*{Acknowledgements}
This work was supported by the National Natural Science Foundation of China,
under grants 61972070 and 61772115.

\end{document}